\journal{Information and Computation}
\renewcommand{\epsilon}{\varepsilon}
\newcommand{\myquot}[1]{``#1''}
\newcommand{\set}[1]{{\{#1\}}}
\newcommand{\Set}[1]{{\left\{#1\right\}}}
\newcommand{\tuple}[1]{{\langle#1\rangle}}
\renewcommand{\models}{\vDash}
\newcommand{\bigo}{\mathcal{O}}
\newcommand{\pow}[1]{2^{#1}}
\newcommand{\bound}{b}
\newcommand{\card}[1]{\left| {#1} \right|}
\newcommand{\nats}{\mathbb{N}}
\newcommand{\bool}{\mathbb{B}}
\newcommand{\strat}[2]{(2^{#1})^* \rightarrow 2^{#2}}
\newcommand{\proj}{\mathrm{proj}}
\newcommand{\wide}{\mathrm{wide}}
\newcommand{\fun}[2]{#1 \rightarrow #2}
\newcommand{\tsys}{\mathcal{S}} % Transition system
\newcommand{\ubuchi}{\mathcal{U}}
\newcommand{\rungraph}{\mathcal{G}}
\newcommand{\nbw}{\mathcal{N}}
\newcommand{\ucw}{\mathcal{U}}
\newcommand{\uct}{\mathcal{U}_S}
\newcommand{\bwin}{B}
\newcommand{\Bwin}{\mathcal{B}}
\newcommand{\cobwin}{\reflectbox{$B$}}
\newcommand{\true}{\mathbf{tt}}
\newcommand{\false}{\mathbf{ff}}
\renewcommand{\implies}{\mathbin{\rightarrow}}
\DeclareMathOperator\F{\mathbf{F}}
\DeclareMathOperator\G{\mathbf{G}}
\DeclareMathOperator\GF{\mathbf{GF}}
\DeclareMathOperator\FG{\mathbf{FG}}
\DeclareMathOperator\GFp{\mathbf{GF_P}}
\DeclareMathOperator\Fp{\mathbf{F_P}}
\newcommand{\U}{\mathbin{\mathbf{U}}}
\DeclareMathOperator\X{\mathbf{X}}
\newcommand{\R}{\mathbin{\mathbf{R}}}
\newcommand{\var}{\mathrm{var}}
\newcommand{\varG}{\mathrm{var}_{\mathbf{G}}}
\newcommand{\varF}{\mathrm{var}_{\mathbf{F}}}
\newcommand{\cl}{\mathrm{cl}}
\newcommand{\Var}{\mathcal{V}}
\newcommand{\rel}{\mathit{rel}}
\newcommand{\alt}{\mathit{alt}}
\newcommand{\ap}{\mathrm{AP}} % atomic propositions
\newcommand{\sched}{\mathit{sched}} %scheduling variable
\newcommand{\Sched}{\mathit{Sched}}
\newcommand{\ltl}{\mathrm{LTL}}
\newcommand{\pltl}{\mathrm{PLTL}}
\newcommand{\prompt}{\mathrm{PROMPT}$\textendash$\ltl}
\newcommand{\pldl}{\mathrm{PLDL}}
\newcommand{\ldl}{\mathrm{LDL}}
\newcommand{\ldlt}{\mathrm{LDL_{cp}}}
\newcommand{\arch}{\mathcal{A}}
\newcommand{\penv}{{p_\mathit{env}}}
\newcommand{\pminus}{P^-}
\newcommand{\distprod}{\otimes}
\newcommand{\Distprod}{\bigotimes}
\newcommand{\bgraph}{G}   % Buchi graph
\newcommand{\cbgraph}{G}  % Colored Buchi graph
\newcommand{\nlogspace}{\textsc{NLogSpace}}
\newcommand{\pspace}{\textsc{PSpace}}
\newcommand{\twoexp}{\textsc{2ExpTime}}
\newcommand{\halfthinspace}{{\kern .08333em}}
\newcommand{\conc}{\,;}
\newcommand{\ddiamond}[1]{\langle\/ #1 \/\rangle\,}
\newcommand{\bbox}[1]{[\halfthinspace#1\halfthinspace]\,}
\newcommand{\ddiamondle}[2]{{\langle\/ #1 \/\rangle}_{\!\le #2}\,}
\newcommand{\bboxle}[2]{{[\halfthinspace#1\halfthinspace]}_{\le #2}\,}
\newcommand{\Rexp}{\mathcal{R}}
\newcommand{\vardiamond}{\mathrm{var}_{\Diamond}}
\newcommand{\varbox}{\mathrm{var}_{\Box}}
\newcommand{\ddiamondcp}[2]{{\langle\/ #1 \/\rangle}_{\!\mathit{cp}}^{#2}\,}
\newtheorem{lemma}{Lemma}
\newtheorem{definition}{Definition}
\newtheorem{theorem}{Theorem}
\newtheorem{corollary}{Corollary}
\theoremstyle{definition}
\newtheorem{example}{Example}
\newtheorem{remark}{Remark}
\begin{document}

\begin{frontmatter}

%% Title, authors and addresses

%% use the tnoteref command within \title for footnotes;
%% use the tnotetext command for theassociated footnote;
%% use the fnref command within \author or \address for footnotes;
%% use the fntext command for theassociated footnote;
%% use the corref command within \author for corresponding author footnotes;
%% use the cortext command for theassociated footnote;
%% use the ead command for the email address,
%% and the form \ead[url] for the home page:
%% \title{Title\tnoteref{label1}}
%% \tnotetext[label1]{}
%% \author{Name\corref{cor1}\fnref{label2}}
%% \ead{email address}
%% \ead[url]{home page}
%% \fntext[label2]{}
%% \cortext[cor1]{}
%% \address{Address\fnref{label3}}
%% \fntext[label3]{}

\title{Distributed Synthesis for\\ Parameterized Temporal Logics\tnoteref{thanks}}
\tnotetext[thanks]{Supported by the projects ASDPS (JA~2357/2--1) and TriCS (ZI~1516/1--1) of the German Research Foundation (DFG) and by the grant  OSARES (No.~683300) of the European Research Council (ERC).}

%% use optional labels to link authors explicitly to addresses:
%% \author[label1,label2]{}
%% \address[label1]{}
%% \address[label2]{}

\author{Swen Jacobs}
\ead{jacobs@react.uni-saarland.de}

\author{Leander Tentrup}
\ead{tentrup@react.uni-saarland.de}

\author{Martin Zimmermann}
\ead{zimmermann@react.uni-saarland.de}

\address{Reactive Systems Group, Saarland University, 66123 Saarbr{\"u}cken, Germany}

\begin{abstract}
  We consider the synthesis of distributed implementations for specifications in parameterized temporal logics such as $\prompt$, which extends $\ltl$ by temporal operators equipped with parameters that bound their scope.
  For single process synthesis, it is well-established that such parametric extensions do not increase worst-case complexities.
  For synchronous distributed systems, we show that, despite being more powerful, the realizability problem for $\prompt$ is not harder than its $\ltl$ counterpart.
  For asynchronous systems, we have to express scheduling assumptions and therefore consider an assume-guarantee synthesis problem. As asynchronous distributed synthesis is already undecidable for $\ltl$, we give a semi-decision procedure for the $\prompt$ assume-guarantee synthesis problem based on bounded synthesis. Finally, we show that our results extend to the stronger logics $\pltl$ and $\pldl$. 
\end{abstract}
\begin{keyword}
%% keywords here, in the form: keyword \sep keyword
distributed synthesis \sep distributed realizability \sep incomplete information \sep parametric linear temporal logic \sep parametric linear dynamic logic
%% PACS codes here, in the form: \PACS code \sep code

%% MSC codes here, in the form: \MSC code \sep code
%% or \MSC[2008] code \sep code (2000 is the default)

\end{keyword}

\end{frontmatter}

\bgroup
\renewcommand{\phi}{\varphi}

%===============================================================================
\section{Introduction}
%===============================================================================

Linear Temporal Logic~\cite{Pnueli77} ($\ltl$) is the most prominent specification language for reactive systems and the basis for industrial languages like ForSpec~\cite{Forspec02} and PSL~\cite{EisnerFismanPSL}. Its advantages include a compact variable-free syntax and intuitive semantics as well as the exponential compilation property, which explains its attractive algorithmic properties: every $\ltl$ formula can be translated into an equivalent B\"uchi automaton of exponential size~\cite{VardiWolper94}. This yields a polynomial space model checking algorithm and a doubly-exponential time algorithm for solving two-player games. Such games solve the monolithic $\ltl$ synthesis problem: given a specification, construct a correct-by-design implementation.

However, $\ltl$ lacks the ability to express timing constraints. For example, the request-response property~$\G(\mathit{req} \rightarrow \F \mathit{resp})$ requires that every request~$\mathit{req}$ is eventually responded to by a $\mathit{resp}$. It is satisfied even if the waiting times between requests and responses diverge, i.e., it is impossible to require that requests are granted within a fixed, but arbitrary, amount of time. While it is possible to encode an a-priori fixed bound for an eventually into $\ltl$, this requires prior knowledge of the system's granularity and incurs a blow-up when translated to automata, and is thus considered impractical.

To overcome this shortcoming of $\ltl$, Alur et al.\ introduced parametric $\ltl$~($\pltl$)~\cite{journals/tocl/AlurETP01}, which extends $\ltl$ with parameterized operators of the form $\F_{\le x}$ and $\G_{\le y}$, where $x$ and $y$ are variables. The formula~$\G(req \rightarrow \F_{\le x}\, resp)$ expresses  that every request is answered within an arbitrary, but fixed, number of steps~$\alpha(x)$. Here, $\alpha$ is a variable valuation, a mapping of variables to natural numbers. Typically, one is interested in whether a $\pltl$ formula is satisfied with respect to some variable valuation, e.g., model checking a transition system~$\tsys$ against a $\pltl$ specification~$\varphi$ amounts to determining whether there is an $\alpha$ such that every trace of $\tsys$ satisfies $\varphi$ with respect to $\alpha$. Alur et al.~\cite{journals/tocl/AlurETP01} showed that the $\pltl$ model checking problem is $\pspace$-complete. Due to monotonicity of the parameterized operators, one can assume that all variables $y$ in parameterized always operators $\G_{\le y}$ are mapped to zero, as variable valuations are quantified existentially in the problem statements. Dually, again due to monotonicity, one can assume that all variables $x$ in parameterized eventually operators $\F_{\le x}$ are mapped to the same value, namely the maximum of the bounds. Thus, in many cases the parameterized always operators and different variables for parameterized eventually operators are not necessary. 

Motivated by this, Kupferman et al.\ introduced $\prompt$~\cite{journals/fmsd/KupfermanPV09}, which can be seen as the fragment of $\pltl$ without the parameterized always operator and with a single bound~$k$ for the parameterized eventually operators. They proved that $\prompt$ model checking is $\pspace$-complete and solving $\prompt$ games is $\twoexp$-complete, i.e., not harder than $\ltl$ games. While the results of Alur et al.\ rely on involved pumping arguments, the results of Kupferman et al.\ are all based on the so-called alternating color technique, which basically allows to reduce $\prompt$ to $\ltl$.

Intuitively, one introduces a new proposition that is thought of as coloring traces of a system. Then, one replaces each parameterized eventually operator~$\F_{\le x}\varphi $ by an $\ltl$ formula requiring $\varphi$ to hold within at most one color change. If the distance between color changes is bounded from above, then satisfaction of the rewritten formula implies the existence of a bound~$k$ for the bounded eventually operators such that the original formula is satisfied with respect to $k$. Dually, if the distance between color changes is bounded from below, then the other implication holds: the original $\prompt$ formula implies the rewritten $\ltl$ formula. 

When applying this equivalence, one has to specify how the truth values for the new atomic proposition coloring the traces are determined. In a game setting (in particular in synthesis), the player who aims to satisfy the $\prompt$ formula determines these truth values and is required to change colors infinitely often. Then, a finite-state strategy automatically ensures an upper bound on the distance between color changes.

Later, the result on $\prompt$ games was extended to $\pltl$ games~\cite{journals/tcs/Zimmermann13}, relying on the monotonicity properties explained above and an application of the alternating color technique.
These results show that adding parameters to $\ltl$ does not increase the asymptotic complexity of the model checking and the game-solving problem, which is still true for even more expressive logics like Parametric Linear Dynamic Logic ($\pldl$)~\cite{journals/iandc/FaymonvilleZ17} and $\pltl$ and $\pldl$ with costs~\cite{Zimmermann2016}. The former logic is an extension of $\pltl$ with the full expressiveness of the $\omega$-regular languages. The latter logics are evaluated in weighted systems and generalize $\pltl$ and $\pldl$ by bounding the parameterized operators in the accumulated weight instead of  bounding them in time. 

The synthesis problems mentioned above assume a setting of complete information, i.e., every part of the system has a complete view on the system as a whole. However, this setting is unrealistic in distributed systems.
Based on this observation, \emph{distributed synthesis} is defined as the problem of synthesizing multiple components with incomplete information.
Since there are specifications that are not implementable, one differentiates synthesis from the corresponding decision problem, i.e., the \emph{realizability} problem of a formal specification.
We focus on the latter, but note that typically algorithms for the realizability problem also solve the synthesis problem, as they rely on constructing implementations to prove realizability. This also holds in our work here.

The realizability problem for distributed systems dates back to work of Pnueli and Rosner in the early nineties~\cite{conf/focs/PnueliR90}.
They showed that the realizability problem for LTL becomes undecidable already for the simple architecture of two processes with pairwise different inputs.
In subsequent work, it was shown that certain classes of architectures, like pipelines and rings, can still be synthesized automatically~\cite{conf/lics/KupfermanV01, DBLP:conf/fsttcs/MohalikW03}.
Later, a complete characterization of the architectures for which the realizability problem is decidable was given by Finkbeiner and Schewe by the \emph{information fork} criterion~\cite{conf/lics/FinkbeinerS05}.
Intuitively, an architecture contains an information fork if there is an information flow from the environment to two different processes where the information to one process is hidden from the other and vice versa.
The distributed realizability problem is decidable exactly for those architectures without an information fork.
Beyond decidability results, semi-decision procedures like bounded synthesis~\cite{journals/sttt/FinkbeinerS13} give an architecture-independent synthesis method that is particularly well-suited for finding small-sized implementations. Bounded synthesis searches for finite-state implementations of a fixed size by encoding the problem as a constraint system in a decidable first-order theory. In case of a positive answer, the result is returned, otherwise the bound is increased. If there is an upper bound on the size of a finite-state implementation, then bounded synthesis is a complete decision procedure, as it can be stopped if the upper bound is reached without a positive answer. If there is no such upper bound, it is indeed a semi-decision procedure that finds an implementation if one exists, but runs forever otherwise. 

\subsection{Our Contributions} 
As mentioned above, one can add parameters to $\ltl$ for free: the complexity of the model checking problem and of solving infinite games does not increase. This raises the question whether this is also true for distributed realizability of parametric temporal logics. 
For synchronous systems, we can answer this question affirmatively. 
For every class of architectures with decidable $\ltl$ realizability, the $\prompt$ realizability problem is decidable, too.
To show this, we apply the alternating color technique~\cite{journals/fmsd/KupfermanPV09} to reduce the distributed realizability problem of $\prompt$ to the one of $\ltl$: one can again add parameterized operators to $\ltl$ for~free. To prove this result, we add a new process whose only task it is to determine a coloring with the fresh proposition. By ensuring that the new process does not introduce an information fork we obtain decidability for the same class of architectures as for $\ltl$.

For asynchronous systems, the environment is typically assumed to take over the responsibility for the scheduling decision~\cite{conf/lopstr/ScheweF06}.
Consequently, the resulting schedules may be unrealistic, e.g., one process may not be scheduled at all.
While \emph{fairness} assumptions such as ``every process is scheduled infinitely often'' solve this problem for $\ltl$ specifications, they are insufficient for $\prompt$: a fair scheduler can still delay process activations arbitrarily long and thereby prevent the system from satisfying its $\prompt$ specification for any bound~$k$. \emph{Bounded fair} scheduling, where every process is guaranteed to be scheduled in bounded intervals, overcomes this problem.
Since bounded fairness can be expressed in $\prompt$, the realizability problem in asynchronous architectures can be formulated more generally as an assume-guarantee realizability problem that consists of two $\prompt$ specifications. 
Hence, we have to deal with two colorings of the traces when applying the alternating color technique: One induces bounds on the parameterized eventually operators in the assumption, the other on the bounds on the parameterized eventually operators in the guarantee. 

We give a semi-decision procedure for this problem based on a new method for checking emptiness of two-colored B\"uchi graphs~\cite{journals/fmsd/KupfermanPV09} and an extension of bounded synthesis~\cite{journals/sttt/FinkbeinerS13}.
As asynchronous $\ltl$ realizability for architectures with more than one process is undecidable~\cite{conf/lopstr/ScheweF06}, the same result holds for $\prompt$ realizability. Thus, the semi-decision procedure is the best result one can hope for.
Decidability in the one process case, which holds for $\ltl$~\cite{conf/lopstr/ScheweF06}, is left open for $\prompt$.

Finally, we show that all these results also hold for $\pltl$ and $\pldl$, even stronger logics to which the alternating color technique and bounded synthesis are still applicable. 

This is a revised and extended version of a paper that appeared at \mbox{GandALF} 2016~\cite{JacobsTZ16}.

%-------------------------------------------------------------------------------
\subsection{Related Work}
%-------------------------------------------------------------------------------
There is a rich literature regarding the synthesis of distributed systems from global $\omega$-regular specifications~\cite{conf/focs/PnueliR90,conf/lics/KupfermanV01,DBLP:conf/fsttcs/MohalikW03,conf/lics/FinkbeinerS05,conf/fmcad/ChatterjeeHOP13,journals/ipl/Schewe14,conf/tacas/FinkbeinerT14,journals/corr/FinkbeinerT15}.
We are not aware of work that is concerned with the realizability of parameterized logics in this setting.
For local specifications, i.e., specifications that only relate the inputs and outputs of single processes, the realizability problem becomes decidable for a larger class of architectures~\cite{conf/icalp/MadhusudanT01}.
An extension of these results to context-free languages was given by Fridman and Puchala~\cite{journals/acta/FridmanP14}.
The realizability problem for asynchronous systems and LTL specifications is undecidable for architectures with more than one process to be synthesized~\cite{conf/lopstr/ScheweF06}.
Later, Gastin et al.\ showed decidability of a restricted specification language and certain types of architectures, i.e., well-connected~\cite{journals/fmsd/GastinSZ09} and acyclic~\cite{journals/tocl/GastinS13} ones.
Bounded synthesis~\cite{journals/sttt/FinkbeinerS13,conf/tacas/FaymonvilleFRT17} provides a flexible  synthesis framework that can be used in both the asynchronous and the synchronous setting, based on a semi-decision procedure.

\subsection{Structure} In Section~\ref{prompt}, we introduce $\prompt$ and the alternating color technique. In Section~\ref{sec:synchronous_distributed_synthesis}, we consider synchronous distributed synthesis for $\prompt$ and the asynchronous case in Section~\ref{sec:asynchronous_distributed_synthesis}. Then, in Section~\ref{sec:beyond_prompt_ltl}, we consider both problems for the more expressive logics~$\pltl$ and $\pldl$. We conclude in Section~\ref{sec:conc} with a discussion of future work.

%-------------------------------------------------------------------------------
\section{$\prompt$}\label{prompt}
%-------------------------------------------------------------------------------

Throughout this work, we fix a set~$\ap$ of atomic propositions. The formulas of $\prompt$
are given by the grammar
\begin{equation*}\phi \Coloneqq a \mid \neg a \mid \phi \wedge \phi \mid \phi \vee
\phi
  \mid \X \phi \mid \phi \U \phi \mid \phi \R \phi \mid
  \Fp \phi
  \enspace,\end{equation*}
where $a \in \ap$ is an atomic proposition, $\neg,\wedge,\vee$ are the usual Boolean operators, and $\X, \U, \R$ are the $\ltl$ operators next, until, and release. 
We use the derived operators
$\true \coloneqq a \vee \neg a$ and $\false \coloneqq a \wedge \neg a$ for some fixed $a \in \ap$,
and $\F \phi \coloneqq \true \U \phi$ and $\G \phi \coloneqq \false \R \phi$ as usual.
Furthermore, we use $\phi \implies \psi$ as a shorthand for $\neg \phi \vee \psi$
if the antecedent~$\phi$ is an $\Fp$-free formula (since in that case we can transform $\neg \varphi$ into negation normal form in the fragment above).
We define the size of $\phi$ to be the number of sub-fomulas of $\phi$.
The satisfaction relation for $\prompt$ is defined between an $\omega$-word~$w = w_0 w_1 w_2 \cdots \in \left( \pow{\ap} \right)^{ \omega }$, a
position~$n \in \nats$, a bound~$k$ for the prompt-eventually operators, and a~$\prompt$ formula. 
%For the $\ltl$ operators, it is defined as usual (and oblivious to $k$) and for the prompt-eventually we have
\begin{itemize}

	\item $(w,n,k)\models a$ if, and only if, $a \in w_n$.
	
	\item $(w,n,k)\models \neg a$ if, and only if, $a \notin w_n$.
	
	\item $(w,n,k)\models \phi_0\wedge \phi_1$ if, and only if, $(w,n,k)\models \phi_0$ and $(w,n,k)\models \phi_1$.
	
	\item $(w,n,k)\models \phi_0\vee \phi_1$ if, and only if, $(w,n,k)\models \phi_0$ or $(w,n,k)\models \phi_1$.
	
	\item $(w,n,k)\models \X\phi$ if, and only if, $(w,n+1,k)\models \phi$.
	
	\item $(w,n,k)\models \phi_0\U\phi_1$ if, and only if, there exists a $j \ge 0$ such that $(w,n+j,k)\models\phi_1$ and $(w,n+j',k)\models\phi_0$ for every $j'$ in the range~$0 \le j' < j$.
	
	\item $(w,n,k)\models \phi_0\R\phi_1$ if, and only if, for all $j \ge 0$: $(w,n+j,k)\models\phi_1$ or $(w,n+j',k)\models\phi_0$ for some $j'$ in the range~$0 \le j' < j$.

	\item $(w,n,k)\models\Fp\phi$ if, and only if, there exists a $j$ in the range
$0\le j \le k$ such that $(w,n+j,k)\models\phi$.

\end{itemize}

For the sake of brevity, we write $(w,k) \models \phi$ instead of
$(w,0,k) \models \phi$ and say that $w$ is a model of $\phi$ with
respect to $k$. Note that $(w, n, k) \models \phi$ implies $(w, n, k') \models \phi$ for every $k' \ge k$, i.e., satisfaction with respect to $k$ is an upward-closed property.

%-------------------------------------------------------------------------------
\paragraph[The Alternating Color Technique]{The Alternating Color Technique} \label{subsection_altcolor}
%-------------------------------------------------------------------------------

In this subsection, we recall the alternating color technique, which 
Kupferman et al.\ introduced to solve model checking, assume-guarantee model checking, and the
realizability problem for $\prompt$ specifications~\cite{journals/fmsd/KupfermanPV09}.

Let $r\notin \ap$ be a fixed fresh proposition. An
$\omega$-word~$w'\in\left(2^{\ap\cup\{r\}}\right)^{\omega}$ is an $r$-coloring of
$w\in\left(2^{\ap}\right)^{\omega}$ if $w_n'\cap \ap=w_n$, i.e., $w_n$ and $w_n'$
coincide on all propositions in $\ap$. The additional proposition~$r$ can be
thought of as the color of $w_n'$: we say
that \emph{the color changes} at position~$n$, if $n=0$ or if the truth values of $r$ in $w_{n-1}'$
and in $w_n'$ are not equal. In this situation, we say that $n$ is a \emph{change point}.
An \emph{$r$-block} is a maximal infix~$w_m' \cdots w_{n}'$ of $w'$ such that the color changes at $m$ and $n+1$, but not in between. 

Let $k \ge 1$. We say
that $w'$ is \emph{$k$-spaced} if the color changes infinitely often and each
$r$-block has length at least $k$, and we say that $w'$ is \emph{$k$-bounded}, if each
$r$-block has length at most $k$. Note that $k$-boundedness implies that the color changes
infinitely often.

Given a $\prompt$ formula~$\phi$, let $\rel_r(\phi)$ denote the formula obtained by inductively replacing every sub-formula~$\Fp\psi$ by
\begin{equation*}
(r\implies (r\U(\neg r\U \rel_r(\psi))))\wedge(\neg r\implies (\neg r\U(
r\U \rel_r(\psi))))\enspace,
\end{equation*}
which is only linearly larger than $\phi$ and requires every prompt eventually to be satisfied within at most one color change (not counting the position where $\psi$ holds). 
Furthermore, the formula
$\alt_r = \GF r\wedge \GF\neg r$ is satisfied if the colors change infinitely
often. Finally, we define the $\ltl$ formula~$c_r(\phi) = \rel_r(\phi) \wedge \alt_r$.
Kupferman et al.\ showed that $\phi$ and $c_r(\phi)$ are in some sense equivalent on $\omega$-words
which are bounded and spaced. 

\begin{lemma}[Lemma~2.1 of \cite{journals/fmsd/KupfermanPV09}]
\label{lemma_alternatingcolor}
Let $\phi$ be a $\prompt$ formula, and let $w \in \left( \pow{\ap} \right)^{ \omega }$.
\begin{enumerate}
  \item \label{lemma_alternatingcolor_pltltoltl}
If $(w,k)\models \phi$, then $w' \models c_r(\phi)$ for every $k$-spaced $r$-coloring~$w'$ of~$w$.
  
  \item \label{lemma_alternatingcolor_ltltopltl}
If $w'$ is a $k$-bounded $r$-coloring of $w$ with $w' \models c_r(\phi)$, then $(w,2k)\models\phi$.
\end{enumerate}
\end{lemma}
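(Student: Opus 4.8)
The plan is to prove both implications by structural induction on the $\prompt$ formula $\phi$, strengthening the statement to a claim about arbitrary positions $n$ so that the induction goes through under the temporal operators. Concretely, for part~\ref{lemma_alternatingcolor_pltltoltl} I would show: if $w'$ is a $k$-spaced $r$-coloring of $w$ and $(w,n,k)\models\phi$, then $w',n\models\rel_r(\phi)$; and for part~\ref{lemma_alternatingcolor_ltltopltl}: if $w'$ is a $k$-bounded $r$-coloring of $w$ and $w',n\models\rel_r(\phi)$, then $(w,n,2k)\models\phi$. Since $w'$ agrees with $w$ on $\ap$, the conjunct $\alt_r$ in $c_r(\phi)$ is handled separately: in part~1 it holds because $k$-spacing requires infinitely many color changes, and in part~2 we simply are given $w'\models c_r(\phi)$, which already supplies $\alt_r$, so the work is entirely about $\rel_r(\phi)$.

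The Boolean and the genuine $\ltl$ cases ($a$, $\neg a$, $\wedge$, $\vee$, $\X$, $\U$, $\R$) are routine: $\rel_r$ leaves these untouched, the semantics at a position depend only on the $\ap$-labels (which coincide in $w$ and $w'$), and the claim follows directly from the induction hypothesis applied to the relevant subformulas and positions. The entire content of the lemma lives in the single case $\phi=\Fp\psi$, where $\rel_r(\phi)$ is the disjunction-of-cases formula
\begin{equation*}
(r\implies (r\U(\neg r\U \rel_r(\psi))))\wedge(\neg r\implies (\neg r\U( r\U \rel_r(\psi))))\enspace.
\end{equation*}
For part~1, assuming $(w,n,k)\models\Fp\psi$, there is some $j$ with $0\le j\le k$ and $(w,n+j,k)\models\psi$, hence by induction $w',n+j\models\rel_r(\psi)$. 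I would argue that because each $r$-block has length at least $k$, the interval $[n,n+j]$ of length at most $k$ can straddle at most one change point, so whatever the color at position $n$, the witness at $n+j$ is reached within at most one color change; checking the two symmetric conjuncts (the one for $r$ true at $n$ and the one for $r$ false at $n$) against this bounded-straddle observation gives $w',n\models\rel_r(\Fp\psi)$. For part~2, assuming $w',n\models\rel_r(\Fp\psi)$, the formula guarantees a position $n+j$ with $w',n+j\models\rel_r(\psi)$ reached within at most one color change from $n$; since $w'$ is $k$-bounded, one full block plus the partial block before reaching the witness spans at most $2k$ positions, so $j\le 2k$, and the induction hypothesis yields $(w,n+j,2k)\models\psi$, whence $(w,n,2k)\models\Fp\psi$.

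The main obstacle is the bookkeeping in the $\Fp$ case: one must carefully verify that the nested-until formula exactly captures \myquot{within at most one color change} and then convert that combinatorial statement into the length bounds $k$ (spacing, part~1) and $2k$ (boundedness, part~2). The delicate points are the off-by-one issues around change points — in particular that the phrase \myquot{not counting the position where $\psi$ holds} matches the semantics of the nested-until, and that position $0$ is by convention a change point, which must be compatible with the $k$-spacing/$k$-boundedness definitions applied at a general position $n$. Once the one-color-change characterization is pinned down, both length estimates follow from the block-length bounds, and the remaining cases close immediately by induction.
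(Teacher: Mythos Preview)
The paper does not give its own proof of this lemma: it is stated as Lemma~2.1 of~\cite{journals/fmsd/KupfermanPV09} and simply quoted, so there is nothing in the present paper to compare your argument against. That said, your plan is exactly the standard proof (the one in the cited source): a structural induction on $\phi$, strengthened to arbitrary positions, with all $\ltl$ connectives going through verbatim because $\rel_r$ does not touch them and $w,w'$ agree on $\ap$, and the only contentful case being $\Fp\psi$. Your handling of that case is correct; the key combinatorial facts are precisely the ones you isolate: for part~\ref{lemma_alternatingcolor_pltltoltl}, two change points in $[n+1,n+j-1]$ would force a block of length~$\ge k$ inside an interval of length~$\le j-2<k$, so at most one change occurs before the witness and the nested until is satisfied; for part~\ref{lemma_alternatingcolor_ltltopltl}, the nested until forces the prefix $[n,n+j-1]$ to lie in at most two consecutive $r$-blocks, each of length~$\le k$, giving $j\le 2k$. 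Your remark about $\alt_r$ is also right: it follows from $k$-spacing in part~\ref{lemma_alternatingcolor_pltltoltl} and is not actually needed in part~\ref{lemma_alternatingcolor_ltltopltl} (it is already implied by $k$-boundedness). The only place to be careful when writing this out is the off-by-one bookkeeping you already flag, in particular that the inner until may be discharged immediately at the witness position regardless of its color, which is what makes the ``not counting the position where $\psi$ holds'' reading match the semantics.
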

Whenever possible, we drop the subscript~$r$ for the sake of readability, if $r$ is clear from context. However, when we consider asynchronous systems in Section~\ref{sec:asynchronous_distributed_synthesis}, we need to relativize two formulas with different colors, which necessitates the introduction of the subscripts.

%===============================================================================
\section{Synchronous Distributed Synthesis} \label{sec:synchronous_distributed_synthesis}
%===============================================================================

$\prompt$ specifications can give guarantees that $\ltl$ cannot, for example by asserting not only that requests to a system are answered \emph{eventually}, but also that there is an \emph{upper bound} on the reaction time.
This is especially important in distributed systems, since such timing constraints become more difficult to implement because of information flows between the various parts of the system.

Consider, for example, a distributed computation system, where
a central server gets \emph{important} and \emph{unimportant} tasks, and can forward tasks to a number of clients. 
A client can either enqueue the task, which means that it will be processed \emph{eventually}, or clear the client-side queue and process the task immediately. The latter operation is very costly (we have to remember the open tasks as they still need to be completed), but guarantees an upper bound on the completion time.
While in $\ltl$ we can only specify that all incoming tasks are processed eventually, in $\prompt$ we can specify that the answer time to important tasks is bounded by the formula
$\G (\mathit{important\text{-}task} \rightarrow \Fp \mathit{finished\text{-}task})$.\footnote{A similar constraint could be simulated in $\ltl$ by writing that on every important incoming task, the worker queues are cleared. This, however, removes implementation freedom and requires the developer to determine how to implement the feature, instead of letting the synthesis algorithm decide.}

Let us now formalize the distributed realizability problem.
Let $X$ and $Y$ be finite and pairwise disjoint sets of variables.
A \emph{valuation} of $X$ is a subset of $X$; thus, the set of all valuations of $X$ is $\pow{X}$.
For $w = w_0 w_1 w_2 \cdots \in (\pow{X})^\omega$ and $w'=w'_0 w'_1 w'_2 \cdots \in (\pow{Y})^\omega$, let $w \cup w' = (w_0 \cup w'_0) (w_1 \cup w'_1) (w_2 \cup w'_2) \cdots \in (\pow{X \cup Y})^\omega$. 

%-------------------------------------------------------------------------------
\paragraph{Strategies}
%-------------------------------------------------------------------------------

A \emph{strategy} $f \colon \strat{X}{Y}$ maps a history of valuations of $X$ to a valuation of $Y$.
The behavior of a strategy $f\colon \strat{X}{Y}$ is characterized by an infinite tree that branches by the valuations of $X$ and whose nodes $w \in (\pow{X})^*$ are labeled with the strategic choice $f(w)$.
For an infinite word $w = w_0 w_1 w_2 \cdots \in (\pow{X})^\omega$, the corresponding labeled path is defined as $(f(\epsilon) \cup w_0)(f(w_0) \cup w_1)(f(w_0 w_1) \cup w_2)\cdots \in (2^{X \cup Y})^\omega$.
We lift the set containment operator $\in$ to the containment of a labeled path $w = w_0 w_1 w_2 \cdots \in (2^{X \cup Y})^\omega$ in a strategy tree induced by $f \colon \strat{X}{Y}$, i.e., $w \in f$ if, and only if, $f(\epsilon) = w_0 \cap Y$ and $f((w_0 \cap X) \cdots (w_i \cap X)) = w_{i+1} \cap Y$ for all $i \geq 0$.

A \emph{$\pow{Y}$-labeled $\pow{X}$-transition system} $\tsys$ is a tuple $\tuple{S,s_0,\Delta,l}$ where
$S$ is a finite set of states,
$s_0 \in S$ is the designated initial state,
$\Delta \colon \fun{S \times \pow{X}}{S}$ is the transition function, and
$l \colon \fun{S}{\pow{Y}}$ is the state-labeling.
We generalize the transition function to sequences over $\pow{X}$ by defining $\Delta^* \colon \fun{(\pow{X})^*}{S}$ recursively as $\Delta^*(\epsilon) = s_0$ and $\Delta^*(w_0 \cdots w_{n-1} w_n) = \Delta(\Delta^*(w_0 \cdots w_{n-1}), w_n)$ for $w_0 \cdots w_{n-1} w_n \in (\pow{X})^+$.
A transition system $\tsys$ \emph{generates} the strategy $f$ if $f(w) = l(\Delta^*(w))$ for every $w \in (\pow{X})^*$.
A strategy $f$ is called \emph{finite-state} if there exists a transition system that generates $f$.

To reason about distributed systems, we have to combine strategies with different inputs, which we call the distributed product. To this end, we have to introduce widenings of strategies, which intuitively enlarge their domains with new atomic propositions that are ignored. Also, we need projections, which remove outputs from strategies.

In the following, we formally introduce these concepts.
A visualization is given in Fig.~\ref{fig:distributed_product}.
\begin{definition}[Distributed Product]
  Let $X, X', Y$, and $Y'$ be finite sets such that $Y$ and $Y'$ are disjoint.
  Further, let $f\colon \strat{X}{Y}$ and $f'\colon \strat{X}{Y'}$ be two strategies with the same domain but different co-domains $\pow{Y}$ and $\pow{Y'}$.
  \begin{itemize}
    \item The \emph{product} $f \times f' \colon \strat{X}{Y \cup Y'}$ of $f$ and $f'$ is defined as $(f \times f')(w) = f(w) \cup f'(w)$ for every $w \in (\pow{X})^*$.
    \item The $\pow{X}$-projection of a sequence $w_0 \cdots w_n \in (2^{X \cup X'})^*$ is $\proj_{\pow{X}}(w_0 \cdots w_n) = (w_0 \cap X) \cdots (w_n \cap X) \in (\pow{X})^*$. 
    \item The $\pow{X'}$-widening of $f $ is defined as $\wide_{\pow{X'}}(f) \colon \strat{X \cup X'}{Y}$ with $\wide_{\pow{X'}}(f)(w) = f(\proj_{\pow{X}}(w))$ for $w \in (2^{X \cup X'})^*$. 
    \item Given some $g\colon \strat{X'}{Y'}$, the \emph{distributed product} $f \distprod g \colon \strat{X \cup X'}{Y \cup Y'}$ is defined as the product $\wide_{2^{X' \setminus X}}(f) \times \wide_{2^{X \setminus X'}}(g)$.
  \end{itemize} 
\end{definition}
Analogously, for transition systems $\tsys = \tuple{S,s_0,\Delta,l}$ and $\tsys' = \tuple{S',s_0',\Delta',l'}$ the distributed product, written $\tsys \distprod \tsys'$, is defined as the transition system $\tuple{S \times S', (s_0,s_0'), \Delta^\distprod, l^\distprod}$, where $\Delta^\distprod((s,s'),w) = (s'',s''')$ if, and only if, $\Delta(s,w) = s''$ and $\Delta'(s',w) = s'''$, and $l^\distprod(s,s') = l(s) \cup l'(s')$.
\begin{remark}
  The strategy generated by $\tsys \distprod \tsys'$ is equal to the distributed product of the strategies generated by $\tsys$ and $\tsys'$.
\end{remark}

\begin{figure}[t]
  \centering
  \subfigure[Strategy $f$]{
      \begin{tikzpicture}[auto,node distance=1cm,semithick,scale=0.75,transform shape]
      
  % Define Styles
  \tikzstyle{treenode} = [fill,circle,inner sep=0,minimum size=5pt]
  
  \node[red](root) {$\set{y}$} [level distance=25pt]
    child
    {
      node[] {$\emptyset$}
      child
      {
        node[red] {$\set{y}$}
        child
        {
          node[] {$\emptyset$}
        }
      }
    }
    ;
    
  % Spacing
  \node[left=30pt of root] {};
  \node[right=30pt of root] {};
  
  \node[below=70pt of root] {$\vdots$};
  
  \end{tikzpicture}
    \label{fig:example_strategy_linear}
  }\hfill
  \subfigure[Strategy $g$]{
      \begin{tikzpicture}[auto,node distance=1cm,semithick,scale=0.75,transform shape]
      
  % Define Styles
  \tikzstyle{treenode} = [fill,circle,inner sep=0,minimum size=5pt]
  \tikzstyle{treenodenfill} = [draw,circle,inner sep=0,minimum size=5pt]
  \tikzstyle{usepath} = [line width=1.2pt]
  
  \node[](root) {$\emptyset$} [level distance=25pt,
    level 1/.style={sibling distance=80pt},
    level 2/.style={sibling distance=40pt},
    level 3/.style={sibling distance=20pt}]
  
    child
    {
      node[blue] {$\set{x}$} 
      child
      {
        node[blue] {$\set{x}$} 
        child {node[blue] {$\set{x}$}}
        child {node[] {$\emptyset$}}
      }
      child
      {
        node[] {$\emptyset$}
        child {node[blue] {$\set{x}$}}
        child {node[] {$\emptyset$}}
      }
      edge from parent node[above left] {$a$}
    }
    child
    {
      node[] {$\emptyset$}
      child
      {
        node[blue] {$\set{x}$}
        child {node[blue] {$\set{x}$}}
        child {node[] {$\emptyset$}}
      }
      child
      {
        node[] {$\emptyset$}
        child {node[blue] {$\set{x}$}}
        child {node[] {$\emptyset$}}
      }
      edge from parent node[above right] {$\neg a$}
    }
    ;
  
  \node[below=70pt of root] {$\vdots$};
  
  \end{tikzpicture}
    \label{fig:example_strategy_branching}
  }\hfill
  \subfigure[Distributed product $f \otimes g$]{
      \begin{tikzpicture}[auto,node distance=1cm,semithick,scale=0.75,transform shape]
      
  % Define Styles
  \tikzstyle{treenode} = [fill,circle,inner sep=0,minimum size=5pt]
  \tikzstyle{treenodenfill} = [draw,circle,inner sep=0,minimum size=5pt]
  \tikzstyle{usepath} = [line width=1.2pt]
  
  \node[red](root) {$\set{y}$} [level distance=25pt,
    level 1/.style={sibling distance=80pt},
    level 2/.style={sibling distance=40pt},
    level 3/.style={sibling distance=20pt}]
  
    child
    {
      node[blue] {$\set{x}$} 
      child
      {
        node[] {$\set{{\color{red} y},{\color{blue} x}}$}
        child {node[blue] {$\set{x}$}}
        child {node[] {$\emptyset$}}
      }
      child
      {
        node[red] {$\set{y}$}
        child {node[blue] {$\set{x}$}}
        child {node[] {$\emptyset$}}
      }
      edge from parent node[above left] {$a$}
    }
    child
    {
      node[] {$\emptyset$}
      child
      {
        node[] {$\set{{\color{red} y},{\color{blue} x}}$}
        child {node[blue] {$\set{x}$}}
        child {node[] {$\emptyset$}}
      }
      child
      {
        node[red] {$\set{y}$}
        child {node[blue] {$\set{x}$}}
        child {node[] {$\emptyset$}}
      }
      edge from parent node[above right] {$\neg a$}
    }
    ;
  
  \node[below=70pt of root] {$\vdots$};
  
  \end{tikzpicture}
    \label{fig:example_distributed_product}
  }
  \caption{Visualization of strategies $f \colon \strat{\emptyset}{\set{y}}$ and $g \colon \strat{\set{a}}{\set{x}}$ as infinite trees is shown in \subref{fig:example_strategy_linear} and \subref{fig:example_strategy_branching}, respectively. The distributed product $f \otimes g$ is equal to the product of $g$ and the $2^\set{a}$-widening of $f$ and is depicted in \subref{fig:example_distributed_product}.}
  \label{fig:distributed_product}
\end{figure}

We define the satisfaction of a $\prompt$ formula $\varphi$ (over propositions $X \cup Y$) on strategy~$f$ with respect to the bound~$k$, written $(f,k) \models \phi$ for short, as $(w,k) \models \varphi$ for all paths $w \in f$.

%-------------------------------------------------------------------------------
\paragraph{Distributed Systems}
%-------------------------------------------------------------------------------

We characterize a distributed system as a set of processes with a fixed communication topology, called an \emph{architecture} in the following.
Recall that $\ap$ is the set of atomic propositions used to build formulas.
An \emph{architecture} $\arch$ is a tuple $\tuple{P,\penv,\{I_p\}_{p \in P}, \{O_p\}_{p \in P}}$, where $P$ is the finite set of processes and $p_\mathit{env} \in P$ is the distinct environment process. We denote by $\pminus = P \setminus \set{\penv}$ the set of system 
processes.

Given a process $p \in P$, the input and output signals of this process are $I_p \subseteq \ap$ and $O_p \subseteq \ap$, respectively, where we assume~$I_{\penv} = \emptyset$.
For $P' \subseteq P$, let $I_{P'} = \bigcup_{p \in P'} I_p$ and $O_{P'} = \bigcup_{p \in P'} O_p$. While processes may share the same inputs (in case of broadcasting), the outputs of processes must be pairwise disjoint, i.e., for all $p \neq p' \in P$ it holds that $O_p \cap O_{p'} = \emptyset$. Finally, we require that every input of a process originates from some other process, i.e., $I_P \subseteq O_P$.

An \emph{implementation} of a process $p \in \pminus$ is a strategy $f_p \colon \strat{I_p}{O_p}$ mapping finite input sequences to a valuation of the output variables.

\begin{example}

Figure~\ref{fig:architectures} shows example architectures $\arch_1$ and $\arch_2$:
\begin{itemize}
	\item $\arch_1 =  \langle  \set{\penv, p_1, p_2},\penv, \set{ I_\penv, I_{p_1}, I_{p_2}}, \set{ O_\penv, O_{p_1}, O_{p_2}}\rangle $ with
\begin{itemize}
	\item $I_\penv = \emptyset, I_{p_1} = \set{a}, I_{p_2} = \set{b} $ and
	\item $O_\penv = \set{a, b}, O_{p_1} = \set{c}, O_{p_2} = \set{d}$.
\end{itemize}
	\item $ \arch_2 = \langle 
    \set{\penv, p_1, p_2},
    \penv,
    \set{ I_\penv, I_{p_1}, I_{p_2} },
    \set{ O_\penv  , O_{p_1} , O_{p_2} }
  \rangle$ with 
  
  \begin{itemize}
  \item $ I_\penv = \emptyset, I_{p_1} = \set{a}, I_{p_2} = \set{b}$ and 
  \item $O_\penv = \set{a}, O_{p_1} = \set{b}, O_{p_2} = \set{c}$.
  \end{itemize}

\end{itemize}
%\begin{align*}
%  \arch_1 ={}  \langle &
%    \set{\penv, p_1, p_2},
%    \penv,
%    \set{ I_\penv = \emptyset, I_{p_1} = \set{a}, I_{p_2} = \set{b} },\\
%    &\set{ O_\penv = \set{a, b}, O_{p_1} = \set{c}, O_{p_2} = \set{d} }
%  \rangle, \text{ and} \\
%  \arch_2 ={} \langle &
%    \set{\penv, p_1, p_2},
%    \penv,
%    \set{ I_\penv = \emptyset, I_{p_1} = \set{a}, I_{p_2} = \set{b} },\\
%    &\set{ O_\penv = \set{a}, O_{p_1} = \set{b}, O_{p_2} = \set{c} }
%  \rangle \enspace.
%\end{align*}
The architecture $\arch_1$ in Fig.~\ref{fig:independent_architecture} contains two system processes, $p_1$ and $p_2$, and the environment process $\penv$.
The processes $p_1$ and $p_2$ receive the inputs~$a$ and $b$, respectively, from the environment and output $c$ and $d$, respectively.
Hence, the environment can provide process $p_1$ with information that is hidden from $p_2$ and vice versa.
In contrast, architecture $\arch_2$, depicted in Fig.~\ref{fig:pipeline_architecture}, is a pipeline architecture where information from the environment can only propagate through the pipeline processes $p_1$ and $p_2$.
\end{example}
\begin{figure}[t]
\centering
\subfigure[$\arch_1$]{
  \begin{tikzpicture}[->,>=stealth',shorten >=1pt,auto,node distance=1cm,semithick,scale=0.8,transform shape]
  
  % Styles
  \tikzstyle{every state}=[shape=rectangle]
  \tikzstyle{envstate}=[shape=circle,scale=0.95]
  
  % States
  \node[state,envstate]           (env) {$p_\mathit{env}$};
  \node[state,above right=0 and 1 of env] (P0)  {$p_1$};
  \node[state,below right=0 and 1 of env] (P1)  {$p_2$};
  
  \path (P0.east) edge node  {$c$} +(right:0.75)
        (P1.east) edge node  {$d$} +(right:0.75)
        (env) edge node [pos=0.65]{$a$} (P0)
        (env) edge node [pos=0.2]{$b$} (P1)
        ;

	% labels

\end{tikzpicture}
  \label{fig:independent_architecture}
}\qquad\qquad\qquad%
\subfigure[$\arch_2$]{
  \begin{tikzpicture}[->,>=stealth',shorten >=1pt,auto,node distance=1cm,semithick,scale=0.8,transform shape]
  
  % Styles
  \tikzstyle{every state}=[shape=rectangle]
  \tikzstyle{envstate}=[shape=circle,scale=0.95]
  
  % States
  \node[state,envstate] (env)      {$p_\mathit{env}$};
  \node[state] (P0) [right=0.85 of env] {$p_1$};
  \node[state] (P1) [right=of P0]  {$p_2$};
  
  \path (env) edge node {$a$} (P0)
        (P0) edge node {$b$} (P1)
        (P1.east) edge node {$c$} +(right:0.75)
        ;
  
  \path[white] (P0.south) edge node {} +(down:0.75);

\end{tikzpicture}
  \label{fig:pipeline_architecture}
}
\caption[]{Examples for distributed architectures.}
\label{fig:architectures}
\end{figure}
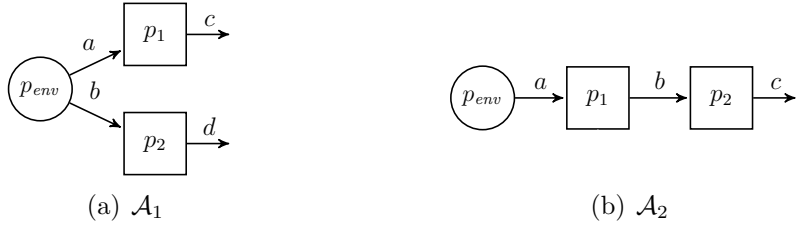

%-------------------------------------------------------------------------------
\paragraph{Distributed Realizability}
%-------------------------------------------------------------------------------

Let $\arch = \tuple{P,\penv,\{I_p\}_{p \in P}, \{O_p\}_{p \in P}}$ be an architecture.
The \emph{synchronous $\prompt$ realizability problem for $\arch$} is to decide, given a $\prompt$ formula $\varphi$, whether there exist a bound $k$ and a finite-state implementation $f_p$ for each process $p \in P^-$, such that the distributed product $\Distprod_{p \in P^-} f_p$ satisfies $\varphi$ with respect to $k$, i.e., $(\Distprod_{p \in P^-} f_p,k) \models \varphi$. In this case, we say that $\phi$ is realizable in $\arch$.
The synchronous $\ltl$ realizability problem is a special case of it, as $\ltl$ is a fragment of $\prompt$.

In the following, we show how to solve the synchronous $\prompt$ realizability problem.
In our reduction to synchronous $\ltl$ realizability, we introduce a new process that produces a coloring sequence needed for applying the alternating color technique~\cite{journals/fmsd/KupfermanPV09}.
Let $r \notin \ap$ be the fresh proposition introduced for the alternating color technique to relativize formulas and let $\arch = \tuple{P,\penv,\{I_p\}_{p \in P}, \{O_p\}_{p \in P}}$ be an architecture as above. We define the architecture $\arch^r$ as 
\[\tuple{P \cup \set{p_r},\penv,\{I_p\}_{p \in P} \cup \set{I_r},\{O_p\}_{p \in P} \cup \set{O_r}},\] where $I_r = \emptyset$ and $O_r = \set{r}$.
Intuitively, this describes an architecture where one additional process $p_r$ is responsible for providing sequences in $(\pow{\set{r}})^\omega$, i.e., a coloring by $r$. We show that $\varphi$ in $\arch$ and $c_r(\varphi)$  in $\arch^r$ are equi-realizable by applying the alternating color technique. As the processes are synchronized, the proof is similar to the one for the single-process case by Kupferman et al.~\cite{journals/fmsd/KupfermanPV09}.

\begin{theorem}
\label{thm_synchr_prompt2ltl}
  A $\prompt$ formula $\varphi$ is realizable in $\arch$ if, and only if, $c_r(\varphi)$ is realizable in $\arch^r$.
\end{theorem}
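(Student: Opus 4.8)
The plan is to prove equi-realizability by instantiating the alternating color technique (Lemma~\ref{lemma_alternatingcolor}) at the level of strategies, using two facts about the added process $p_r$: it has no inputs ($I_r = \emptyset$) and must be realized by a \emph{finite-state} strategy. Since the architecture is synchronous, the coloring produced by $p_r$ acts as a single global signal shared by all paths, so the argument closely follows the single-process proof of Kupferman et al.

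For the direction from left to right, assume $\varphi$ is realizable in $\arch$, i.e., there is a bound $k$ and finite-state implementations $f_p$ for $p \in \pminus$ with $(\Distprod_{p \in \pminus} f_p, k) \models \varphi$; by upward-closure of satisfaction in $k$ we may assume $k \ge 1$. I keep every $f_p$ (each remains a valid implementation in $\arch^r$, as $I_p$ and $O_p$ are unchanged and no original process reads $r$), and realize $p_r$ by the input-free finite-state strategy $g \colon \strat{\emptyset}{\set{r}}$ that outputs $r$ for $k$ consecutive steps, then $\neg r$ for $k$ steps, and repeats. In the product $F = (\Distprod_{p \in \pminus} f_p) \distprod g$, every path is some path $w$ of $\Distprod_{p \in \pminus} f_p$ colored by this one fixed sequence, which is $k$-spaced. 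Since $(w,k) \models \varphi$, part~\ref{lemma_alternatingcolor_pltltoltl} of Lemma~\ref{lemma_alternatingcolor} gives that the colored path satisfies $c_r(\varphi)$; hence $F$ realizes $c_r(\varphi)$ in $\arch^r$.

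For the converse, assume $c_r(\varphi)$ is realized in $\arch^r$ by finite-state implementations, among them $g$ for $p_r$. Because $I_r = \emptyset$, the strategy $g$ reads a single-letter alphabet, so it produces exactly one ultimately periodic color sequence, independent of the behavior of all other processes. The conjunct $\alt_r = \GF r \wedge \GF \neg r$ of $c_r(\varphi)$ forces infinitely many color changes on this sequence, and an ultimately periodic sequence with infinitely many changes has a maximal block length; thus the coloring is $k$-bounded for some $k$. The strategies $f_p$ ($p \in \pminus$) are valid implementations in $\arch$, and for every path $w$ of $\Distprod_{p \in \pminus} f_p$ the corresponding path of $F$ is a $k$-bounded $r$-coloring $w'$ of $w$ with $w' \models c_r(\varphi)$; by part~\ref{lemma_alternatingcolor_ltltopltl} of Lemma~\ref{lemma_alternatingcolor}, $(w, 2k) \models \varphi$. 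As this holds for all paths, $(\Distprod_{p \in \pminus} f_p, 2k) \models \varphi$, so $\varphi$ is realizable in $\arch$.

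I expect the main obstacle to be the converse direction, specifically extracting a \emph{uniform, finite} bound $k$ on the block lengths of the synthesized coloring. This hinges on the combination of $p_r$ being input-free (the coloring is one fixed sequence, not a path-dependent family), $g$ being finite-state (so that sequence is ultimately periodic), and $\alt_r$ ruling out an infinite terminal block; only together do these yield $k$-boundedness and hence the bound $2k$ for $\varphi$. A routine but necessary check along the way is that the $\ap$-projections of the paths of $F$ coincide exactly with the paths of $\Distprod_{p \in \pminus} f_p$, which holds because the original processes ignore $r$ and $g$ ignores everything else.
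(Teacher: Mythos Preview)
Your proposal is correct and follows essentially the same approach as the paper: both directions instantiate Lemma~\ref{lemma_alternatingcolor} at the level of strategies, with the forward direction constructing an explicit $k$-periodic coloring and the converse extracting a uniform bound from the fact that the input-free process~$p_r$ produces a single ultimately periodic sequence that must alternate infinitely often. The paper's argument for $k$-boundedness is slightly more explicit (it bounds $k$ by the number of states of the transition system generating~$f_r$, reasoning about the unique lasso run), whereas you invoke ultimate periodicity plus $\alt_r$ directly; these are the same observation.
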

\begin{proof}%
  Let $\arch = \tuple{P,\penv,\{I_p\}_{p \in P}, \{O_p\}_{p \in P}}$ be an architecture and $\varphi$ be a $\prompt$ formula.
  
  Assume that the $\prompt$ formula $\varphi$ is realizable in $\arch$.
  Then, there exist finite-state strategies $f_p$ for $p \in P^-$ and a bound $k$ satisfying the synchronous $\prompt$ realizability problem $\tuple{\arch,\varphi}$.
  For every $w \in \Distprod_{p \in P^-} f_p$, it holds that $(w,k) \models \varphi$.
  By item~\ref{lemma_alternatingcolor_pltltoltl} of Lemma~\ref{lemma_alternatingcolor}, it holds that every $k$-spaced $r$-coloring $w'$ of $w$ satisfies $c_r(\varphi)$.
  Let $f_r \colon (2^\emptyset)^* \rightarrow 2^\set{r}$ be a (finite-state) strategy that produces the $k$-spaced sequence $(\emptyset^k \set{r}^k)^\omega$.
  Then, the process implementations $\set{f_p}_{p \in P^-}$ together with $f_r$ are a solution to the synchronous $\ltl$  realizability problem $\tuple{\arch^r,c_r(\varphi)}$.

  Now, assume that the $\ltl$ formula $c_r(\varphi)$ is realizable in the architecture~$\arch^r$.
  Thus, there exist finite-state strategies $f_p$ for $p \in P^-$ and a finite-state strategy $f_r$ for process~$p_r$. Note that the strategy~$f_r \colon \strat{\emptyset}{\set{r}}$ has a unique output~$w_r \in (\pow{{\set{r}}})^\omega$, as it has no inputs. We claim that $w_r$ is $k$-bounded, where $k$ is the number of states of the transition system~$\tsys = \tuple{S,s_0,\Delta,l}$ generating $f_r$. To see this, note that $f_r$ has no inputs, i.e., each state of $\tsys$ has a unique successor in $\Delta$, and the unique run of $\tsys$ on $\emptyset^\omega$ ends up in a loop which is traversed ad infinitum. As the output~$w_r$ has infinitely many change points (since $c_r(\varphi)$ is realizable in $\arch^r$), the loop contains at least one state~$s$ labeled by $l(s) = \emptyset$ and at last one state~$s'$ with $l(s') = \set{r}$. Thus, the maximal length of a block of $w_r$ is bounded by the length of the loop, which in turn is bounded by the size of $\tsys$. 

  Hence, for every $w \in \Distprod_{p \in P^-} f_p$, the word $w_r \cup w$ is a $k$-bounded $r$-coloring of $w$ with $w_r \cup w \models \rel_r(\varphi)$.
  By item~\ref{lemma_alternatingcolor_ltltopltl} of Lemma~\ref{lemma_alternatingcolor}, for all such $w$ it holds that $(w,2k) \models \varphi$.
  Hence, $\set{f_p}_{p \in P^-}$ together with the bound~$2k$ is a solution to the synchronous $\prompt$ realizability problem.
\end{proof}

Theorem~\ref{thm_synchr_prompt2ltl} allows us to reduce the distributed realizability problem of $\prompt$ to the distributed realizability problem of $\ltl$ in a strategy-preserving manner as shown in the accompanying proof.
In particular, we are able to reuse semi-decision procedures for the latter, such as bounded synthesis~\cite{journals/sttt/FinkbeinerS13}, to effectively construct small solutions.

To conclude, we show that the newly introduced process $p_r$ also preserves the property whether the architecture has an \emph{information fork}~\cite{conf/lics/FinkbeinerS05}.
Formally, consider tuples $\tuple{P', V', p, p'}$, where $P'$ is a subset of the processes, $V'$ is a subset of the variables disjoint from $I_p \cup I_{p'}$, and $p,p' \in  P^- \setminus P'$ are two different processes.
Such a tuple is an information fork in $\arch$ if $P'$ together with the edges that are labeled with at least one variable from $V'$ forms a sub-graph of $\arch$ rooted in the environment and there exist two nodes $q, q' \in P'$ that have edges to $p, p'$, respectively, such that $O_\set{q,p} \nsubseteq I_{p'}$ and $O_\set{q',p'} \nsubseteq I_p$.
For example, the architecture in Fig.~\ref{fig:independent_architecture} contains the information fork $(\set{\penv}, \emptyset, p_1, p_2)$, while the pipeline architecture depicted in Fig.~\ref{fig:pipeline_architecture} has no information forks.

\begin{lemma}
  $\arch^r$ contains an information fork if, and only if, $\arch$ contains an information fork.
\end{lemma}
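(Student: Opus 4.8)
The plan is to prove both directions from a single structural observation: the new process $p_r$ is an \emph{isolated} node in the communication graph of $\arch^r$, and its output $r$ labels no edge. Indeed, since $I_r = \emptyset$, no edge enters $p_r$; and since $r$ is fresh and the input sets $\{I_p\}_{p \in P}$ are inherited verbatim from $\arch$ (so no original process consumes $r$), no edge leaves $p_r$ either. Hence the communication graphs of $\arch$ and $\arch^r$ differ only by the isolated vertex $p_r$, and adding or removing $r$ from a variable set never changes the set of edges that carry at least one of its variables.

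For the direction from $\arch$ to $\arch^r$, I would simply observe that any information fork $\tuple{P', V', p, p'}$ of $\arch$ remains one in $\arch^r$. The subset $P'$, the variable set $V'$, the induced subgraph rooted at $\penv$, the witnesses $q, q' \in P'$, and the two non-containment conditions $O_{\set{q,p}} \nsubseteq I_{p'}$ and $O_{\set{q',p'}} \nsubseteq I_p$ all refer only to processes, inputs, and outputs that $\arch^r$ inherits unchanged, so the same tuple witnesses an information fork in $\arch^r$.

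For the converse, let $\tuple{P', V', p, p'}$ be an information fork of $\arch^r$, and proceed in three steps. First, $p \neq p_r$ and $p' \neq p_r$, because the definition demands incoming edges from $q, q' \in P'$ into $p$ and $p'$, whereas $p_r$ has no incoming edge. Second, $p_r \notin P'$, because the subgraph induced by $P'$ is rooted at $\penv \neq p_r$ and $p_r$, having no incoming edge, is unreachable from $\penv$; thus $P' \subseteq P$ and consequently $q, q' \in P$ as well. Third, replacing $V'$ by $V' \setminus \set{r}$ leaves the inducing edge set unchanged (as $r$ labels no edge) and preserves disjointness from $I_p \cup I_{p'}$, so the resulting tuple is an information fork of $\arch$, every ingredient of which now refers only to the shared components of the two architectures. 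The only mildly delicate point, and the one I would pin down carefully, is the exact reading of ``rooted in the environment'': I must confirm that it forces membership of $P'$ to be reachable from $\penv$, which is what rules out the isolated $p_r$. Everything else is routine bookkeeping built on the single fact that $p_r$ is isolated, so I expect no genuine obstacle.
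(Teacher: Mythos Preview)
Your proposal is correct and follows essentially the same approach as the paper: show that $p_r$ cannot be one of the fork endpoints (no incoming edges), cannot lie in $P'$ (unreachable from $\penv$), and that the tuple then lives entirely in $\arch$. Your treatment of $V'$ is in fact slightly more careful than the paper's, which simply asserts $r \notin V'$; your replacement of $V'$ by $V' \setminus \set{r}$ is the cleaner way to handle it, and your reading of ``rooted in the environment'' matches the paper's.
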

\begin{proof}
  The \emph{if} direction follows immediately by construction: if $\tuple{P', V', p, p'}$ is an information fork in $\arch$, then it is an information fork in $\arch^r$ as well.
  Hence, assume $\tuple{P', V', p, p'}$ is an information fork in $\arch^r$.
  It holds that neither $p_r = p$ nor $p_r = p'$ since $p_r$ has no incoming edges.
  As $I_{p_r} = \emptyset$, $p_r$ cannot be in a sub-graph that is rooted in the environment, hence, $p_r \notin P'$ and $r \notin V'$.
  It follows that $\tuple{P', V', p, p'}$ is an information fork in $\arch$.
\end{proof}

Thus, we can use well-known results for the decidability of distributed realizability for $\ltl$ and weakly ordered architectures~\cite{conf/lics/FinkbeinerS05}, i.e., those without an information fork.

\begin{corollary}
\label{coro_promptsynthesis}
  Let $\arch$ be an architecture.
  The synchronous $\prompt$ realizability problem for $\arch$ is decidable if, and only if, $\arch$ is weakly ordered.
\end{corollary}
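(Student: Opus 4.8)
The plan is to pivot on the characterization of Finkbeiner and Schewe~\cite{conf/lics/FinkbeinerS05}, by which the distributed $\ltl$ realizability problem for an architecture is decidable if, and only if, the architecture is weakly ordered, and to transport this characterization across the reduction of Theorem~\ref{thm_synchr_prompt2ltl} using the preceding lemma that fork-freeness is preserved when passing from $\arch$ to $\arch^r$. I would prove the two directions of the biconditional separately.

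For the \emph{if} direction, assume $\arch$ is weakly ordered. By the preceding lemma, $\arch^r$ contains no information fork either and is thus weakly ordered, so $\ltl$ realizability for $\arch^r$ is decidable by~\cite{conf/lics/FinkbeinerS05}. Given a $\prompt$ formula $\varphi$, one effectively computes the $\ltl$ formula $c_r(\varphi)$ (the transformation $\rel_r$ and the conjunction with $\alt_r$ are purely syntactic) and decides whether $c_r(\varphi)$ is realizable in $\arch^r$. By Theorem~\ref{thm_synchr_prompt2ltl}, this answer is exactly the answer to whether $\varphi$ is realizable in $\arch$, so the $\prompt$ realizability problem for $\arch$ is decidable.

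For the \emph{only if} direction I would argue the contrapositive. The key observation is that $\ltl$ is a syntactic fragment of $\prompt$, and that for an $\Fp$-free formula $\psi$ the bound $k$ plays no role: $(\Distprod_{p \in \pminus} f_p, k) \models \psi$ either holds for all $k$ or for none, so realizability of $\psi$ read as a $\prompt$ formula in $\arch$ coincides with ordinary $\ltl$ realizability of $\psi$ in $\arch$. Hence any decision procedure for the $\prompt$ realizability problem for $\arch$, applied only to $\ltl$ inputs, is also a decision procedure for $\ltl$ realizability in $\arch$. Therefore, if $\arch$ were not weakly ordered yet the $\prompt$ problem were decidable, $\ltl$ realizability in $\arch$ would be decidable too, contradicting~\cite{conf/lics/FinkbeinerS05}. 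The only point requiring care is precisely this last observation: one must ensure that feeding $\ltl$ formulas to the $\prompt$ procedure really solves the genuine $\ltl$ problem rather than some easier relaxation, which is why the irrelevance of $k$ for $\Fp$-free formulas must be stated explicitly. Everything else follows mechanically from the reduction and the fork-preservation lemma.
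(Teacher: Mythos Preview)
Your proposal is correct and follows essentially the same route the paper takes: the corollary is immediate from Theorem~\ref{thm_synchr_prompt2ltl}, the fork-preservation lemma, and the Finkbeiner--Schewe characterization~\cite{conf/lics/FinkbeinerS05}, together with the earlier remark that the synchronous $\ltl$ realizability problem is a special case of the $\prompt$ one. You spell out the two directions (and in particular the irrelevance of $k$ for $\Fp$-free formulas) more explicitly than the paper does, but the underlying argument is the same.
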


%===============================================================================
\section{Asynchronous Distributed Synthesis} \label{sec:asynchronous_distributed_synthesis}
%===============================================================================

The asynchronous system model is a generalization of the synchronous model discussed in the previous section.
In an asynchronous system, not all processes are scheduled at the same time.
We model the scheduler as part of the environment, i.e., at any given time the environment additionally signals whether a process is enabled.
The resulting distributed realizability problem is already undecidable for $\ltl$ specifications and systems with more than one process~\cite{conf/lopstr/ScheweF06}.

We have to adapt the definition of the synchronous $\prompt$ realizability problem for the asynchronous setting.
Using the definition from Section~\ref{sec:synchronous_distributed_synthesis}, the system can never satisfy a $\prompt$ formula if the scheduler is part of the environment, since it may delay scheduling indefinitely. Moreover, even if the scheduler is assumed to be fair, it can still build increasing delay blocks between process activation times such that it is impossible for the system to guarantee any bound $k \in \nats$.
Hence, we employ the concept of \emph{bounded fair} schedulers and allow the system bound to depend on the scheduler bound.
More generally, this is a typical instance of an assume-guarantee specification: under the assumption that the scheduler is bounded fair, the system satisfies its specification.
In the following, we formally introduce the distributed realizability problem for asynchronous systems and assume-guarantee specifications.

\paragraph{Scheduling}

To model scheduling, we introduce an additional set $\Sched = \set{\sched_p \mid p \in \pminus}$ of atomic propositions. The valuation of $\sched_p$ indicates whether system process~$p$ is currently scheduled or not.
Given a (synchronous) architecture $\arch = \tuple{P,\penv,\{I_p\}_{p \in P}, \{O_p\}_{p \in P}}$, we define the asynchronous architecture $\arch^*$ as the architecture with the environment output $O^*_{\penv} = O_{\penv} \cup \Sched$.
Furthermore, we extend the input $I_p$ of a process by its scheduling variable $\sched_p$, i.e., $I_p^* = I_p \cup \set{\sched_p}$ for each $p \in P^-$.
The environment can decide at every step which processes to schedule.
When a process is not scheduled, its \emph{state}---and thereby its outputs---do not change~\cite{journals/sttt/FinkbeinerS13}.
Formally, for $p \in P^-$, let $f_p$ be a finite-state strategy for a process~$p$ and $\tsys_p = \tuple{S,s_0,\Delta,l}$ a transition system that generates~$f_p$.
For every path $w = w_0 w_1 w_2 \cdots \in (\pow{I_p^*})^\omega$, it holds that if $\sched_p \notin w_i$ for some $i \in \nats$, then $\Delta^*(w[i]) = \Delta^*(w[i+1])$, where $w[i]$ denotes the prefix~$w_0 w_1 \cdots w_{i}$ of $w$.
For the remainder of this section, we will only consider such strategies.

%\paragraph{Assume-Guarantee Realizability}

\begin{definition}[Assume-Guarantee Realizability]
  A $\prompt$ assume-guarantee specification~$\tuple{\varphi,\psi}$ consists of a pair of $\prompt$ formulas.
  The asynchronous $\prompt$ assume-guarantee realizability problem asks, given an asynchronous architecture~$\arch^*$ and $\tuple{\varphi,\psi}$ as above, whether for each process $p \in P^-$ there exists a finite-state strategy~$f_p$ such that for every bound~$k$ on the assumption there is a bound $l$ on the guarantee such that for every $w \in \Distprod_{p \in P^-} f_p$, we have that $(w, k) \models \varphi$ implies $(w, l) \models \psi$.
  In this case, we say that $\Distprod_{p \in P^-} f_p$ satisfies $\tuple{\varphi,\psi}$.
\end{definition}

Consider the bounded fairness specification discussed above, which is expressed by the formula~$\varphi = \bigwedge_{p \in \pminus} \GFp \sched_p$, i.e., for every point in time, every $p$ is scheduled within a bounded number of steps.
We use $\varphi$ as an assumption on the environment which implies that the guarantee $\psi$ only has to be satisfied if $\varphi$ holds.
Consider, for example, the asynchronous architecture corresponding to Fig.~\ref{fig:independent_architecture} and the $\prompt$ specification $\psi = \G (\Fp c \land \Fp \neg c \land \Fp d \land \Fp \neg d)$.
Even when we assume a fair scheduler, i.e., $\varphi = \GF \sched_{p_1} \land \GF \sched_{p_2}$, the environment can prevent one process from satisfying the specification for any bound $l$.
This problem is fixed by assuming the scheduler to be bounded fair, i.e., $\varphi = \GFp \sched_{p_1} \land \GFp \sched_{p_2}$.
Then, there exist realizing implementations for processes $p_1$ and $p_2$ (that alternate between enabling and disabling the output), and the bound on the guarantee is $l = 2 \cdot k$ for every bound~$k$ on the assumption.

% Why is this problem hard?
While in the case of $\ltl$ the assume-guarantee problem $\tuple{\varphi,\psi}$ can be reduced to the $\ltl$ realizability problem for the implication $\varphi \rightarrow \psi$, this is not possible in $\prompt$ due to the  quantifier alternation on the bounds.
As a matter of fact, we do not know yet whether the $\prompt$ assume-guarantee realizability problem in the single-process case is decidable.
We show that even if the problem would turn out to be decidable, an implementation that realizes the specification in general may need infinite memory.

\begin{lemma}
   There exists a $\prompt$ assume-guarantee specification that can be realized with an infinite-state strategy, but not with a finite-state one.
\end{lemma}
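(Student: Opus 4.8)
The plan is to exhibit a single concrete assume-guarantee specification $\tuple{\varphi,\psi}$, in the single-process asynchronous setting, for which every realizing strategy must use unboundedly many distinct memory states, while an infinite-state strategy realizes it. The natural candidate builds on the bounded-fairness assumption already discussed in the text: take $\varphi = \GFp \sched_p$ as the assumption on the scheduler, and design $\psi$ as a $\prompt$ guarantee that forces the system's behavior to depend quantitatively on how long the scheduler delays between activations. The key idea is that a bounded-fair scheduler guarantees \emph{some} bound $k$ on the gaps between successive activations, but the process does not know $k$ in advance and must react correctly for every possible $k$; if the required reaction involves counting or matching the observed gap length, then no finite-state strategy can suffice because it can only distinguish boundedly many gap lengths.

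Concretely, I would first fix the architecture: a single system process $p$ with input $\sched_p$ (and possibly one additional environment input used to encode a ``challenge''), and an output that the process controls. The assumption is $\varphi = \GFp \sched_p$, so for every $k$ on the assumption we consider only those inputs where consecutive scheduling points are at most $k$ apart. The guarantee $\psi$ should then assert something like: whenever the environment issues a request, the process must eventually respond, \emph{and} the response must encode (e.g.\ by emitting an output for exactly as many scheduled steps) the length of the most recent delay block the scheduler produced. Because a finite-state strategy has only finitely many states, it can output at most boundedly many distinct ``counts,'' so for a sufficiently long scheduler gap it will produce the wrong count and violate $\psi$; meanwhile, an infinite-state strategy that literally counts the gap length in its (unbounded) state space satisfies $\psi$ with a bound $l$ depending on $k$.

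The key steps, in order, are: (i) write down the architecture and the two formulas explicitly, making sure $\varphi$ and $\psi$ are genuine $\prompt$ formulas using only $\Fp$, $\GF$, $\U$, $\R$, and that the output encoding is expressible; (ii) prove the positive direction, constructing an explicit infinite-state strategy (a transition system with state set $\nats$ that counts the current delay) and showing that for every $k$ with $(w,k)\models\varphi$ there is an $l$ (say $l = 2k$ or $l = k+1$) with $(w,l)\models\psi$ along every path; (iii) prove the negative direction via a pumping argument: assume a finite-state strategy with $n$ states realizes the specification, then exhibit a bounded-fair scheduler input that forces a gap longer than the distinguishing power of $n$ states, so two gaps of different length lead to the same state and hence the same output, contradicting the exact-count requirement of $\psi$.

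The main obstacle I expect is step~(iii), specifically making the pumping argument interact correctly with the \emph{quantifier alternation} in the assume-guarantee semantics. The subtlety is that the finite-state strategy gets to see the bound $k$ implicitly through the actual input, and we must produce, for a single fixed strategy, an input that is bounded-fair (so that $\varphi$ holds for some $k$) yet defeats the strategy's guarantee for \emph{every} available $l$. I would handle this by pinning down $k$ first based on the strategy's state count $n$, then choosing an input whose gap lengths range over a set too large for $n$ states to distinguish, and arguing that on this fixed input no finite $l$ can rescue $\psi$ because the erroneous output recurs infinitely often (forcing a $\Fp$ or $\G$ subformula of $\psi$ to fail for all $l$). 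The delicate bookkeeping is ensuring the bad output is not merely wrong once but wrong in a way that no bound $l$ can absorb, which is precisely what distinguishes a genuine $\prompt$ violation from an $\ltl$ one.
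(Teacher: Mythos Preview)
Your plan has a real gap at the specification level: the guarantee you sketch, namely that the process ``encodes the length of the most recent delay block by emitting an output for exactly as many scheduled steps,'' is not expressible in $\prompt$. Equal-length matching of two blocks is a non-regular (indeed non-star-free) property, and $\prompt$ is a fragment of the $\omega$-regular languages; neither $\Fp$ nor the remaining $\ltl$ operators let you compare two unbounded counts. So step~(i) of your plan cannot be carried out as written, and without a concrete $\psi$ the pumping argument in step~(iii) has nothing to pump against. You anticipate difficulty with the quantifier alternation, but the first obstacle is earlier and more basic: you need a $\prompt$-definable $\psi$ whose satisfaction genuinely depends on the magnitude of~$k$, and ``counting the gap'' does not qualify.

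The paper avoids all of this with a much simpler construction that you may find instructive. It takes a single process with \emph{no} inputs ($I=\emptyset$), sets the guarantee to $\psi=\false$, and puts all the work into the assumption: $\varphi=\GFp o \lor \FG\neg o$. Since $\psi$ is unsatisfiable, realizing $\tuple{\varphi,\psi}$ means producing a single output word~$w$ such that $(w,k)\not\models\varphi$ for every~$k$; unfolding the disjunction, $w$ must set $o$ infinitely often yet contain, for every~$k$, a gap $\emptyset^k$ with $o$ absent. Any finite-state strategy with empty input generates an eventually periodic word, which either sets $o$ only finitely often or has bounded gap length, so some $k$ makes $\varphi$ true. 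An infinite-state strategy producing increasing gaps (e.g.\ $\set{o}\,\emptyset\,\set{o}\,\emptyset^2\,\set{o}\,\emptyset^3\cdots$) realizes the specification. Two design choices make this work cleanly: taking $I=\emptyset$ collapses ``all paths'' to a single word and turns the finite-state lower bound into a one-line periodicity argument, and taking $\psi=\false$ removes the need to reason about~$l$ at all.
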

\begin{proof}
  Consider the assume-guarantee specification $\tuple{\varphi,\psi}$ with $\varphi = \GFp o \lor \FG \neg o$ and $\psi = \false$ and a single process architecture with $I = \emptyset$ and $O = \set{o}$.
  As the guarantee $\psi$ is false, the implementation has to falsify the assumption~$\varphi$ for every bound $k$ on the prompt-eventually operator to realize $\tuple{\varphi,\psi}$.
  To falsify $\varphi$ with respect to a fixed $k$, the implementation has to produce a sequence $w \in (2^\set{o})^\omega$ where $o$ is true infinitely often and where $\emptyset^k$ is an infix  of $w$.
  Thus, the size of the implementation depends on $k$ and an implementation that falsifies $\varphi$ for every $k$ must have infinite memory.
\end{proof}
Moreover, already the $\ltl$ realizability problem is undecidable in the asynchronous case. Thus, the $\prompt$ assume-guarantee realizability problem for asynchronous architectures may be at best solvable by a semi-decision procedure. 
We present such a semi-decision procedure for the asynchronous $\prompt$ assume-guarantee realizability problem based on bounded synthesis~\cite{journals/sttt/FinkbeinerS13}.
In bounded synthesis, a transition system of a fixed size is ``guessed'' and model checked by a constraint solver.
Model checking for $\prompt$ can be solved by checking pumpable non-emptiness of colored B\"uchi graphs~\cite{journals/fmsd/KupfermanPV09}.
However, the pumpability condition cannot directly be expressed in the bounded synthesis constraint system.
Hence, in Section~\ref{sec:colored-buchi-graphs}, we give an alternative solution to the non-emptiness of colored B\"uchi graphs by a reduction to B\"uchi graphs that have access to the state space of the transition system.
We show how to extend bounded synthesis to such B\"uchi graphs in Section~\ref{sec:bounded-synthesis}, and present a semi-decision procedure for $\prompt$ assume-guarantee synthesis based on this extension in Section~\ref{sec:semi-algorith-ag}.

In the following we use transition systems as representations for finite-state strategies, since the algorithm developed in this section needs access to the syntactic representation of strategies.

%-------------------------------------------------------------------------------
\subsection{Nonemptiness of Colored B\"uchi Graphs} \label{sec:colored-buchi-graphs}
%-------------------------------------------------------------------------------

In the case of $\ltl$ specifications, the nonemptiness problem for B\"uchi graphs gives a classical solution to the model checking problem for a given system $\tsys$.
Let $\varphi$ be the $\ltl$ formula that $\tsys$ should satisfy.
In a preprocessing step, the negation of $\varphi$ is translated to a nondeterministic B\"uchi word automaton $\nbw_{\neg \varphi}$~\cite{BaierKatoen08}.
Then, $\varphi$ is violated by $\tsys$ if, and only if, the B\"uchi graph $G$ representing the product of $\tsys$ and $\nbw_{\neg\varphi}$ is nonempty.
An accepting path $\pi$ in $G$ witnesses a computation of $\mathcal{S}$ that violates~$\varphi$.
\emph{Colored B\"uchi graphs} are an extension to such graphs in the context of model checking $\prompt$~\cite{journals/fmsd/KupfermanPV09}.

A colored B\"uchi graph of degree two is a tuple $G = \tuple{\set{r,r'},V,E,v_0,L,\Bwin}$, where
$r$ and $r'$ are propositions,
$V$ is a set of vertices,
$E \subseteq V \times V$ is a set of edges,
$v_0 \in V$ is the designated initial vertex,
$L\colon\fun{V}{\pow{\set{r,r'}}}$ describes the color of a vertex, and
$\Bwin = \set{\bwin_1,\bwin_2}$ is a generalized B\"uchi condition of index two, i.e., $\bwin_1, \bwin_2 \subseteq V$.
A B\"uchi graph is a special case where we omit the labeling function and are interested in finding an accepting path.
A path $\pi = v_0 v_1 v_2 \cdots \in V^\omega$ is pumpable if we can pump all its $r'$-blocks without pumping its $r$-blocks.
Formally, a path is pumpable if for all adjacent $r'$-change points $i$ and $i'$, there are positions $j$, $j'$, and $j''$ such that $i \leq j < j ' < j'' < i'$, $v_j = v_{j''}$ and $r \in L(v_j)$ if, and only if,  $r \notin L(v_{j'})$.
A path $\pi$ is accepting, if it visits both $\bwin_1$ and $\bwin_2$ infinitely often.
The \emph{pumpable nonemptiness} problem for $G$ is to decide whether $G$ has a pumpable accepting path.
It is $\nlogspace$-complete and solvable in linear time~\cite{journals/fmsd/KupfermanPV09}.

We give an alternative solution to this problem based on a reduction to the nonemptiness problem of B\"uchi graphs.
To this end, we construct a non-deterministic safety automaton $\nbw_\text{pump}$ that characterizes the pumpability condition.
A non-deterministic safety automaton $\nbw$ is a tuple $\tuple{\Sigma,S,s_0,\delta}$, where $\Sigma$ is a finite alphabet, $S$ is a finite set of states, $s_0 \in S$ is the designated initial state, and $\delta \colon \fun{S \times \Sigma}{\pow{S}}$ is the transition function.
An infinite word is accepted by a safety automaton $\nbw$ if, and only if, there exists an infinite run on this word.
\begin{lemma} \label{thm:buchi_pumpable}
  Let $\cbgraph = \tuple{\set{r,r'},V,E,v_0,L,\Bwin}$ be a colored B\"uchi graph of degree two.
  There exists a B\"uchi graph $\bgraph'$, with $\bigo(\card{\bgraph'}) = \bigo(\card{\cbgraph}^2)$, such that $\cbgraph$ has a pumpable accepting path if, and only if, $\bgraph'$ has an accepting path.
\end{lemma}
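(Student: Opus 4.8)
The plan is to capture the pumpability condition by a nondeterministic safety automaton $\nbw_\text{pump}$ that runs on the vertex sequence of a path, and to define $\bgraph'$ as the product of $\cbgraph$ with $\nbw_\text{pump}$, lifting the generalized B\"uchi condition $\Bwin$ of $\cbgraph$ to the product as its acceptance condition. Since $\nbw_\text{pump}$ is a safety automaton, an accepting path of $\bgraph'$ projects to a path of $\cbgraph$ that satisfies $\Bwin$ and along which $\nbw_\text{pump}$ admits an infinite run; the automaton is built so that such a run exists exactly when every finite $r'$-block of the path carries the witness $j, j', j''$ demanded by the pumpability condition. Emptiness of $\bgraph'$ then coincides with the absence of a pumpable accepting path in $\cbgraph$.

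I would design $\nbw_\text{pump}$ with alphabet $\Sigma = V$ and four modes, processing one $r'$-block at a time. In the \emph{search} mode it reads vertices of the current block and nondeterministically either stays or commits to the current vertex as a first occurrence $s = v_j$, moving to the \emph{wait-for-change} mode for $s$. There it stays while the $r$-color agrees with that of $s$ (i.e.\ $r \in L(s) \Leftrightarrow r \in L(v)$) and, on reading a vertex whose $r$-color differs (this is $v_{j'}$), moves to the \emph{wait-for-repeat} mode for $s$. In that mode it waits until it reads $s$ again (this is $v_{j''}$) and then moves to the \emph{done} mode, where it idles until the block ends. The only place a run can get stuck is at an $r'$-change point: crossing into a new block is permitted only from the \emph{done} mode and resets the automaton to \emph{search}, whereas from any other mode no transition across a change point exists and the run dies. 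To detect $r'$-change points the state stores the previous $r'$-color (equivalently, in the product this is read off the edge), while the color of $s$ is recovered from $L(s)$; hence the state space is $\bigo(\card{V})$, namely a constant number of modes times the guessed vertex $s$ times two boolean flags.

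For correctness, in the forward direction a pumpable accepting path $\pi = v_0 v_1 \cdots$ supplies, for every finite $r'$-block, positions $j < j' < j''$ with $v_j = v_{j''}$ and opposite $r$-colors at $j$ and $j'$; guiding the nondeterministic choices to commit at $j$, switch at the first disagreement after $j$ (which occurs at or before $j'$, so $s$ still reappears before the block ends), and complete on the next occurrence of $s$ produces a run of $\nbw_\text{pump}$ that is in the \emph{done} mode at every finite-block boundary and hence never dies. If $\pi$ has only finitely many $r'$-change points, the final infinite block imposes no obligation and the automaton simply idles in \emph{search}. Paired with $\pi$, which visits $\bwin_1$ and $\bwin_2$ infinitely often, this yields an accepting path of $\bgraph'$. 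Conversely, the projection of an accepting path of $\bgraph'$ satisfies $\Bwin$ and carries a surviving run of $\nbw_\text{pump}$; surviving means the \emph{done} mode was reached before every finite-block boundary, which is precisely the existence of a witness $j, j', j''$ in each finite $r'$-block, so the projection is pumpable.

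Finally, $\bgraph'$ has $\bigo(\card{V}^2)$ vertices and $\bigo(\card{E}\cdot\card{V})$ edges, and since $\card{E}\cdot\card{V} \le (\card{V}+\card{E})^2$ both bounds are $\bigo(\card{\cbgraph}^2)$; degeneralizing the index-two condition to a single B\"uchi set adds only a constant factor. The main obstacle I anticipate is not the product but pinning down $\nbw_\text{pump}$ so that it enforces the pumpability witness in each finite block while correctly handling the boundary cases---the initial change point at position~$0$ and paths whose $r'$-color eventually stabilizes---since the quantification in the pumpability condition ranges only over \emph{adjacent} $r'$-change points and therefore imposes nothing on a final infinite block.
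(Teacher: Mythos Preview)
Your proposal is correct and follows essentially the same approach as the paper: build a nondeterministic safety automaton of size $\bigo(\card{V})$ that, within each $r'$-block, guesses a vertex, waits for the $r$-color to flip, waits for the vertex to recur, and only then allows the $r'$-block boundary to be crossed; then take the product with $\cbgraph$ and lift $\Bwin$. The paper's automaton differs only cosmetically (it uses alphabet $V \times 2^{\{r,r'\}}$ rather than $V$, folds your \emph{search} mode into the \emph{wait-for-change} states by allowing the remembered vertex to be overwritten, and does not explicitly discuss the final infinite block, though its automaton handles it for the same reason yours does).
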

\begin{proof}
  We define a non-deterministic safety automaton $\nbw_\text{pump} = \tuple{V \times 2^\set{r,r'},S,s_0,\delta}$ over the alphabet $V \times \pow{\set{r,r'}}$ that checks the pumpability condition.
  The product of $\cbgraph$ and $\nbw_\text{pump}$ (defined later) represents the B\"uchi graph $G'$ where every accepting path is pumpable.
  
  The language $\mathcal{L} \subseteq (V \times \pow{\set{r,r'}})^\omega$ of pumpable paths (with respect to a fixed set of vertices $V$) is an $\omega$-regular language that can be recognized by a small non-deterministic safety automaton.
  This automaton~$\nbw_\text{pump}$ operates in 3 phases between every pair of adjacent $r'$-change points:
  first, it non-deterministically remembers a vertex $v$ and the corresponding truth value of $r$.
  Then, it checks that this value changes and thereafter it remains to show that the vertex $v$ repeats before the next $r'$-change point.
  Thus, the state space~$S$ of $\nbw_\text{pump}$ is
  \begin{align*}
    \set{s_0} \cup \Set{s_{v,x} \mid v \in V, x \in 2^\set{r,r'}}
    \cup \Set{s'_{v,y} \mid v \in V, y \in 2^\set{r,r'}}
    \cup \Set{s''_z \mid z \in 2^\set{r'}}
  \end{align*}
  and the initial state is $s_0$.
  The state space corresponds to the 3 phases: in the states $s_{v,x}$ a vertex $v$ and a truth value of $r$ are remembered, before state $s'_{v,y}$ the value of $r$ changes, and $s''_z$ is the state after the vertex repetition.
  The transition function $\delta \colon (S \times (V \times 2^\set{r,r'})) \rightarrow 2^S$ is defined in the following.
  We use the notation $A =_C B$ to denote $(A \cap C) = (B \cap C)$.
  \begin{itemize}
    \item $\delta(s_0,(v,x)) =  \set{ s_{v,x} }$
    \item $\delta(s_{v,x},(v',x')) \ni \begin{cases}
      s_{v,x} & \text{if } x =_\set{r'} x' \\
      s_{v',x'} & \text{if } x =_\set{r'} x' \\
      s'_{v,x'} & \text{if } x =_\set{r'} x' \text{ and } x \neq_\set{r} x'
    \end{cases}$
    \item $\delta(s'_{v,y},(v',x)) \ni \begin{cases}
      s'_{v,y} & \text{if } x =_\set{r'} y \text{ and } v' \neq v \\
      s''_{y \cap \set{r'}} & \text{if } x =_\set{r'} y \text{ and } v' = v
    \end{cases}$
    \item $\delta(s''_z, (v,x)) \ni \begin{cases}
      s''_z & \text{if } x =_\set{r'} z \\
      s_{v,x} & \text{if } x \neq_\set{r'} z
    \end{cases}$
  \end{itemize}
  The size of $\nbw_\text{pump}$ is in $O(\card{V})$.
  Figure~\ref{fig:safety_automaton_pumpable} gives a visualization of this automaton.
  
  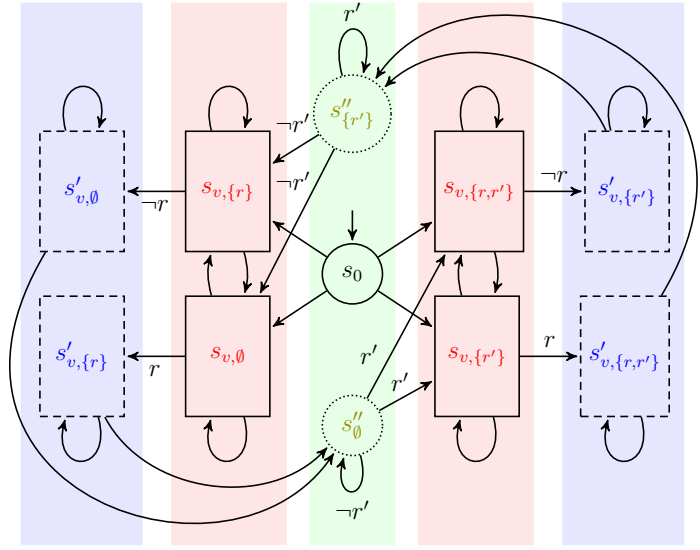
\begin{figure}[t]
    \centering
    \begin{tikzpicture}[->,>=stealth',shorten >=1pt,auto,node distance=1cm,semithick,scale=0.8,transform shape]

  % Phase 1
  \draw[fill,color=red!10] (-1.1,4.5) rectangle (-3,-4.5);
  \draw[fill,color=red!10] (1.1,4.5) rectangle (3,-4.5);
  
  % Phase 2
  \draw[fill,color=blue!10] (-3.5,4.5) rectangle (-5.5,-4.5);
  \draw[fill,color=blue!10] (3.5,4.5) rectangle (5.5,-4.5);
  
  % Phase 3
  \draw[fill,color=green!10] (-0.7,4.5) rectangle (0.7,-4.5);

  \tikzstyle{block}=[state,rectangle,minimum width=1.4cm,minimum height=2cm]
  \tikzstyle{nonblock}=[state,minimum size=1cm]

  \node[nonblock,initial above,initial text=] (init) {$s_0$};
  
  \node[block,above right=0 and 1 of init] (s_qp) {\color{red}$s_{v,\set{r,r'}}$};
  \node[block,below right=0 and 1 of init] (s_q) {\color{red}$s_{v,\set{r'}}$};
  \node[block,above left=0 and 1 of init] (s_p) {\color{red}$s_{v,\set{r}}$};
  \node[block,below left=0 and 1 of init] (s_) {\color{red}$s_{v,\emptyset}$};
  
  \node[block,right=of s_qp,densely dashed] (sp_q) {\color{blue}$s'_{v,\set{r'}}$};
  \node[block,right=of s_q,densely dashed] (sp_qp) {\color{blue}$s'_{v,\set{r,r'}}$};
  \node[block,left=of s_p,densely dashed] (sp_) {\color{blue}$s'_{v,\emptyset}$};
  \node[block,left=of s_,densely dashed] (sp_p) {\color{blue}$s'_{v,\set{r}}$};
  
  \node[nonblock,above=1.5 of init,densely dotted] (spp_q) {\color{olive}$s''_\set{r'}$};
  \node[nonblock,below=1.5 of init,densely dotted] (spp_) {\color{olive}$s''_\emptyset$};
  
  \draw (init) edge (s_qp)
        (init) edge (s_q)
        (init) edge (s_p)
        (init) edge (s_)
        
        (s_qp) edge[bend left=15] (s_q)
        (s_q) edge[bend left=15] (s_qp)
        (s_p) edge[bend left=15] (s_)
        (s_) edge[bend left=15] (s_p)
        
        (s_qp) edge node {$\neg r$} (sp_q)
        (s_q) edge node {$r$} (sp_qp)
        (s_p) edge node {$\neg r$} (sp_)
        (s_) edge node {$r$} (sp_p)
        
        (sp_q) edge[bend right=55] (spp_q)
        %(sp_qp) edge[bend right=55] (spp_q)
        %(sp_) edge[bend right=55] (spp_)
        (sp_p) edge[bend right=55] (spp_)
        
        (spp_q) edge[loop above] node {$r'$} ()
        (spp_) edge[loop below] node {$\neg r'$} ()
        
        (spp_q) edge node[pos=0,swap,yshift=-3pt] {$\neg r'$} (s_p)
        (spp_q) edge node[pos=0.2,swap,yshift=-10pt] {$\neg r'$} (s_)
        (spp_) edge node[near start,yshift=-5pt] {$r'$} (s_qp)
        (spp_) edge node[pos=0.7,yshift=-5pt] {$r'$} (s_q)
        
        % self loops on rectangular states
        (s_qp) edge[loop above,min distance=10mm] (s_qp)
        (s_p) edge[loop above,min distance=10mm] (s_p)
        (s_q) edge[loop below,min distance=10mm] (s_q)
        (s_) edge[loop below,min distance=10mm] (s_)
        
        (sp_qp) edge[loop below,min distance=10mm] (sp_qp)
        (sp_p) edge[loop below,min distance=10mm] (sp_p)
        (sp_q) edge[loop above,min distance=10mm] (sp_q)
        (sp_) edge[loop above,min distance=10mm] (sp_)        
        ;
  
  \draw (sp_qp) .. controls +(3,5) and +(2,3) .. (spp_q);
  \draw (sp_) .. controls +(-3,-5) and +(-2,-3) .. (spp_);
  
\end{tikzpicture}
    \caption{Schematic visualization of the automaton $\nbw_\text{pump}$ from the proof of Lemma~\ref{thm:buchi_pumpable}. The 3 phases are clearly visible: In the red states {\color{red}$s_{v,x}$} (solid rectangles) the values $(v,x)$ are non-deterministically stored and those states can only be left if there is a change in the value of $r$. The subsequent blue states {\color{blue}$s'_{v,y}$} (dashed rectangles) can only be left in case of a vertex repetition leading to the green state {\color{olive}$s''_z$} (dotted circles) that waits for the next $r'$-change point.}
    \label{fig:safety_automaton_pumpable}
  \end{figure}
  
  We define the product $G'$ of the colored B\"uchi graph $\cbgraph = \tuple{\set{r,r'},V,E,v_0,L,\Bwin}$ and the automaton~$\nbw_\text{pump}$ as the B\"uchi graph $(V \times S,E',(v_0,s_0),\Bwin')$, where 
  \begin{equation*}
    ((v,s),(v',s')) \in E' \quad\Leftrightarrow\quad (v,v') \in E \land s' \in \delta(s,(v, L(v)))  
  \end{equation*}
  and where $\Bwin' = (\bwin'_1,\bwin'_2)$ is given by  $\bwin'_i = \set{(v,s) \mid v \in \bwin_i \text{ and } s \in S}$ for  $i \in \set{1,2}$.
  The size of $G'$ is in $\bigo(\card{G}^2)$. It remains to show that $G$ has a pumpable accepting path if, and only if, $G'$ has an accepting path.

  Consider a pumpable accepting path $\pi$ in $G$.
  We show that there is a corresponding accepting path $\pi'$ in $G'$.
  Let $i$ and $i'$ be adjacent $r'$-change points.
  Then, there are positions $j$, $j'$, and $j''$ such that $i \leq j < j ' < j'' < i'$, $v_j = v_{j''}$ and $r \in L(v_j)$ if, and only if,  $r \notin L(v_{j'})$.
  By construction, at position $i$, automaton $\nbw_\text{pump}$ is at some state from the set $\set{s_0,s''_\emptyset,s''_\set{r'}}$.
  We follow the automaton and remember vertex $v$ and the truth value of $r$ at position $j \geq i$ (some state $s_{v,x}$).
  Next, we take the transition to $s'_{v,y}$ where the truth value of $r$ changes (at position $j'$).
  Lastly, we check that there is a vertex repetition (at position $j''$) and go to state $s''_z$.
  At the next $r'$-change point $i'$, the argument repeats.
  This path is accepting, as the original one is accepting.
  
  Now, consider an accepting path $\pi$ in $G'$.
  We show that there is a pumpable accepting path in $G$.
  Let $\pi'$ be the projection of every position of $\pi$ to the first component.
  By construction, $\pi'$ is an accepting path in $G$.
  Let $\pi_i \pi_{i+1} \cdots \pi_{i'}$ be an $r'$-block of $\pi$.
  As $\pi$ has a run of the automaton $\nbw_\text{pump}$, we know that there exists a state repetition between $i$ and $i'$ where the truth value of $r$ changes in between.
  Hence, the path $\pi'$ is pumpable. 
\end{proof}
\begin{remark}
  Note that in the context of the previous proof, it would be enough to remember a vertex $v$ without the valuation of $\set{r,r'}$, as the vertex determines the valuation by the labeling function $L\colon\fun{v}{2^\set{r,r'}}$ of $\cbgraph$.
  However, we will later use $\nbw_\text{pump}$ in a more general setting (cf.~Section~\ref{sec:semi-algorith-ag}).
\end{remark}

%-------------------------------------------------------------------------------
\subsection{Bounded Synthesis} \label{sec:bounded-synthesis}
%-------------------------------------------------------------------------------
%
Bounded synthesis~\cite{journals/sttt/FinkbeinerS13} is a semi-decision procedure for the distributed synthesis problem. In its original form, it takes as input a specification expressed by a universal co-B\"uchi automaton $\ubuchi$, a (possibly asynchronous) architecture $\arch$, and a size bound $\bound$ (or a family of bounds on the individual processes), and decides whether a correct implementation of the given size exists.
Bounded synthesis expresses the acceptance of a transition system $\tsys$ on $\ubuchi$, i.e., acceptance of all traces generated by $\tsys$, as a constraint system in a decidable first-order theory.
In this section, we show a modification of bounded synthesis that gives the specification automaton access to the states of the system to be synthesized.
This extension is needed for automata that can express the pumpability condition, in particular the one we constructed in the proof of Lemma~\ref{thm:buchi_pumpable}. 
We will show in Section~\ref{sec:semi-algorith-ag} how to obtain such an automaton from a $\prompt$ assume-guarantee specification~$\tuple{\varphi,\psi}$, resulting in a semi-decision procedure for asynchronous distributed synthesis from this class of specifications.

For distributed architectures, bounded synthesis separately considers the 
problems of finding a global transition system that is accepted by $\ubuchi$ 
and of dividing the transition system into local components according to the 
given architecture. To this end, two sets of constraints are generated: (i) 
an encoding of the acceptance by $\ubuchi$ of a global transition system 
$\tsys$ of size $\bound$, and (ii) an encoding of the architectural 
constraints that divides this global system into local components. If the 
conjunction of both sets of constraints is satisfiable, then a satisfying assignment of the 
constraints represents a distributed system that satisfies $\varphi$ in $\arch
$. Since the architectural constraints we consider are the same as in 
standard bounded synthesis, we only have to modify the constraints encoding 
the existence of a global transition system that satisfies the given 
specification.

\paragraph{Extended Automata}
We define a \emph{universal co-B\"uchi tree automaton} as a tuple~$\ubuchi = \tuple{\Sigma,\Upsilon,Q,q_0,\delta,\cobwin}$, where 
$\Sigma$ is an input alphabet, 
$\Upsilon$ is a set of directions, 
$Q$ is a set of states,
$\delta\colon Q \times \Sigma \rightarrow 2^{Q \times \Upsilon}$ is the transition function, and
$\cobwin \subseteq Q$ is the set of rejecting states.

As mentioned above, we want to check acceptance of a \emph{global} transition system by $\ubuchi$. Therefore, we consider the sets of inputs $I = O^*_{\penv}$ and outputs $O = \bigcup_{p \in P^-} O^*_p $ of the composition of the system processes, and are interested in the acceptance of a $\pow{O}$-labeled $\pow{I}$-transition system $\tsys = \tuple{S,s_0,\Delta,l}$. In addition, we want to recognize the pumpability condition. Therefore, we consider a \emph{state-aware} universal co-B\"uchi tree automaton with $\Sigma = \pow{O} \times S$ and $\Upsilon = \pow{I}$, i.e., in addition to output valuations, the automaton has access to the current state of $\tsys$.

Acceptance of $\tsys$ by the automaton is defined in terms of run graphs: the \emph{run graph} of an automaton $\ubuchi_S = \tuple{\pow{O} \times S,\pow{I},Q,q_0,\delta,\cobwin}$ on $\tsys$ is the minimal directed graph $\rungraph = (G,E)$ that satisfies the constraints:
\begin{itemize}
\item $G \subseteq Q \times S$,
\item $(q_0,s_0) \in G$, and
\item for every $(q,s) \in G$, it holds 
$$\left\{ (q',\upsilon) \in Q \times \pow{I} \mid \left( (q,s), (q',\Delta(s,\upsilon)) \right) \in E \right\} \supseteq \delta(q,(l(s),s)).$$
\end{itemize}

The \emph{co-B\"uchi condition} requires that, for an infinite path $g_0 g_1 g_2 \cdots \in G^\omega$ of the run graph, $g_i \in \cobwin \times S$ holds for only finitely many $i \in \nats$.
A run graph is \emph{accepting} if every infinite path $g_0 g_1 g_2 \cdots \in G^\omega$ of the run graph satisfies the co-B\"uchi condition.
A transition system $\tsys$ is \emph{accepted by $\ubuchi_S$} if the (unique) run graph of $\ubuchi_S$ on $\tsys$ is accepting.

\paragraph{Annotated transition systems}
We introduce an annotation function for transition systems that witnesses 
acceptance by a (possibly state-aware) universal co-B\"uchi tree automaton. 
The annotation assigns to each pair $(q,s) \in Q \times S$ a natural number 
or a special symbol $\bot$. Natural numbers indicate the maximal number of 
occurrences of rejecting states on any path to $(q,s)$ in the run graph;
$\bot$ indicates that the pair $(q,s)$ is not reachable.
Thus, if for a given transition system there exists an annotation that assigns natural 
numbers to all vertices of the run graph, then the number of 
visits to rejecting states must be bounded in any run. 
Such annotations are called \emph{valid}, and 
transition systems with valid annotations are exactly those that are accepted 
by the automaton.

An \emph{annotation} of a $\pow{O}$-labeled $\pow{I}$-transition system $\tsys = \tuple{S,s_0,\Delta,l}$ on a state-aware universal co-B\"uchi tree automaton $\ubuchi_S = \tuple{\pow{O} \times S,\pow{I},Q,q_0,\delta,\cobwin}$ is a function $\lambda\colon Q \times S \rightarrow \{\bot\} \cup \nats$. An annotation is \emph{valid} if it satisfies the following conditions:
\begin{itemize}
\item $\lambda(q_0,s_0) \neq \bot$
\item for any $(q,s) \in Q \times S$:
\begin{itemize}
\item[] if $\lambda(q,s) = n \neq \bot$ and $(q',\upsilon) \in \delta(q,(l(s),s))$ 
\item[] then $\lambda(q',\Delta(s,\upsilon)) \vartriangleright \lambda(q,s)$, 
\item[] where $\vartriangleright$ is interpreted as $>$ if $q' \in \cobwin$, and $\geq$ otherwise.
\end{itemize}
\end{itemize}\noindent
An annotation is \emph{$c$-bounded} if its codomain is contained in $\{\bot, 0, \ldots, c\}$.

\begin{theorem}[see \cite{journals/sttt/FinkbeinerS13}] \label{thm:bounded-annotation}
A finite-state $O$-labeled $I$-transition system $\tsys = \tuple{S,s_0,\Delta,l}$ is accepted by a state-aware universal co-B\"uchi tree automaton $\ubuchi_S = \tuple{\pow{O} \times S,\pow{I},Q,q_0,\delta,\cobwin}$ if, and only if, it has a valid $(\card{S} \cdot \card{\cobwin})$-bounded annotation.
\end{theorem}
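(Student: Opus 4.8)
The plan is to prove both implications by analyzing the canonical annotation that counts rejecting visits in the run graph $\rungraph$ of $\ubuchi_S$ on $\tsys$, exactly as in the original bounded-synthesis argument. The state-aware modification changes only which letter the automaton reads at a pair $(q,s)$, namely $(l(s),s)$ instead of $l(s)$; the vertex set $Q \times S$ and the edge structure of $\rungraph$ are unchanged, so the combinatorics carry over essentially verbatim.

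For the direction from a valid $(\card{S}\cdot\card{\cobwin})$-bounded annotation $\lambda$ to acceptance, I would first observe that validity propagates non-$\bot$ values along edges: since $\lambda(q_0,s_0)\neq\bot$ and every successor of a non-$\bot$ vertex is again non-$\bot$ (the relation $\vartriangleright$ can only hold between natural numbers), every vertex reachable in $\rungraph$ carries a natural number. Along any infinite path of $\rungraph$ the labels are then nondecreasing and strictly increase at every rejecting vertex. Because the codomain is bounded, only finitely many strict increases can occur, so each path meets $\cobwin\times S$ only finitely often; hence $\rungraph$ is accepting and $\tsys$ is accepted by $\ubuchi_S$.

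For the converse, assume $\tsys$ is accepted and define $\lambda(q,s)$ to be the maximum number of rejecting vertices occurring on any finite path of $\rungraph$ from $(q_0,s_0)$ to $(q,s)$, setting $\lambda(q,s)=\bot$ when $(q,s)$ is unreachable. The key lemma is that in an accepting run graph no reachable cycle passes through a rejecting vertex: otherwise one could traverse that cycle ad infinitum to obtain an infinite path visiting $\cobwin\times S$ infinitely often, contradicting the co-B\"uchi condition. Consequently no rejecting vertex repeats on any path, so every path contains at most $\card{\cobwin\times S}=\card{S}\cdot\card{\cobwin}$ rejecting vertices; this simultaneously shows that the maxima defining $\lambda$ are finite (well-definedness) and that $\lambda$ is $(\card{S}\cdot\card{\cobwin})$-bounded. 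Validity is then immediate: $\lambda(q_0,s_0)\neq\bot$, and for any edge from $(q,s)$ to $(q',\Delta(s,\upsilon))$ induced by $(q',\upsilon)\in\delta(q,(l(s),s))$, extending a maximizing path to $(q,s)$ gives a path to $(q',\Delta(s,\upsilon))$ whose rejecting count is at least $\lambda(q,s)$, and strictly larger when $q'\in\cobwin$, which is exactly the required $\vartriangleright$-inequality.

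The only genuinely delicate point is the no-rejecting-cycle lemma underpinning the bound, and in the state-aware setting I would take care that the pumping argument still yields a legitimate infinite path of $\rungraph$. This holds because the run-graph edges depend only on $\delta(q,(l(s),s))$ and $\Delta(s,\upsilon)$, both deterministic in $s$, so a cycle of $\rungraph$ can indeed be repeated forever and the state component evolves consistently. Everything else is the standard Finkbeiner--Schewe ranking argument.
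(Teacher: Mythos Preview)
Your proposal is correct and follows exactly the approach the paper takes: the paper's proof consists solely of the remark that the original Finkbeiner--Schewe argument carries over unchanged to the state-aware setting, and what you have written is precisely that argument spelled out in detail, together with the observation that the modified input alphabet does not affect the combinatorics of the run graph.
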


\begin{proof}
The original proof by Finkbeiner and Schewe~\cite{journals/sttt/FinkbeinerS13} works without modifications for our extension to state-aware universal co-B\"uchi tree automata.
\end{proof}

For a given state-aware universal co-B\"uchi tree automaton $\ubuchi_S = 
\tuple{\pow{O} \times S,\pow{I},Q,q_0,\delta,\cobwin}$, 
Theorem~\ref{thm:bounded-annotation} allows us to decide the existence of an
 $O$-labeled $I$-transition system with state space $S$ that is accepted by 
$\ubuchi_S$. 

\paragraph{Encoding of global acceptance}
The existence of a (global)
transition system with a valid annotation can be encoded into 
a set of decidable constraints in first-order logic modulo a theory with uninterpreted functions and a partial order. Essentially, we can directly 
encode the conditions for a valid annotation into constraints, with uninterpreted transition function and labeling for the desired transition system. Such constraints can then be solved by off-the-shelf satisfiability modulo theories (SMT) tools.
  
Like the proof of 
Theorem~\ref{thm:bounded-annotation}, the original encoding can easily be extended to support
our notion of state-aware universal B\"uchi tree automata. It is constructed in the following way:
\begin{enumerate}
\item Assume that $\ubuchi_S$ is defined in a suitable way, i.e., the sets $Q$ and $\cobwin$, state $q_0$, and transition relation $\delta\colon Q \times (2^O \times S) \rightarrow 2^{(Q \times 2^I)}$ are defined.
\item Declare uninterpreted sets and functions for the transition system $\tsys$ and the annotation:
\begin{itemize}
\item Define the set of states $S$ as $\{1,\ldots,b\}$ for the given bound $b \in \nats$.
\item Declare the transition function of $\tsys$ as $\Delta: S \times \pow{I} \rightarrow S$ and the labeling function as $l: S \rightarrow \pow{O}$.
\item Declare two functions that are used to model the annotation function: $\lambda^\bool: Q \times S \rightarrow \bool = \set{\true,\false}$ and $\lambda^\#: Q \times S \rightarrow  \nats$.
\end{itemize}
\item Assert the following constraints:
$$\begin{array}{ll}
				&	s_0 \in S\\[5pt]
				& \lambda^\bool(q_0,s_0)\\[5pt]
\forall q,q' \in Q, s \in S, \upsilon \in \pow{I}: & \lambda^\bool(q,s) \land (q',\upsilon) \in \delta(q,(l(s),s))\\
				& \rightarrow \lambda^\bool(q',\Delta(s,\upsilon)) \land \lambda^\#(q',\Delta(s,\upsilon)) \geq \lambda^\#(q,s)\\[5pt]
\forall q,q' \in Q, s \in S, \upsilon \in \pow{I}: & \lambda^\bool(q,s) \land (q',\upsilon) \in \delta(q,(l(s),s)) \land q' \in \cobwin\\
				& \rightarrow \lambda^\#(q',\Delta(s,\upsilon)) > \lambda^\#(q,s)\\[5pt]
\end{array}$$
\end{enumerate}

The encoding ensures that $\lambda^\bool(q,s)$ is true whenever $(q,s) \in Q \times S$ is reachable in the run graph of $\ubuchi_S$ on $\tsys$, and that $\lambda^\#$ respects the conditions for a valid annotation for all reachable vertices $(q,s)$. Since there are no conditions for the annotation on vertices that are not reachable, a solution for $\lambda^\#$ will represent a valid annotation of $\tsys$ on $\ubuchi_S$.

Note that our encoding is a strict generalization of the encoding of Finkbeiner and Schewe~\cite{journals/sttt/FinkbeinerS13}. In particular, our encoding can also be used for specifications in $\ltl$ that are translated into a universal co-B\"uchi tree automaton (see Kupferman and Vardi~\cite{KupfermanV05}), which can be seen as a state-aware automaton that ignores the state of the transition system.

\paragraph{Encoding of architectural constraints}
As mentioned above, the encoding of architectural constraints can be adopted from the original approach without changes. For a given asynchronous architecture $\arch^* = \tuple{P,\penv,\{I^*_p\}_{p \in P}, \{O^*_p\}_{p \in P}}$, the additional constraints (1) assert that the state of a process $p \in P^-$ does not change if it is not scheduled and (2) that the transitions of a process only depend on its current state and the visible inputs. In addition, it can contain additional bounds on the state space of every single component.

The conjunction of both sets of constraints then asks for the existence of a distributed implementation $\tsys = \Distprod_{p \in P^-} \tsys_p$ of size $\bound$ that is accepted by $\ubuchi$, possibly with additional bounds $b_{p}$ for every $p \in \pminus$ on the size of the components. 

\begin{theorem}[see \cite{journals/sttt/FinkbeinerS13}]
\label{thm:asynchronous-bounded-synthesis}
Given a state-aware universal co-B\"uchi tree automaton $\ubuchi_S$~\footnote{As the symbol $S$ in $\ubuchi_S$ refers to the state-space of the distributed product, $\card{S}$ has to be equal to the product of bounds $b_p$ for $p \in \pminus$.}, an asynchronous architecture $\arch^*$, and a family of bounds $b_{p}$ for every $p \in \pminus$, there is a constraint system (in a decidable first-order theory) that is satisfiable if, and only if, there exist implementations $\tsys_{p}$ of size~$b_p$ for every $p \in \pminus$ such that $\tsys = \Distprod_{p \in P^-} \tsys_p$ is accepted by $\ubuchi_S$ and satisfies the architectural constraints of $\arch^*$.
\end{theorem}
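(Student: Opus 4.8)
The plan is to assemble the required constraint system by conjoining the two encodings developed above---the encoding of global acceptance and the encoding of architectural constraints---and then to verify that a satisfying assignment corresponds exactly to a distributed implementation of the prescribed shape. Fixing the global state space as $S = \{1, \ldots, \bound\}$ with $\bound = \prod_{p \in \pminus} b_p$ (as required by the footnote), the two blocks of constraints are essentially independent, and the argument is a matter of chaining the equivalences already established.

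First I would argue that the global-acceptance block captures acceptance by $\ubuchi_S$. A satisfying assignment to the uninterpreted functions $\Delta$, $l$, $\lambda^\bool$, and $\lambda^\#$ determines a $\pow{O}$-labeled $\pow{I}$-transition system $\tsys$ together with a function that, on the reachable part of the run graph (where $\lambda^\bool$ is true), satisfies precisely the conditions defining a valid annotation. By Theorem~\ref{thm:bounded-annotation}, such a valid annotation exists if, and only if, $\tsys$ is accepted by $\ubuchi_S$; hence this block is satisfiable if, and only if, there is a global transition system of size $\bound$ accepted by $\ubuchi_S$. State-awareness requires no structural change here: the automaton reads $(l(s),s)$ in $\delta(q,(l(s),s))$, and since $s$ already ranges over the fixed set $S$ in the encoding, the second component is simply an available term.

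Second I would adjoin the architectural constraints verbatim from Finkbeiner and Schewe, which force the global $\Delta$ and $l$ to decompose into local functions $\Delta_p$, $l_p$ for each $\tsys_p$ of size $b_p$, with each component observing only its visible inputs $I_p^*$ and freezing its state whenever $\sched_p$ is absent. These constraints mention only $\Delta$ and $l$, so they are orthogonal to whether $\ubuchi_S$ is state-aware. I would then close the equivalence in both directions: a satisfying assignment of the conjunction yields, from the first block, a global $\tsys$ accepted by $\ubuchi_S$ and, from the second block, a decomposition $\tsys = \Distprod_{p \in \pminus} \tsys_p$ respecting $\arch^*$; conversely, given such implementations, their distributed product is accepted by $\ubuchi_S$, hence carries a valid annotation by Theorem~\ref{thm:bounded-annotation}, and reading off $\Delta$, $l$, $\lambda^\bool$, and $\lambda^\#$ supplies a satisfying assignment. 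Decidability of the resulting theory (uninterpreted functions over a finite domain with a partial order) is inherited unchanged.

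The main obstacle I anticipate is confirming that introducing the state-aware alphabet $\pow{O} \times S$ does not entangle the acceptance constraints with the architectural ones. This is the only point that genuinely needs verification, and it is resolved by observing that the extra component $s$ enters only inside $\delta$ in the acceptance block, whereas the architectural block constrains $\Delta$ and $l$ alone; the two constraint sets therefore remain independent, exactly as in the non-state-aware setting, so the original correctness argument of Finkbeiner and Schewe transfers with no further work.
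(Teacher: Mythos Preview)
Your proposal is correct and follows exactly the approach taken in the paper, which simply states that the result follows immediately from Theorem~\ref{thm:bounded-annotation} and the correctness of the architectural constraints from Finkbeiner and Schewe. You have spelled out in more detail what the paper compresses into a one-line citation, including the observation that state-awareness does not interfere with the architectural constraints, but the underlying argument is the same.
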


\begin{proof}
Follows immediately from Theorem~\ref{thm:bounded-annotation} and the correctness of the architectural constraints from Finkbeiner and Schewe~\cite{journals/sttt/FinkbeinerS13}.
\end{proof}

%-------------------------------------------------------------------------------
\subsection{A Semi-Decision Procedure for Assume-Guarantee Realizability} \label{sec:semi-algorith-ag}
%-------------------------------------------------------------------------------
Since the assume-guarantee realizability problem for asynchronous architectures is undecidable and infinite-state strategies are required in general, we give a semi-decision procedure for the problem. Our solution is based on the techniques developed in the last subsections.

As the bounded synthesis approach described in the last subsection already accounts for ``guessing'' transition systems $\tsys_p$ for each system process $p$ according to the architectural constraints given by $\arch^*$, we reduce the problem of model checking individual implementations $\tsys_p$ to model checking the product system $\tsys = \Distprod_{p \in \pminus} \tsys_p$.
A transition system $\tsys$ satisfies an assume-guarantee specification $\tuple{\varphi,\psi}$ if the strategy $f$ generated by $\tsys$ satisfies $\tuple{\varphi,\psi}$, i.e., if for every bound~$k$ there is a bound $l$ such that for every $w \in f$, we have that $(w, k) \models \varphi$ implies $(w, l) \models \psi$.

Given an assume-guarantee specification $\tuple{\varphi,\psi}$,
we first solve the problem of model checking assume-guarantee specifications by building a state-aware universal co-B\"uchi tree automaton $\uct$ that accepts a transition system $\tsys$ if, and only if, $\tsys$ satisfies $\tuple{\varphi,\psi}$.
Given $\uct$ and a bound $\bound$ on the size of the implementation, we can then use the encoding from Section~\ref{sec:bounded-synthesis} to decide realizability modulo this bound, and obtain a semi-decision procedure by solving the problem for increasing bounds.

\paragraph{Encoding $\tuple{\varphi,\psi}$ into B\"uchi automata}
Let $\arch^* = \tuple{P,\penv,\{I^*_p\}_{p \in P}, \{O^*_p\}_{p \in P}}$ be  an asynchronous architecture and let $I = O^*_{\penv}$ and $O = \bigcup_{p \in P^-} O^*_p $ be the set of inputs and outputs, respectively, of the composition of the system processes. %\todo{Remark: $I \cup O = \bigcup_{p\in P}I_p \cup O_p$}
First, we construct the non-deterministic B\"uchi  automaton $\nbw_{\overline{c}_{r'}(\psi) \land c_r(\varphi)} = \tuple{2^{I \cup O  \cup \set{r,r'}}, Q, q_0, \delta, \bwin}$, where $\overline{c}_{r'}(\psi) = \alt_{r'} \land \neg\rel_{r'}(\psi)$, whose language contains exactly those paths that satisfy ${\overline{c}_{r'}(\psi) \land c_r(\varphi)}$~\cite{BaierKatoen08}.
Then, we use the following lemma to characterize whether a transition system $\tsys$ satisfies an assume-guarantee specification $\tuple{\varphi,\psi}$ by reducing it to finding pumpable error paths in the two-color B\"uchi graph $G = \tuple{\set{r,r'},V,E,v_0,L,\Bwin}$, as introduced in Section~\ref{sec:colored-buchi-graphs}, which is the product of $\tsys = \tuple{S,s_0,\Delta,l}$ and $\nbw_{\overline{c}_{r'}(\psi) \land c_r(\varphi)}$.
Formally, the elements of $G$ are defined as
$V = S \times 2^\set{r,r'} \times Q$,
$E$ as $((s,R,q),(s',R',q')) \in E$ if, and only if, there is an input valuation $\vec{i} \in 2^I$ such that $s' = \Delta(s,\vec{i})$ and $(q',\vec{i}) \in \delta(q,l(s))$,
$v_0 = (s_0, \emptyset, q_0)$,
$L$ as $L((s,R,q,q^*)) = R$, and
$\Bwin = \set{\bwin}$.
\begin{lemma}
\label{lemma_pumpabilitycharac}
	Let $\tuple{\varphi,\psi}$ be a $\prompt$ assume-guarantee specification, $\arch^*$ an asynchronous architecture and $\tsys_p$ a finite-state implementation for every system process $p \in \pminus$.
	The distributed product $\tsys = \Distprod_{p \in \pminus} \tsys_p$ does not satisfy $\tuple{\varphi,\psi}$ if, and only if, the product of $\tsys$ and $\nbw_{\overline{c}_{r'}(\psi) \land c_r(\varphi)}$ is pumpable non-empty.
\end{lemma}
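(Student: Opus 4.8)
The plan is to unfold both sides of the equivalence and bridge them through the two-colour machinery, using colour~$r$ for the assumption $\varphi$ and colour~$r'$ for the guarantee $\psi$. Recall that $\tsys$ fails to satisfy $\tuple{\varphi,\psi}$ exactly when some single bound makes the guarantee bounds diverge, i.e., there is a $k$ with $\exists k\, \forall l\, \exists w\in f$ (where $f$ is the strategy generated by $\tsys$) such that $(w,k)\models\varphi$ and $(w,l)\nmodels\psi$. On the graph side, a path of $G$ is an $\tsys$-run $w$ (with its inputs) equipped with colourings by $r$ and $r'$; it is accepting iff the coloured word satisfies $c_r(\varphi)\wedge\overline{c}_{r'}(\psi)$ (this is precisely what $\nbw_{\overline{c}_{r'}(\psi)\wedge c_r(\varphi)}$ tests), and it is pumpable iff every $r'$-block contains a vertex-repeating cycle with an $r$-change in its interior. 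The task is thus to match this semantic quantifier pattern with the existence of one pumpable accepting path.

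For the direction from pumpable non-emptiness to non-satisfaction, I would take a pumpable accepting path $\pi$ and project it to a run $w$ with its two colourings. For each $l$, I pump every $r'$-block of $\pi$ around its guaranteed cycle until it has length at least $l$; since that cycle carries an internal $r$-change, this makes the $r'$-colouring $l$-spaced while keeping every $r$-block bounded by a constant $k_0$ that is independent of $l$, and it preserves both membership in $G$ and B\"uchi acceptance (we only iterate cycles). The resulting run $w_l\in f$ still satisfies $c_r(\varphi)\wedge\overline{c}_{r'}(\psi)$; item~\ref{lemma_alternatingcolor_ltltopltl} of Lemma~\ref{lemma_alternatingcolor} applied to the $k_0$-bounded $r$-colouring gives $(w_l,2k_0)\models\varphi$, while the contrapositive of item~\ref{lemma_alternatingcolor_pltltoltl} applied to the $l$-spaced $r'$-colouring (which satisfies $\neg\rel_{r'}(\psi)$, hence fails $c_{r'}(\psi)$) gives $(w_l,l)\nmodels\psi$. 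Fixing $k:=2k_0$ once and ranging over all $l$ yields exactly the non-satisfaction pattern.

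For the converse, which I expect to be the main obstacle, I would fix the witnessing bound $k$ and pick one sufficiently large even $l$, say $l>4k\card{G}$, together with the corresponding $w\in f$ satisfying $(w,k)\models\varphi$ and $(w,l)\nmodels\psi$. I then colour $w$ by hand: $r$ with alternating blocks of length exactly $k$ (so the colouring is $k$-spaced and $r$ flips every $k$ steps), and $r'$ with blocks whose lengths lie strictly between $k\card{G}$ and $l/2$ (so the colouring is $(l/2)$-bounded with infinitely many changes). Item~\ref{lemma_alternatingcolor_pltltoltl} of Lemma~\ref{lemma_alternatingcolor} makes the $r$-colouring satisfy $c_r(\varphi)$, and the contrapositive of item~\ref{lemma_alternatingcolor_ltltopltl} with parameter $l/2$ forces the $(l/2)$-bounded $r'$-colouring to violate $c_{r'}(\psi)$; as it still satisfies $\alt_{r'}$, it satisfies $\overline{c}_{r'}(\psi)$. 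Hence the coloured word satisfies $c_r(\varphi)\wedge\overline{c}_{r'}(\psi)$ and induces an accepting path $\pi$ in $G$. The delicate part is pumpability: since each $r'$-block is longer than $k\card{G}$ it contains more than $\card{G}$ $r$-change points, so by the pigeonhole principle two positions immediately following $r$-change points carry the same $G$-vertex; these bracket at least one further $r$-flip, supplying the required $j<j'<j''$ with $v_j=v_{j''}$ and differing $r$-values at $v_j$ and $v_{j'}$. Thus $\pi$ is pumpable and $G$ is pumpable non-empty.

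The crux, as indicated, is this converse construction of the $r'$-colouring: every block must be short enough ($\le l/2$) to invoke the bounded half of Lemma~\ref{lemma_alternatingcolor} for the guarantee violation, yet long enough (exceeding $k\card{G}$) to force, by pigeonhole over the vertices following $r$-change points, a repeated vertex straddling an $r$-change. These demands are reconcilable precisely because the non-satisfaction hypothesis lets me take $l$ as large as I wish relative to $\card{G}$ and $k$. A secondary point to verify carefully is that the pumping in the first direction preserves B\"uchi acceptance and keeps the $r$-blocks bounded, which holds because each pumped cycle contains an internal $r$-change, so repeating it never merges $r$-blocks into arbitrarily long ones.
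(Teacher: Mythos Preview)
Your proposal follows the same two-colour approach the paper delegates to \cite{journals/fmsd/KupfermanPV09} (Lemma~6.1 and Theorem~6.2) and \cite{journals/iandc/FaymonvilleZ17} (Lemma~8), and the converse direction is essentially correct: the careful choice of $r'$-block length (long enough for pigeonhole over $G$-vertices at $r$-change points, short enough for the bounded half of Lemma~\ref{lemma_alternatingcolor}) is exactly the point, and your pumpability argument goes through once you observe that two $r$-change points with identical $G$-vertex must agree in their $r$-value (the vertex encodes $r$), hence lie at least two change points apart, supplying the intermediate $j'$ with opposite colour.

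There is, however, a gap in the forward direction. You take an arbitrary pumpable accepting path~$\pi$ and assert that after pumping every $r'$-block the resulting $r$-blocks are bounded by some $k_0$ independent of $l$. Pumping only inserts copies of a cycle carrying an $r$-change, so it never \emph{lengthens} existing $r$-blocks; but nothing prevents the original $\pi$ from already having unbounded $r$-blocks. Pumpability guarantees at least one $r$-change per $r'$-block, yet an accepting path in a finite graph need not be ultimately periodic, so its $r'$-blocks---and hence its $r$-blocks---may be unbounded. The standard fix, implicit in the references, is to first replace $\pi$ by a lasso-shaped pumpable accepting path, which exists because pumpable non-emptiness is an $\omega$-regular property of paths in a finite graph; in a lasso every $r$- and $r'$-block has length bounded by the lasso length, after which your pumping argument yields the required uniform $k_0$.
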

\begin{proof}
Similar to the proof of	Lemma~6.1 and Theorem~6.2 in \cite{journals/fmsd/KupfermanPV09}. The missing proof of Lemma~6.1 is given in \cite{journals/iandc/FaymonvilleZ17} (Lemma~8). See also the discussion below its proof.
\end{proof}

%
%\begin{lemma}[cf.~Theorem~6.2 of \cite{journals/fmsd/KupfermanPV09}] \label{thm:model checking-asynchronous}
%  Let $\tsys$ be a $2^O$-labeled $2^I$-transition system.
%  Then $\tsys$ does not satisfy $\tuple{\psi,\varphi}$ if, and only if, the product of $\tsys$ and $\nbw_{\overline{c}_{r'}(\psi) \land c_r(\varphi)}$ is pumpable non-empty.
%\end{lemma}
%
To check the existence of pumpable error paths, we use the non-deterministic automaton~$\nbw_\text{pump} = \tuple{V \times 2^\set{r,r'},S,s_0,\delta',S}$ from the proof of Lemma~\ref{thm:buchi_pumpable}. Here, we let $V = X \times Q$, where $X$ is a set with $\bound$ elements, representing the state space of the desired solution $\tsys$, and $Q$ is the state space of the automaton~$\nbw_{\overline{c}_{r'}(\psi) \land c_r(\varphi)}$ defined above, that is, we use as $V$ the state space $X \times Q$ of the colored B\"uchi graph that is used to model check an implementation $\tsys$ against a specification $\tuple{\psi,\varphi}$. 

The product of $\nbw_{\overline{c}_{r'}(\psi) \land c_r(\varphi)}$ and $\nbw_\text{pump}$ is an automaton $\nbw$ that operates on the inputs $I$, outputs $O$, propositions $\set{r,r'}$, and the state space $X$ of the implementation, and accepts all those paths that are pumpable and violate the assume-guarantee specification (cf. Lemma~\ref{thm:buchi_pumpable}). 

Formally, $\nbw$ is defined as:
$$\tuple{2^{I  \cup O \cup \set{r,r'}} \times X, Q \times S, (q_0,s_0),\delta^*,\bwin^*},$$
where $\delta^* \colon Q \times S \times 2^{I \cup O \cup \set{r,r'}} \times \set{x} \rightarrow 2^{Q \times S}$ is defined as 
$$\delta^*((q,s),(\sigma,x)) = \left\{ (q',s') ~\mid~ q' \in \delta(q, \sigma) ~\wedge~ s' \in \delta'(s, \set{q,x} \cup (\sigma\cap\set{r,r'})) \right\},$$
and $\bwin^*$ is the B\"uchi condition $\set{(q,s) \mid q \in \bwin, s \in S}$.

We complement $\nbw$, resulting in a universal co-B\"uchi automaton $\ucw$ that accepts a given sequence $w \in (2^{I \cup  \set{r,r'}})^\omega$ of inputs and the behavior of an implementation $\tsys$ on $w$ if, and only if, the execution of $\tsys$ on $w$ satisfies $\tuple{\psi,\varphi}$. Finally, we construct a (state-aware) universal co-B\"uchi tree automaton $\uct = (2^O \times X, 2^{I \cup \set{r,r'}},Q,q_0,\delta,\cobwin)$ by spanning a copy of $\ucw$ for every direction in $2^{I \cup \set{r,r'}}$.
Then, an implementation $\tsys$ with set~$S$ of states is accepted by $\uct$ if, and only if, $\tsys$ satisfies $\tuple{\varphi,\psi}$ (for all possible input sequences). Thus,  
$\uct$ solves the problem of model checking assume-guarantee specifications.

\paragraph{Encoding the automaton into constraints}
Now, we can use the modified bounded synthesis algorithm from Section~\ref{sec:bounded-synthesis} to encode $\uct$ into a set of constraints that is satisfiable if, and only if, there exists an implementation~$\tsys$ that satisfies $\tuple{\varphi,\psi}$.
We obtain the following corollaries stating the correctness of the constraint system for single-process implementations (Corollary~\ref{thm:bs_correct_global}) and distributed implementations (Corollary~\ref{thm:bs_correct_local}), respectively.

\begin{corollary} \label{thm:bs_correct_global}
Given a $\prompt$ assume-guarantee specification $\tuple{\varphi,\psi}$ and a bound $b$, there is a constraint system (in a decidable first-order theory) that is satisfiable if, and only if, there exists an implementation $\tsys$ of size~$b$ such that $\tsys$ satisfies $\tuple{\varphi,\psi}$.
\end{corollary}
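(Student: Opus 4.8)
The plan is to chain the automaton construction developed immediately above with the bounded-synthesis encoding of the global acceptance condition from Section~\ref{sec:bounded-synthesis}. First I would fix the bound~$b$ and take $X = \set{1, \ldots, b}$ as an abstract state space of size~$b$, which will stand in for the (as yet unknown) implementation~$\tsys$. Over this~$X$ I would instantiate the whole chain of automata already described: the non-deterministic B\"uchi automaton $\nbw_{\overline{c}_{r'}(\psi) \land c_r(\varphi)}$ obtained from the two relativized formulas, the pumpability automaton $\nbw_\text{pump}$ over the vertex set $V = X \times Q$, their product $\nbw$, its complement $\ucw$, and finally the state-aware universal co-B\"uchi tree automaton $\uct$ spanned from $\ucw$. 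By the correctness of this construction (established just before the corollary and resting on Lemma~\ref{lemma_pumpabilitycharac} together with Lemma~\ref{thm:buchi_pumpable} and complementation), a transition system~$\tsys$ with state space~$X$ is accepted by $\uct$ if, and only if, $\tsys$ satisfies $\tuple{\varphi,\psi}$.

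It then remains to turn ``accepted by $\uct$'' into a constraint system. Here I would invoke Theorem~\ref{thm:bounded-annotation}: since $\uct$ is a state-aware universal co-B\"uchi tree automaton, a finite-state transition system of size~$b$ is accepted by $\uct$ exactly when it admits a valid $(b \cdot \card{\cobwin})$-bounded annotation. The encoding of global acceptance from Section~\ref{sec:bounded-synthesis} declares uninterpreted functions for the transition function~$\Delta$ and the labeling~$l$ of $\tsys$ over the fixed states $\set{1,\ldots,b}$, together with the annotation functions $\lambda^{\bool}$ and $\lambda^{\#}$, and asserts precisely the validity conditions. A satisfying assignment of these constraints is exactly a size-$b$ transition system together with a valid annotation, hence a size-$b$ transition system accepted by $\uct$. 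Combining the two steps yields the corollary: the constraint system is satisfiable if, and only if, there is a size-$b$ transition system accepted by $\uct$, which in turn holds if, and only if, there exists an implementation~$\tsys$ of size~$b$ satisfying $\tuple{\varphi,\psi}$.

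I expect the main obstacle to be conceptual rather than computational: one must be careful that the state-awareness of $\uct$ does not introduce a circularity, since $\uct$ refers to the state space of the very implementation we are trying to synthesize. The resolution I would emphasize is that the \emph{abstract} state set $X = \set{1,\ldots,b}$ is fixed \emph{before} $\uct$ is built, while only the transition function and labeling over these states remain uninterpreted in the constraint system; the annotation~$\lambda$ is indexed by $Q \times X$, so the bound $b \cdot \card{\cobwin}$ of Theorem~\ref{thm:bounded-annotation} applies verbatim. The only remaining point requiring care is to confirm that the alphabet and direction components of $\uct$ line up with the inputs $I$ and outputs $O$ expected by the encoding, so that the guessed transition function induces exactly the run graph whose annotation Theorem~\ref{thm:bounded-annotation} constrains.
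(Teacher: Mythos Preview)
Your proposal is correct and follows exactly the approach the paper takes: the corollary is stated without a separate proof because it is an immediate consequence of the automaton construction~$\uct$ (via Lemma~\ref{lemma_pumpabilitycharac} and Lemma~\ref{thm:buchi_pumpable}) combined with the bounded-synthesis encoding of Section~\ref{sec:bounded-synthesis} (Theorem~\ref{thm:bounded-annotation}), which is precisely what you spell out. Your discussion of the apparent circularity in state-awareness is a helpful clarification that the paper leaves implicit.
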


\begin{corollary} \label{thm:bs_correct_local}
  Given a $\prompt$ assume-guarantee specification $\tuple{\varphi,\psi}$, an asynchronous architecture $\arch^*$, and a family of bounds $b_{p}$ for each $p \in \pminus$, there is a constraint system (in a decidable first-order theory) that is satisfiable if, and only if, there exist implementations $\tsys_{p}$ of size~$b_p$ for each $p \in \pminus$ such that $\Distprod_{\tsys \in P^-} \tsys_p$ satisfies $\tuple{\varphi,\psi}$ in $\arch^*$.
\end{corollary}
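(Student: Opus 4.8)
The plan is to obtain this distributed variant by composing two facts already in place: the correctness of the model-checking automaton $\uct$ constructed above, and the distributed bounded-synthesis encoding of Theorem~\ref{thm:asynchronous-bounded-synthesis}. The proof is the architectural counterpart of Corollary~\ref{thm:bs_correct_global}: there the plain global encoding was used, whereas here the architectural constraints take its place.

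First I would instantiate the construction of $\uct$ with the state-parameter set $X$ chosen to have $\prod_{p \in \pminus} b_p$ elements, so that $X$ coincides with the state space $S$ of the distributed product $\tsys = \Distprod_{p \in \pminus} \tsys_p$. This is exactly the sizing required by the footnote of Theorem~\ref{thm:asynchronous-bounded-synthesis}, which demands $\card{S}$ to equal the product of the local bounds. By the discussion preceding this corollary, $\uct$ accepts a transition system if, and only if, it satisfies $\tuple{\varphi,\psi}$.

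Second, I would feed $\uct$, the asynchronous architecture $\arch^*$, and the family of bounds $b_p$ for $p \in \pminus$ into Theorem~\ref{thm:asynchronous-bounded-synthesis}. This yields a constraint system in a decidable first-order theory that is satisfiable if, and only if, there exist implementations $\tsys_p$ of size $b_p$ for each $p \in \pminus$ whose distributed product $\Distprod_{p \in \pminus} \tsys_p$ is accepted by $\uct$ and respects the architectural constraints of $\arch^*$. Chaining this equivalence with the acceptance characterization of $\uct$ from the first step gives precisely the claim: the constraint system is satisfiable exactly when realizing local implementations of the prescribed sizes exist.

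The only point requiring care is the identification of the abstract state-parameter set $X$ of $\uct$ with the product state space produced by the architectural constraints. The automaton $\uct$ (and in particular the embedded pumpability gadget $\nbw_\text{pump}$, whose vertices range over $X \times Q$) refers to the states of the system only through this parameter set and the labeling; since the architectural constraints of Theorem~\ref{thm:asynchronous-bounded-synthesis} fix both the transition structure and the labeling of the global product $\Distprod_{p \in \pminus} \tsys_p$, acceptance by $\uct$ is well-defined and agrees with the monolithic case. Once this identification is justified, the statement follows immediately, just as Corollary~\ref{thm:bs_correct_global} follows from the global encoding together with the correctness of $\uct$.
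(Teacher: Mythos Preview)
Your proposal is correct and matches the paper's approach: the paper states this as an immediate corollary of the construction of $\uct$ together with Theorem~\ref{thm:asynchronous-bounded-synthesis}, without giving a separate proof. Your elaboration (instantiating $X$ with $\prod_{p\in\pminus} b_p$ elements, invoking Theorem~\ref{thm:asynchronous-bounded-synthesis}, and chaining with the acceptance characterization of $\uct$) is exactly the reasoning the paper leaves implicit.
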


By exhaustively traversing the space of bounds~$(b_{p})_{p \in \pminus}$ and by solving the resulting constraint system, we obtain a semi-decision procedure for the asynchronous $\prompt$ assume-guarantee realizability problem. 
Furthermore, this also solves the synthesis problem, as a satisfying assignment of the constraint system directly represents a valid implementation, where the transition relation is given by (the assignment of) the function $\Delta$, and the state labeling by (the assignment of) the function $l$.

\begin{corollary}
  Let $\arch^*$ be an asynchronous architecture.
  The asynchronous $\prompt$ assume-guarantee realizability problem for $\arch^*$ is semi-decidable.
\end{corollary}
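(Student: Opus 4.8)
The plan is to obtain this corollary directly by enumeration on top of Corollary~\ref{thm:bs_correct_local}. Recall that, by definition, a positive instance of the realizability problem is witnessed by \emph{finite-state} strategies $f_p$ for $p \in \pminus$, and each such strategy is generated by some transition system $\tsys_p$ of a finite size $b_p \in \nats$. The crucial observation is that the admissible families of sizes $(b_p)_{p \in \pminus}$ range over the countable set $\nats^{\pminus}$, which can be enumerated effectively (for instance by dovetailing).

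First I would fix an effective enumeration $(b_p^{(n)})_{p \in \pminus}$, indexed by $n \in \nats$, of all families of bounds. For each $n$, I invoke Corollary~\ref{thm:bs_correct_local} to construct the constraint system associated with the specification $\tuple{\varphi,\psi}$, the architecture $\arch^*$, and the current family of bounds. Since this constraint system lies in a decidable first-order theory (with uninterpreted functions and a partial order), its satisfiability can be decided. If it is satisfiable, the procedure halts and reports that $\tuple{\varphi,\psi}$ is realizable; otherwise it proceeds to index $n+1$. As observed after Corollary~\ref{thm:bs_correct_local}, a satisfying assignment moreover directly yields the realizing implementations $\tsys_p$, so the same procedure simultaneously solves the synthesis problem.

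For correctness, soundness is immediate: whenever the procedure halts, Corollary~\ref{thm:bs_correct_local} guarantees that the extracted implementations $\tsys_p$ are such that $\Distprod_{p \in \pminus} \tsys_p$ satisfies $\tuple{\varphi,\psi}$ in $\arch^*$, which is exactly a positive instance. For completeness on positive instances, suppose the specification is realizable; then there are finite-state strategies $f_p$, each generated by some transition system $\tsys_p$ of a size $b_p$. The family $(b_p)_{p \in \pminus}$ occurs at some index $n$ of the enumeration, and at that stage the constraint system is satisfiable by Corollary~\ref{thm:bs_correct_local}, so the procedure halts. Hence every positive instance is eventually accepted, which is precisely semi-decidability.

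I expect no genuine obstacle here, since the heavy lifting has already been carried out in Sections~\ref{sec:colored-buchi-graphs}--\ref{sec:semi-algorith-ag}; the only points requiring care are that the enumeration of $\nats^{\pminus}$ is exhaustive and that each individual satisfiability check terminates, both of which are standard. Finally, I would stress that the procedure is necessarily only a semi-decision procedure and not a decision procedure: on negative instances---including specifications that are realizable only with infinite memory, whose existence was established earlier in this section---no bound is ever found and the enumeration does not terminate.
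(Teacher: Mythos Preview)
Your proposal is correct and follows essentially the same approach as the paper: the paper simply states that one obtains a semi-decision procedure by exhaustively traversing the space of bounds~$(b_p)_{p \in \pminus}$ and solving the resulting constraint system from Corollary~\ref{thm:bs_correct_local}. Your write-up spells out the enumeration and the soundness/completeness argument in more detail than the paper does, but the idea is identical.
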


%===============================================================================
\section{Beyond $\prompt$} \label{sec:beyond_prompt_ltl}
%===============================================================================

In this section, we consider distributed synthesis for logics stronger than $\prompt$. As already pointed out in the introduction, $\prompt$ is predated by parametric linear temporal logic ($\pltl$), which was introduced by Alur et al.~\cite{journals/tocl/AlurETP01}. This logic is obtained by adding parameterized eventually operators of the form~$\F_{\le x}\phi$ and parameterized always operators of the form~$\G_{\le y}$ to $\ltl$. Here, $x$ and $y$ are variables which are instantiated by a variable valuation~$\alpha$ mapping variables to natural numbers that serve as bounds: $\F_{\le x}\phi$ holds with respect to $\alpha$ if $\phi$ holds within the next~$\alpha(x)$ steps, while $\G_{\le y}\phi$ holds with respect to $\alpha$, if $\phi$ holds at least for the next $\alpha(y)$ steps. Thus, intuitively, the variables bound the scope of the operators. In particular, $\prompt$ can be seen as the fragment of $\pltl$ without parameterized always operators and where all parameterized eventually operators are parameterized by the same variable.

Alur et al.\ showed that the model checking problem for $\pltl$, where the variable valuation~$\alpha$ is existentially quantified, is $\pspace$-complete, and therefore not harder than $\ltl$ model checking. Later, a similar result was shown for solving infinite games with $\pltl$ winning conditions, which is still complete for doubly-exponential time~\cite{journals/tcs/Zimmermann13}. As for $\prompt$, distributed synthesis for $\pltl$ specifications has never been considered before. 

The second logic we consider in this section is parametric linear dynamic logic ($\pldl$)~\cite{journals/iandc/FaymonvilleZ17}, which has its roots in another shortcoming of $\ltl$: it lacks the full expressive power of the $\omega$-regular languages. There is a long line of extensions of $\ltl$ addressing this issue~\cite{LeuckerSanchez07, VardiWolper94, Wolper1983}. Most recently, Vardi introduced linear dynamic logic ($\ldl$), which adds regular expressions as \emph{guards} to the temporal operators of $\ltl$: the formula~$\ddiamond{g}\phi$ holds if there is a position such that the prefix up to it matches  the guard~$g$ and $\phi$ holds at this position. Similarly, $\bbox{g}\phi$ holds, if $\phi$ holds at all positions where the prefix up to it matches the guard. Thus, the diamond operator is a guarded eventually operator and the box operator is a guarded always operator. Vardi showed that $\ldl$ has the exponential compilation property~\cite{Vardi11}, i.e., formulas can be translated into equivalent B\"uchi automata of exponential size. Thus, $\ldl$ model checking is still $\pspace$-complete while solving $\ldl$ games is still $\twoexp$-complete.

Now, $\pldl$ is obtained by allowing parameterized diamond and box operators, with the expected semantics. For the first time, this logic addresses both shortcomings of $\ltl$, lack of timing constraints and limited expressiveness, simultaneously. Even in this setting, model checking is just $\pspace$-complete and solving games is $\twoexp$-complete~\cite{journals/iandc/FaymonvilleZ17}. Distributed synthesis for $\pldl$ specifications has never been considered before. 

In this section, we address the distributed synthesis problem for both logics, starting with the synchronous variant. For $\pltl$, we rely on a reduction to the $\prompt$ synthesis problem. The variable valuation~$\alpha$ will be existentially quantified in the problem statement, just as the bound~$k$ in the case of $\prompt$ synthesis is existentially quantified. Now, consider a parameterized always operator~$\G_{\le y}\phi$: if $\phi$ is satisfied for at last $\alpha(y)$ steps, then also for at least zero steps, i.e., at the current position. Thus, when the value for $y$ is existentially quantified, $\G_{\le y}\phi$ degenerates to the formula~$\phi$, as $y$ can always be instantiated with $0$. 

Dually, consider a parameterized eventually operator~$\F_{\le x}\phi$: if $\phi$ holds at least once within the next $\alpha(x)$ steps, then also at least once within the next~$k$ steps, for every $k \ge \alpha(x)$. Thus, if $\alpha$ is existentially quantified, then one can replace all variables parameterizing parameterized eventually operators by a unique one. By applying these two replacements, one obtains an equivalent $\prompt$ formula, provided $\alpha$ is existentially quantified. In fact, these observations were the impetus to introduce $\prompt$. However, the situation is different when one is interested in a fixed variable valuation or for optimization problems. In this case, the replacements are no longer valid.

Then, we consider the synchronous synthesis problem for $\pldl$, which we solve along the same lines as for its special case $\prompt$:  the alternating color technique has been reformulated for $\pldl$ and the exponential compilation property holds as well. Finally, we also discuss the asynchronous synthesis problem. Here, the approach for $\pltl$ and $\pldl$ is similar. Hence, we restrict our attention to the case of $\pldl$, as it subsumes $\pltl$. 
\subsection{Synchronous Distributed Synthesis for Parametric Linear Temporal Logic}

Let $\Var$ be an infinite set of variables and let $\ap$ be a set of atomic propositions. The formulas of $\pltl$ are given by the grammar
\begin{equation*}\varphi \Coloneqq a \mid \neg a \mid \varphi \wedge \varphi \mid \varphi \vee
\varphi
  \mid \X \varphi \mid \varphi \U \varphi \mid \varphi \R \varphi \mid
  \mathbf{F}_{ \le z } \varphi \mid \mathbf{G}_{ \le z} \varphi
,\end{equation*}
where $a \in \ap$ and $z \in \Var$. As before, we use the derived operators~$\F$ and $\G $ as well as implications, which are defined as for $\prompt$. 
%Additional derived operators are introduced on page~\pageref{page_derivedops}.

The set of sub-formulas of a $\pltl$ formula~$\varphi$ is denoted by $\cl( \varphi )$
and we define the size of $\varphi$ to be the cardinality of $\cl(\varphi)$.
Furthermore, we define 
\[\varF( \varphi ) = \{ z\in \Var \mid \F_{\le z} \psi \in
\cl( \varphi) \}\] to be the set of variables parameterizing eventually operators in
$\varphi$, and 
\[\varG( \varphi ) = \{ z\in \Var \mid \G_{\le z} \psi \in \cl( \varphi)  \} \]
to be the set of variables parameterizing always operators in $\varphi$. Finally,
$\var( \varphi ) = \varF( \varphi ) \cup \varG( \varphi )$ denotes the set of all variables appearing in $\varphi$.

To evaluate formulas, we define a variable valuation to be a
mapping~$\alpha\colon \Var \rightarrow \nats$ mapping each variable to a value. 
Now, we can define the model
relation between a path~$w = w_0 w_1  w_2  \cdots $, a
position~$n$ of $w$, a variable valuation~$\alpha$, and a~$\pltl$ formula. 
For the atomic propositions, Boolean connectives, and standard temporal operators, it is defined as for $\prompt$, while for the parameterized operators it is defined as follows:
\begin{itemize}
%\item $(w,n,\alpha)\models a$ if, and only if, $a\in w_n$.
%
%\item $(w,n,\alpha)\models\neg a$ if, and only if, $a\notin
%w_n$.
%
%\item $(w,n,\alpha)\models\varphi\wedge\psi$ if, and only if,
%$(w,n,\alpha)\models\varphi$ and
%$(w,n,\alpha)\models\psi$.
%
%\item $(w,n,\alpha)\models\varphi\vee\psi$ if, and only if,
%$(w,n,\alpha)\models\varphi$ or $(w,n,\alpha)\models\psi$.
%
%\item $(w,n,\alpha)\models\X\varphi$ if, and only if, $(w,n+1,\alpha)\models\varphi$.
%
%\item $(w,n,\alpha)\models\varphi\U\psi$ if, and only if, there exists a $j\ge 0$
%such that
%$(w,n+j,\alpha)\models\psi$ and $(w,n+j',\alpha)\models\varphi$ for every $j'$ in the
%range $0\le j'< j $.
%
%\item $(w,n,\alpha)\models\varphi\R\psi$ if, and only if, for every $j\ge 0$: either
%$(w,n+j,\alpha)\models\psi$
%or there exists a $j'$ in the range $0\le j' < j$ such that
%$(w,n+j',\alpha)\models\varphi$.

\item $(w,n,\alpha)\models\F_{\le z}\varphi$ if, and only if, there exists a $j \le \alpha(z)$ such that $(w,n+j,\alpha)\models\varphi$.

\item $(w,n,\alpha)\models\G_{\le z}\varphi$ if, and only if, for every $j \le \alpha(z)$: $(w,n+j,\alpha)\models\varphi$.

\end{itemize}

For the sake of brevity, we write $(w,\alpha) \models \varphi$ instead of
$(w,0,\alpha) \models \varphi$ and say that $w$ is a model of $\varphi$ with
respect to $\alpha$. 

As usual for parameterized temporal logics, the use of variables has to be
restricted: parameterizing eventually and always operators by the same variable leads
to an undecidable satisfiability problem~\cite{journals/tocl/AlurETP01}.

\begin{definition}
\label{def_wellformedformula}
A $\pltl$ formula~$\varphi$ is well-formed if $\varF( \varphi ) \cap \varG( \varphi ) =
\emptyset$.
\end{definition}

In the following, we only consider well-formed formulas and omit the qualifier~\myquot{well-formed}. Also, we will denote
variables in $\varF( \varphi )$ by $x$ and variables in $\varG( \varphi )$ by $y$, if the formula~$\varphi$ is clear from the context.

Our solution for the $\pltl$ synthesis problem is based on the monotonicity of the parameterized temporal operators explained earlier, which is formalized in the following lemma. 

\begin{lemma}[\cite{journals/tocl/AlurETP01}]
\label{lemma_monotonicity}
Let $\varphi$ be a $\pltl$ formula and let $\alpha$ and
$\beta$ be variable valuations satisfying $\alpha ( x ) \le \beta ( x)$, for
each $x \in \varF( \varphi)$, and $\alpha ( y ) \ge \beta ( y)$, for each $y \in \varG( \varphi)$. If $(w, \alpha) \models \varphi$, then $(w, \beta) \models
\varphi$.
\end{lemma}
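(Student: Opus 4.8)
The plan is to prove the statement by structural induction on $\varphi$, first strengthening it so that the induction hypothesis is available at every position, not only at position~$0$. Concretely, I would establish the following: for every subformula $\psi \in \cl(\varphi)$ and every position $n \in \nats$, if $(w,n,\alpha) \models \psi$ then $(w,n,\beta) \models \psi$. The lemma then follows by specializing to $\psi = \varphi$ and $n = 0$. The hypotheses transfer to subformulas because $\varF(\psi) \subseteq \varF(\varphi)$ and $\varG(\psi) \subseteq \varG(\varphi)$, so the inequalities $\alpha(x) \le \beta(x)$ for $x \in \varF(\psi)$ and $\alpha(y) \ge \beta(y)$ for $y \in \varG(\psi)$ continue to hold throughout the induction.

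The crux of the argument lies in the two parameterized operators, where the opposite directions of the inequalities for $\varF$- and $\varG$-variables are exactly what is needed. For $\psi = \F_{\le z}\psi'$ we have $z \in \varF(\varphi)$, hence $\alpha(z) \le \beta(z)$; a witness $j \le \alpha(z)$ with $(w,n+j,\alpha)\models\psi'$ then also satisfies $j \le \beta(z)$, and the induction hypothesis at position $n+j$ yields $(w,n+j,\beta)\models\psi'$, so the same witness establishes $(w,n,\beta)\models\F_{\le z}\psi'$. Dually, for $\psi = \G_{\le z}\psi'$ we have $z \in \varG(\varphi)$, hence $\beta(z) \le \alpha(z)$; the range $\set{j \mid j \le \beta(z)}$ is contained in $\set{j \mid j \le \alpha(z)}$, so from satisfaction on the larger range under $\alpha$ and the induction hypothesis we obtain satisfaction on the smaller range under $\beta$, which is precisely $(w,n,\beta)\models\G_{\le z}\psi'$.

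The remaining cases are routine and carry the witness or the universal quantification across unchanged. Atomic formulas $a$ and $\neg a$ do not depend on the valuation, so satisfaction is literally identical under $\alpha$ and $\beta$; the Boolean connectives and the operators $\X$, $\U$, and $\R$ preserve satisfaction by applying the induction hypothesis at the relevant positions and reusing the same witness index (for $\U$) or the same universal argument (for $\R$), with no variable-valuation reasoning involved. I expect the only real pitfall to be bookkeeping rather than mathematics: one must keep the direction of monotonicity aligned with whether a variable parameterizes an eventually or an always operator, and confirm that the sub-formula variable sets are genuinely subsets so that the inequality hypotheses are inherited. Note that well-formedness ($\varF(\varphi) \cap \varG(\varphi) = \emptyset$) is not strictly needed for the induction itself—the inequalities are simply assumed—but it guarantees that the two directional constraints never target the same variable, so that nontrivial pairs $\alpha,\beta$ satisfying the hypotheses exist.
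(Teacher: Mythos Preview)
Your proposal is correct. The paper does not supply its own proof of this lemma; it simply cites Alur et al.\ \cite{journals/tocl/AlurETP01} and moves on, so there is no in-paper argument to compare against. The structural induction you outline, strengthened to all positions~$n$ and all subformulas, is exactly the standard proof, and your handling of the two parameterized operators correctly identifies why the opposite inequality directions are needed.
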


Thus, let $\phi$ be a $\pltl$ formula and let $\phi'$ be the $\prompt$-formula obtained from $\phi$ by inductively replacing each sub-formula~$\F_{\le x}\psi$ by $\Fp\psi$ and each sub-formula~$\G_{\le y}\psi$ by $\psi$. The following is a straightforward consequence of the previous lemma.

\begin{corollary}
\label{coro_pltl2prompt}
Let $\phi$ be a $\pltl$ formula and let $\phi'$ be defined as above.

\begin{enumerate}
	
	\item For every~$w$, if there exists a variable valuation~$\alpha$ such that $(w, \alpha) \models \phi$, then $(w,\max_{x \in \varF(\phi)}\alpha(x)) \models \phi'$.
	
	\item  For every~$w$, if there exists a bound~$k$ such that $(w,k) \models \varphi'$, then $(w,\alpha)\models\varphi$, where~$\alpha$ maps each $x \in \varF(\phi)$ to $k$ and each other variable to $0$. 
\end{enumerate}
\end{corollary}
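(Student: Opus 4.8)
The plan is to reduce both items to a single \emph{calibration claim} and to invoke monotonicity only where it is genuinely needed. For a fixed $k$, define the variable valuation $\beta_k$ by $\beta_k(x) = k$ for every $x \in \varF(\phi)$ and $\beta_k(z) = 0$ for every other variable $z$; by well-formedness ($\varF(\phi) \cap \varG(\phi) = \emptyset$) this assigns $0$ to every $y \in \varG(\phi)$, so $\beta_k$ is exactly the valuation $\alpha$ appearing in item~2. The heart of the argument is the claim that, for every subformula $\chi$ of $\phi$, every position $n$, and every $k$,
\[
(w,n,\beta_k) \models \chi \quad\Longleftrightarrow\quad (w,n,k) \models \chi',
\]
where $\chi'$ denotes the $\prompt$ formula obtained from $\chi$ by the same replacement. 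Granting this claim, item~2 is immediate: with $\chi = \phi$ and $n = 0$, the hypothesis $(w,k) \models \phi'$ yields $(w,\beta_k) \models \phi$, and $\beta_k$ is precisely the valuation specified there.

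I would prove the claim by structural induction on $\chi$. The atomic cases $a,\neg a$ and the Boolean cases are trivial, as neither side depends on the bound. The operators $\X$, $\U$, and $\R$ only shift the position index without changing the bound, so they follow directly from the induction hypothesis applied at the shifted positions; note that $\varF(\chi) \subseteq \varF(\phi)$ and $\varG(\chi) \subseteq \varG(\phi)$, so $\beta_k$ restricts correctly to the variables of each subformula. The two interesting cases are the parameterized operators. For $\chi = \F_{\le x}\psi$ with $x \in \varF(\phi)$, we have $(w,n,\beta_k) \models \F_{\le x}\psi$ iff there is $j \le \beta_k(x) = k$ with $(w,n+j,\beta_k) \models \psi$; by the induction hypothesis this is equivalent to the existence of some $j \le k$ with $(w,n+j,k) \models \psi'$, i.e.\ exactly $(w,n,k) \models \Fp\psi' = (w,n,k) \models \chi'$. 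For $\chi = \G_{\le y}\psi$ with $y \in \varG(\phi)$, we have $\beta_k(y) = 0$, so $(w,n,\beta_k) \models \G_{\le y}\psi$ reduces to the single conjunct $j = 0$, that is, to $(w,n,\beta_k) \models \psi$, which by the induction hypothesis equals $(w,n,k) \models \psi' = \chi'$, matching the replacement of $\G_{\le y}\psi$ by $\psi$.

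Finally, item~1 follows by combining the claim with Lemma~\ref{lemma_monotonicity}. Given $(w,\alpha) \models \phi$, set $k = \max_{x \in \varF(\phi)} \alpha(x)$. Then $\beta_k(x) = k \ge \alpha(x)$ for every $x \in \varF(\phi)$ and $\beta_k(y) = 0 \le \alpha(y)$ for every $y \in \varG(\phi)$, so the monotonicity lemma gives $(w,\beta_k) \models \phi$, whence the claim yields $(w,k) \models \phi'$. The genuinely delicate points are bookkeeping rather than conceptual: one must use well-formedness to see that $\beta_k$ is unambiguous and sends every always-variable to $0$, and one must observe that taking the maximum over $\varF(\phi)$ is precisely what collapses all the distinct eventually-bounds of $\phi$ onto the single prompt bound $k$ of $\phi'$ — which is exactly why item~1 needs monotonicity while item~2 does not. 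I expect no real obstacle beyond phrasing the calibration claim carefully enough that the induction hypothesis is available at the shifted positions and on the restricted variable sets.
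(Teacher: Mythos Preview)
Your proof is correct and is essentially a careful elaboration of what the paper leaves implicit: the paper gives no proof beyond the sentence ``a straightforward consequence of the previous lemma'' (monotonicity), and your calibration claim together with the structural induction is precisely the natural way to make that sentence rigorous. Your observation that item~2 needs only the calibration claim while item~1 additionally requires Lemma~\ref{lemma_monotonicity} is accurate and sharper than the paper's blanket attribution of both items to monotonicity.
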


Let $\arch = \tuple{P,\penv,\{I_p\}_{p \in P}, \{O_p\}_{p \in P}}$ be an architecture.
Here, the \emph{synchronous $\pltl$ realizability problem for $\arch$} is the problem of deciding, given a $\pltl$ formula~$\varphi$, whether there exist a variable valuation~$\alpha$ and a finite-state implementation $f_p$, for each process $p \in P^-$, such that the distributed product $\Distprod_{p \in P^-} f_p$ satisfies $\varphi$ with respect to $\alpha$, i.e., $(\Distprod_{p \in P^-} f_p,\alpha) \models \varphi$. In this case, we say that $\phi$ is realizable in $\arch$.

\begin{theorem}
\label{thm_pltlsynchsynt}
  Let $\arch$ be an architecture.
  The synchronous $\pltl$ realizability problem for $\arch$ is decidable if, and only if, $\arch$ is weakly ordered.
\end{theorem}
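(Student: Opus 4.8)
The plan is to reduce the synchronous $\pltl$ realizability problem to the synchronous $\prompt$ realizability problem and then invoke Corollary~\ref{coro_promptsynthesis}. Given a $\pltl$ formula~$\phi$, let $\phi'$ be the $\prompt$ formula obtained by replacing each sub-formula~$\F_{\le x}\psi$ by $\Fp\psi$ and each sub-formula~$\G_{\le y}\psi$ by $\psi$, exactly the translation used in Corollary~\ref{coro_pltl2prompt}. Note that $\phi'$ is a genuine $\prompt$ formula, since the rewriting removes all parameterized always operators and collapses all parameterized eventually operators onto the single prompt-eventually $\Fp$. I would then establish that $\phi$ is realizable in $\arch$ (in the $\pltl$ sense) if, and only if, $\phi'$ is realizable in $\arch$ (in the $\prompt$ sense), using the \emph{same} family of finite-state implementations in both directions. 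Because $\phi'$ is a $\prompt$ formula and Corollary~\ref{coro_promptsynthesis} already characterizes decidability of $\prompt$ realizability by weak orderedness, the theorem follows at once.

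For the forward direction, suppose $\phi$ is $\pltl$-realizable: there are a valuation~$\alpha$ and finite-state implementations~$f_p$ with $(\Distprod_{p \in P^-} f_p, \alpha) \models \phi$, that is, $(w,\alpha)\models\phi$ for every path~$w$ of the product. Setting $k = \max_{x \in \varF(\phi)}\alpha(x)$, item~1 of Corollary~\ref{coro_pltl2prompt} yields $(w,k)\models\phi'$ for every such $w$. Since $\alpha$ — and hence $k$ — is a single object independent of~$w$, this is precisely $(\Distprod_{p \in P^-} f_p, k) \models \phi'$, so $\phi'$ is $\prompt$-realizable with the same implementations. Conversely, if $\phi'$ is $\prompt$-realizable with bound~$k$ and implementations~$f_p$, then item~2 of Corollary~\ref{coro_pltl2prompt} gives $(w,\alpha)\models\phi$ for every path~$w$, where $\alpha$ maps each $x \in \varF(\phi)$ to $k$ and every other variable to~$0$. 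As this $\alpha$ depends only on $k$ and on $\phi$, it is uniform across all paths, so $(\Distprod_{p \in P^-} f_p, \alpha) \models \phi$ and $\phi$ is $\pltl$-realizable with the same implementations.

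The only point requiring care is the \emph{uniformity} of the bound and the valuation across all paths of the distributed product: both realizability problems quantify their respective bound existentially, yet require it to witness satisfaction on every path simultaneously. This is exactly where the two corollary items fit — the $\max$ in item~1 collapses a single $\alpha$ into a single $k$, while item~2 recovers a single $\alpha$ from a single $k$, and neither construction depends on the particular path. I do not expect a genuine obstacle here, since the substantive work was already carried out in establishing Corollary~\ref{coro_pltl2prompt} (via the monotonicity Lemma~\ref{lemma_monotonicity}) and Corollary~\ref{coro_promptsynthesis}; the remaining task is merely to check that this strategy-preserving equivalence lifts correctly to the distributed setting, which it does because the distributed product $\Distprod_{p \in P^-} f_p$ is the same object in both problems and only its satisfaction relation is reinterpreted.
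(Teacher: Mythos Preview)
Your proposal is correct and follows exactly the paper's approach: reduce $\pltl$ realizability to $\prompt$ realizability via Corollary~\ref{coro_pltl2prompt} (using the same implementations in both directions), then invoke Corollary~\ref{coro_promptsynthesis}. Your write-up is in fact more detailed than the paper's, which dispatches the argument in two sentences; the uniformity point you highlight is the only thing to check and you handle it correctly.
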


\begin{proof}
Fix an architecture~$\arch$. By Corollary~\ref{coro_pltl2prompt}, a given $\pltl$ formula $\phi$ is realizable in $\arch$ if, and only if, $\phi'$ as defined in the corollary is realizable in $\arch$. Thus, Corollary~\ref{coro_promptsynthesis} yields the desired result. 
\end{proof}

Also, bounded synthesis is again applicable, as we can translate the relativized PLTL formulas into universal co-B\"uchi automata.

\egroup

\subsection{Synchronous Distributed Synthesis for Parametric Linear Dynamic Logic}

As before, let $\Var$ be an infinite set of variables and let $\ap$ be the set of atomic propositions. The formulas of $\pldl$ are given by the grammar
\begin{align*}
\varphi &\Coloneqq a \mid \neg a \mid \varphi \wedge \varphi \mid \varphi \vee \varphi
  \mid \ddiamond{g} \varphi 
  \mid \bbox{g} \varphi 
  \mid \ddiamondle{g}{z} \varphi 
  \mid \bboxle{g}{z} \varphi\\
  g & \Coloneqq \phi \mid \varphi? \mid g+g \mid g \conc g \mid g^*
\end{align*}
where $a \in \ap$, $z \in \Var$, and $\phi$ ranges over propositional formulas over $\ap$.
Here, expressions of the form~$\varphi?$ are \emph{tests}, which allow us to nest operators. The sets~$\vardiamond(\varphi)$, $\varbox(\varphi)$, and $\var(\varphi)$ are defined analogously to the sets~$\varF(\varphi)$, $\varG(\varphi)$, and $\var(\varphi)$ for $\pltl$, taking sub-formulas in tests into account.

The satisfaction relation is defined, as before, among a path $w$, a position~$n$, a variable valuation $\alpha$, and a formula $\varphi$. First, let the relation~$\Rexp(g,w,\alpha) \subseteq \nats\times\nats$ contain all pairs~$(m,n) \in \nats \times \nats$ such that $w_m \cdots w_{n-1}$ matches $g$. Formally, it is defined inductively by 
\begin{itemize}
\item $\Rexp(\phi,w,\alpha) = \set{(n, n+1) \mid w_n \models \phi}$ for propositional~$\phi$,
\item $\Rexp(\varphi?,w,\alpha) = \set{(n, n) \mid (w, n, \alpha) \models \varphi}$,
\item $\Rexp(g_0 + g_1, w, \alpha) = \Rexp(g_0, w, \alpha) \cup \Rexp(g_1, w, \alpha)$,
\item $\Rexp(g_0 \conc g_1, w, \alpha) = \set{(n_0, n_2) \mid \exists n_1 \text{ s.t. }(n_0,n_1)\in \Rexp(g_0, w, \alpha) \text{ and }  (n_1, n_2) \in \Rexp(g_1, w, \alpha)}$, and
\item $\Rexp(g^*, w, \alpha) = \set{(n,n) \mid n\in\nats} \cup$ \newline$ \set{(n_0, n_{k+1}) \mid \exists n_1, \ldots, n_{k} \text{ s.t. }  (n_j, n_{j+1}) \in \Rexp(g, w, \alpha) \text{ for all } j \le k}$.
\end{itemize}

Then, for atomic formulas and Boolean connectives it is defined as for $\prompt$, while for the four temporal operators, it is defined as follows:
\begin{itemize}

\item $(w, n, \alpha) \models \ddiamond{g}\varphi$ if there exists $j \ge 0$ such that $(n, n+j) \in \Rexp(g, w, \alpha)$ and $(w, n+j, \alpha) \models \varphi$, 

\item $(w, n, \alpha) \models \bbox{g}\varphi$ if for all $j \ge 0$, with $(n, n+j) \in \Rexp(g, w, \alpha)$, we have $(w, n+j, \alpha) \models \varphi$,

\item $(w, n, \alpha) \models \ddiamondle{g}{z}\varphi$ if there exists $j \le \alpha(z)$ such that $(n, n+j) \in \Rexp(g, w, \alpha)$ and $(w, n+j, \alpha) \models \varphi$, and

\item $(w, n, \alpha) \models \bboxle{g}{z}\varphi$ if for all $j \le \alpha(z)$ with $(n, n+j) \in \Rexp(g, w, \alpha)$, we have $(w, n+j, \alpha) \models \varphi$.

\end{itemize}

Again, we restrict ourselves to well-formed formulas, i.e., those formulas~$\varphi$ with $\vardiamond(\varphi) \cap \varbox(\varphi) = \emptyset$.
%\todo{do we need this?}
%\begin{itemize}
%	\item $\varphi$ is a $\pldldiamond$ formula, if it does not contain a parameterized box operator. 
%	\item $\varphi$ is a $\pldlbox$ formula, if it does not contain a parameterized diamond operator \emph{and} if its negation is a $\pldldiamond$ formula. 
%\end{itemize}
With this restriction, Lemma~\ref{lemma_monotonicity} holds for $\pldl$, too. 

\begin{lemma}
\label{lemma_monotonicity_pldlc}
Let $\varphi$ be a $\pldl$ formula and let $\alpha$ and
$\beta$ be variable valuations satisfying $\alpha ( x ) \le \beta ( x)$, for
each $x \in \vardiamond( \varphi)$, and $\alpha ( y ) \ge \beta ( y)$, for each $y \in \varbox( \varphi)$. If $(w, \alpha) \models \varphi$, then $(w, \beta) \models
\varphi$.
\end{lemma}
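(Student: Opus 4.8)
The plan is to prove the statement by structural induction on $\varphi$, closely mirroring the proof of Lemma~\ref{lemma_monotonicity} for $\pltl$, the new difficulty being that guards may contain tests whose truth depends on the valuation. The atomic cases $a,\neg a$ are immediate (their satisfaction is valuation-independent), and the Boolean cases follow directly from the induction hypothesis applied to the immediate subformulas. For all four temporal operators the body occurs positively, so the induction hypothesis supplies its monotonicity in the same direction; the genuinely new work is to control how the guard relation $\Rexp(g,w,\cdot)$ changes when we pass from $\alpha$ to $\beta$.

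First I would isolate an auxiliary claim about guards, proved by induction on the structure of $g$: if every test $\xi?$ occurring in $g$ satisfies $\{n : (w,n,\alpha)\models\xi\} \subseteq \{n : (w,n,\beta)\models\xi\}$, then $\Rexp(g,w,\alpha) \subseteq \Rexp(g,w,\beta)$, together with the symmetric statement for the reverse inclusions. The cases $\phi$ (propositional), $g_0+g_1$, $g_0\conc g_1$, and $g^*$ all propagate the inclusion monotonically, so the substantive case is the test $\xi?$, where the hypothesis is used verbatim. Granting this claim, the (bounded and unbounded) diamond cases follow by taking a witness position $j$ under $\alpha$: the forward guard inclusion keeps $(n,n+j)$ a match under $\beta$, the enlarged range $\{0,\dots,\beta(x)\}\supseteq\{0,\dots,\alpha(x)\}$ keeps $j$ admissible, and the induction hypothesis transports $(w,n+j,\alpha)\models\varphi$ to $(w,n+j,\beta)\models\varphi$. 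Dually, for the box cases one shows that every position that must be verified under $\beta$ was already verified under $\alpha$, using the \emph{reverse} guard inclusion and the shrinking range $\{0,\dots,\beta(y)\}\subseteq\{0,\dots,\alpha(y)\}$, followed again by monotonicity of the (positive) body.

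The main obstacle, and the point where the well-formedness hypothesis is indispensable, is that a test inside the guard of a box operator occurs \emph{negatively}: enlarging the set of positions at which the test holds enlarges the set of positions the box must check, making the box harder to satisfy. Hence the box case needs the reverse guard inclusion, which in turn requires the tests inside that guard to be \emph{anti}-monotone, i.e.\ the induction hypothesis must be invoked with $\alpha$ and $\beta$ interchanged. To keep this bookkeeping consistent I would strengthen the induction hypothesis to a dual statement: subformulas in a positive position (enclosed in an even number of box-guard tests) are monotone with respect to $(\alpha,\beta)$, while those in a negative position are monotone with respect to the swapped pair $(\beta,\alpha)$. Since the logic is in negation normal form, the only source of a polarity flip is precisely a test inside a box guard, and the membership of variables in $\vardiamond(\varphi)$ and $\varbox(\varphi)$ is computed through tests in exactly this polarity-aware fashion. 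The well-formedness condition $\vardiamond(\varphi)\cap\varbox(\varphi)=\emptyset$ then guarantees that no variable is ever required to be simultaneously non-decreasing and non-increasing under the interchange, so the directional assumptions $\alpha(x)\le\beta(x)$ and $\alpha(y)\ge\beta(y)$ stay compatible after every swap, and the dual induction closes.
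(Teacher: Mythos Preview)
The paper offers no proof of this lemma; it is merely asserted as the $\pldl$ analogue of the $\pltl$ monotonicity lemma. Your structural-induction plan with polarity tracking is the natural approach and correctly isolates the only real obstacle: a test nested in the guard of a box operator sits in negative position, so handling it requires applying the induction hypothesis with $\alpha$ and $\beta$ swapped.

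The gap is your assertion that ``the membership of variables in $\vardiamond(\varphi)$ and $\varbox(\varphi)$ is computed through tests in exactly this polarity-aware fashion.'' The paper's definitions are purely syntactic, by direct analogy with $\varF$ and $\varG$ for $\pltl$; the phrase ``taking sub-formulas in tests into account'' only says to recurse into tests, not that the classification flips at box guards. Under that reading a variable on a parameterized diamond inside a box-guard test still lands in $\vardiamond(\varphi)$, so after your swap the negative-position subformula demands $\beta(x)\le\alpha(x)$ while the global hypothesis supplies only $\alpha(x)\le\beta(x)$, and the dual induction does not close. Worse, with the syntactic definitions the lemma itself fails: for $\varphi=\bbox{(\ddiamondle{\true^*}{x}a)?}\,b$ (well-formed, since $\vardiamond(\varphi)=\{x\}$ and $\varbox(\varphi)=\emptyset$), $w=\emptyset\,\{a\}\,\emptyset^\omega$, $\alpha(x)=0$, $\beta(x)=1$, one has $(w,\alpha)\models\varphi$ vacuously (the test at position~$0$ fails) but $(w,\beta)\not\models\varphi$ (the test now succeeds, forcing $b$ at position~$0$). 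Your argument would go through verbatim if $\vardiamond$ and $\varbox$ were made genuinely polarity-aware, or if parameterized operators were barred from tests; but as the paper stands, the step you describe as ``closing'' the dual induction is exactly where it breaks.
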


Recall that the alternating color technique for $\prompt$ replaces every prompt-eventually operator~$\Fp \psi$ by a formula that expresses that $\psi$ holds within one color change. In $\ltl$, this is naturally expressed by two nested until operators. In $\pldl$, parameterized diamond operators, which are the analogues of prompt-eventually operators, are guarded by regular expressions, and, thus, one has to express that both the guard holds and at most one color change occurs. The simplest way to do it is to introduce a change point bounded variant of the diamond-operator (see~\cite{journals/iandc/FaymonvilleZ17}). 

Formally, we add the operator~$\ddiamondcp{\cdot}{r}$ with the following semantics:
\begin{itemize}
	\item $(w, n, \alpha) \models \ddiamondcp{g}{r}\psi$ if there exists a $j \in \nats$ s.t.\ $(n, n+j) \in \Rexp(g, w, \alpha)$, $w_n \cdots w_{n+j-1}$ contains at most one $r$-change point, and $(w, n + j, \alpha) \models \psi$.
\end{itemize}

Let $\ldlt$ be the logic obtained by disallowing parameterized operators but allowing the change point-bounded operator, whose semantics are independent of variable valuations. Hence, we drop them from our notation for the satisfaction relation~$\models$ and the relation $\Rexp$.

We need the following results from~\cite{journals/iandc/FaymonvilleZ17} which generalizes the replacement of $\pltl$ sub-formulas~$\G_{\le y}\psi$ by $\psi$ with respect to variable valuations mapping $y$ to zero. In $\pldl$, the situation is different, e.g., the formulas~$\bboxle{g}{y}\psi$ and $\psi$ are not necessarily equivalent with respect to variable valuations mapping $y$ to zero, e.g., if $r = \varphi?$ is a test. This test has to be satisfied, even if $\alpha(y) = 0$. However, one can easily simplify the guard~$g$ to a guard~$\widehat{g}$ that captures~$g$ when restricted to matchings of length zero. 

\begin{lemma}[\cite{journals/iandc/FaymonvilleZ17}]
	\label{lemma_removeboxes}
For every $\pldl$ formula~$\varphi$ there is an efficiently constructible $\pldl$ formula~$\varphi'$ without paramterized box operators whose size is at most the size of $\varphi$ such that
\begin{enumerate}
	
	\item $\vardiamond(\varphi) = \vardiamond(\varphi')$,
	
	\item for each $\alpha$ and each $w$,  $(w, \alpha) \models \varphi$ implies $(w, \alpha) \models \varphi'$, and
	
	\item for each $\alpha$ and each $w$, $(w, \alpha) \models \varphi'$ implies $(w, \alpha_0) \models \varphi$.	

\end{enumerate}
In the third item, $\alpha_0$ is the valuation mapping each $x \in \vardiamond(\varphi)$ to $\alpha(x)$ and each other variable to $0$.
\end{lemma}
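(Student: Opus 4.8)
The plan is to reduce every parameterized box operator to an ordinary (unparameterized) box over a syntactically simplified guard that captures only the \emph{length-zero} matchings of the original guard, exploiting that setting a box variable to $0$ retains precisely the matchings of length zero.

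First I would fix the construction of $\varphi'$. I transform $\varphi$ top-down, leaving atoms, Boolean connectives, both diamond operators and the unparameterized box in place (recursively transforming their sub-formulas and the tests occurring in their guards), and replacing every sub-formula $\bboxle{g}{y}\psi$ by $\bbox{\widehat{g}}\psi'$, where $\psi'$ is the transform of $\psi$ and $\widehat{g}$ is the length-zero simplification of the (transformed) guard, defined inductively by
\begin{itemize}
  \item $\widehat{\phi} = \false?$ for propositional $\phi$,
  \item $\widehat{\chi?} = \chi'?$ with $\chi'$ the transform of $\chi$,
  \item $\widehat{g_0 + g_1} = \widehat{g_0} + \widehat{g_1}$ and $\widehat{g_0 \conc g_1} = \widehat{g_0}\conc\widehat{g_1}$, and
  \item $\widehat{g^*} = (\widehat{g})^*$.
\end{itemize}
Since $\widehat{\cdot}$ retains every test (it only collapses propositional letters and stars, and $\false$/$\true$ carry no variables), no parameterized diamond is ever discarded, so $\vardiamond(\varphi') = \vardiamond(\varphi)$, which gives item~1.

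Then I would prove the guard invariant $\Rexp(\widehat{g}, w, \alpha) = \set{(n,n) \mid (n,n) \in \Rexp(g, w, \alpha)}$, i.e.\ that $\widehat{g}$ matches exactly the length-zero matchings of $g$, by structural induction on $g$. The key facts are that every matching $(m,n)$ satisfies $m \le n$, so a length-zero matching of $g_0 \conc g_1$ forces both factors to match with length zero at the same position, and that $g^*$ (hence $(\widehat{g})^*$) always matches every pair $(n,n)$ through the empty iteration. With this invariant, items~2 and~3 follow by structural induction on $\varphi$, the only interesting case being the box, where $\bbox{\widehat{g}}\psi'$ evaluates $\psi'$ at exactly the length-zero matchings of $g$. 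For item~2 I use $0 \le \alpha(y)$, so these positions are among those already constrained by $\bboxle{g}{y}\psi$; for item~3 I use $\alpha_0(y)=0$, so under $\alpha_0$ the operator $\bboxle{g}{y}\psi$ constrains precisely the length-zero matchings, matching $\bbox{\widehat{g}}\psi'$. Lemma~\ref{lemma_monotonicity_pldlc} then bridges the remaining gap between the bound and the shifted valuation $\alpha_0$.

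The main obstacle is the treatment of tests occurring inside guards. Since $\Rexp$ depends on $\alpha$ only through such tests, and since the guard of a box operator is \emph{contravariant} in its set of matchings (enlarging the matched relation makes the box harder to satisfy), the recursive transformation of tests cannot be applied uniformly: it must respect the polarity induced by the surrounding diamond and box operators, so that the forward implication of item~2 and the backward implication of item~3 close simultaneously in combination with Lemma~\ref{lemma_monotonicity_pldlc} and the valuation shift $\alpha_0$. The accompanying size bound is then a routine syntactic accounting, as each rewriting step removes at least as much structure as it adds (a starred or propositional guard is collapsed to a single test), whence $\card{\varphi'} \le \card{\varphi}$.
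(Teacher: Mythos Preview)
The paper does not prove this lemma; it quotes it from~\cite{journals/iandc/FaymonvilleZ17} and only remarks informally that ``one can easily simplify the guard~$g$ to a guard~$\widehat{g}$ that captures~$g$ when restricted to matchings of length zero.'' Your construction of $\widehat{g}$ is precisely this idea, so at the level of approach you are aligned with the paper and, presumably, with the cited source.

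The obstacle you flag at the end is a genuine gap, not a cosmetic one. With $\widehat{\chi?}=\chi'?$ as you define it, item~2 already fails for $\varphi=\bboxle{(\bboxle{a}{y'}b)?}{y}c$: the inner box transforms to $\bbox{\false?}b\equiv\true$, whence $\varphi'\equiv c$; yet on $w=\{a\}^\omega$ with $\alpha(y')\ge 1$ the inner test~$\bboxle{a}{y'}b$ is false at every position, so the outer box is vacuous and $(w,\alpha)\models\varphi$ while $(w,\alpha)\not\models c$. The point is that a test inside a box guard sits in \emph{negative} polarity, so replacing $\chi$ by the weaker $\chi'$ enlarges the matching relation and \emph{strengthens} the enclosing box rather than weakening it. You correctly observe that the transformation ``must respect the polarity'' but do not say how; that is exactly the missing ingredient. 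Closing the induction requires either a polarity-indexed pair of mutually recursive transformations, or a polarity-aware definition of $\varbox$ and $\vardiamond$ (so that well-formedness and Lemma~\ref{lemma_monotonicity_pldlc} already track the effective monotonicity direction of each variable occurrence, which the same example shows is needed). Without committing to one of these, your sketch does not go through.
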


Note that the formulas~$\varphi$ and $\varphi'$ as above are equivalent, if the variable valuation is existentially quantified.

Now, given such a $\pldl$ formula~$\varphi$, let $\rel_r(\varphi)$ denote the formula obtained from the formula~$\varphi'$ as in Lemma~\ref{lemma_removeboxes} by inductively replacing each sub-formula~$\ddiamondle{g}{x}\psi$ by $\ddiamondcp{g}{r}\psi$. Furthermore, let $\alt_r = \bbox{\true^*}\ddiamond{\true^*}r \wedge \bbox{true^*}\ddiamond{\true^*}\neg r$, which is equivalent to the $\ltl$ formula~$\GF r \wedge \GF \neg r$ from above. Now, define $c_r(\varphi) = \rel_r(\varphi) \wedge \alt_r$, which is an $\ldlt$ formula.

\begin{lemma}[\cite{journals/iandc/FaymonvilleZ17}]
\label{lemma_alternatingcolor_pldl}
Let $\varphi$ be a $\pldl$ formula and let $w \in \left( \pow{\ap} \right)^{ \omega }$.
\begin{enumerate}
  \item \label{lemma_alternatingcolor_pldl2ldlt}
If $(w,\alpha)\models \varphi$, then $w' \models c_r(\varphi)$ for every $k$-spaced $r$-coloring~$w'$ of~$w$, where $k = \max_{x \in \vardiamond(\varphi)}\alpha(x)$.
  
  \item \label{lemma_alternatingcolor_ldlt2pldl}
If $w'$ is a $k$-bounded $r$-coloring of $w$ with $w' \models c_r(\varphi)$, then $(w,\alpha)\models\varphi$, where $\alpha$ maps each $x \in \vardiamond(\varphi)$ to $2k$ and each other variable to zero. 
\end{enumerate}
\end{lemma}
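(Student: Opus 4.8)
The plan is to mirror the structure of the $\prompt$ argument (Lemma~\ref{lemma_alternatingcolor}), first eliminating the parameterized box operators and then relating each parameterized diamond to its change-point variant by two block-counting arguments. Since $\rel_r(\varphi)$ is defined on the box-free formula $\varphi'$ supplied by Lemma~\ref{lemma_removeboxes}, I would reduce both items to a statement about $\varphi'$ and $\rel_r(\varphi')$ and then compose with Lemma~\ref{lemma_removeboxes}. Concretely, for item~\ref{lemma_alternatingcolor_pldl2ldlt} I would start from $(w,\alpha)\models\varphi$, use item~2 of Lemma~\ref{lemma_removeboxes} to obtain $(w,\alpha)\models\varphi'$, prove $w'\models\rel_r(\varphi')$ for every $k$-spaced coloring, and note that $\alt_r$ holds because $k$-spacedness already demands infinitely many color changes; for item~\ref{lemma_alternatingcolor_ldlt2pldl} I would read off $w'\models\rel_r(\varphi')$ from $w'\models c_r(\varphi)$, establish $(w,\beta)\models\varphi'$ with $\beta$ mapping each $x\in\vardiamond(\varphi)$ to $2k$, and apply item~3 of Lemma~\ref{lemma_removeboxes} to conclude $(w,\alpha)\models\varphi$ for the valuation $\alpha$ claimed in the statement.

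The heart of the argument is a single structural induction over $\varphi'$ proving, for every subformula $\chi$ and every position $n$, the implication $(w,n,\alpha)\models\chi \Rightarrow (w',n)\models\rel_r(\chi)$ in the first case, and the reverse implication $(w',n)\models\rel_r(\chi)\Rightarrow(w,n,\beta)\models\chi$ in the second. The only genuinely new case is the parameterized diamond $\ddiamondle{g}{x}\psi$, and it rests on two elementary counting facts about $r$-blocks. In the first direction a witnessing match has length $j\le\alpha(x)\le k$; as $w'$ is $k$-spaced, adjacent change points are at least $k$ apart, so an infix of length at most $k$ meets at most one change point, and the same witness certifies $\ddiamondcp{g}{r}\rel_r(\psi)$. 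In the second direction a witness for $\ddiamondcp{g}{r}$ spans at most one change point and hence lies within two adjacent $r$-blocks; as $w'$ is $k$-bounded each block has length at most $k$, so the witness has length at most $2k=\beta(x)$ and thus certifies $\ddiamondle{g}{x}\psi$. The Boolean cases are immediate, and the non-parameterized diamond and box cases amount to transporting a match of $g$ between $w$ and the relativized guard $\rel_r(g)$ on $w'$.

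I expect the main obstacle to be exactly this transport of matches through the guards, which forces the induction on formulas to be run simultaneously with an induction on guards showing that $\rel_r$ commutes with the match relation, i.e.\ relating $\Rexp(g,w,\alpha)$ and $\Rexp(\rel_r(g),w')$. Propositional atoms are unproblematic because $w$ and $w'$ agree on $\ap$, and union, concatenation, and Kleene star propagate the correspondence; the delicate point is a \emph{test} $\psi?$ inside $g$, whose match at a position is governed by the very satisfaction relation the outer induction is establishing, and whose relativization $\rel_r(\psi)?$ must match at the same positions. Since the box operator quantifies universally over matches, one needs the match correspondence in the direction opposite to the satisfaction implication being proved, so the two directions of the induction must be threaded together with the bound passing from $\alpha$ to $\beta$ (invoking the monotonicity of Lemma~\ref{lemma_monotonicity_pldlc} where the bounds have to be reconciled). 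This coupling, together with keeping the change-point bookkeeping consistent between a guard and its enclosing diamond, is the technically subtle part, and I would follow the careful accounting of~\cite{journals/iandc/FaymonvilleZ17} to discharge it.
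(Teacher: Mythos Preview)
The paper does not prove this lemma at all: it is stated with the citation~\cite{journals/iandc/FaymonvilleZ17} and no proof is given, so there is nothing to compare your proposal against here. Your sketch is nonetheless the right reconstruction of how the result is obtained in the cited work---reduce to the box-free $\varphi'$ via Lemma~\ref{lemma_removeboxes}, run a simultaneous structural induction on formulas and guards, and in the parameterized-diamond case use the two block-counting observations ($k$-spaced $\Rightarrow$ at most one change point in any window of length~$\le k$; $k$-bounded plus one change point $\Rightarrow$ window of length~$\le 2k$)---and you have correctly identified the one delicate spot, namely that tests force the induction on guards and the induction on formulas to be mutually recursive, with the box case requiring the opposite direction of the match-relation correspondence.
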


Finally, the exponential compilation property holds for $\ldlt$ as well: every $\ldlt$ formula can be translated into an equivalent non-determinstic B\"uchi automaton of exponential size~\cite{journals/iandc/FaymonvilleZ17}. 

Now, the (synchronous) $\pldl$  distributed  synthesis problem is defined as its analogue for $\pltl$. Let $\arch = \tuple{P,\penv,\{I_p\}_{p \in P}, \{O_p\}_{p \in P}}$ be an architecture.
Then, the \emph{synchronous $\pldl$ realizability problem for $\arch$} is the problem of deciding, given a $\pldl$ formula $\varphi$, whether there exist a variable valuation~$\alpha$ and a finite-state implementation $f_p$ for each process $p \in P^-$, such that the distributed product $\Distprod_{p \in P^-} f_p$ satisfies $\varphi$ with respect to $\alpha$, i.e., $(\Distprod_{p \in P^-} f_p,\alpha) \models \varphi$. In this case, we say that $\varphi$ is realizable in $\arch$.

\begin{theorem}
\label{thm_pldlsynchsynt}
  Let $\arch$ be an architecture.
  The synchronous $\pldl$ realizability problem for $\arch$ is decidable if, and only if, $\arch$ is weakly ordered.
\end{theorem}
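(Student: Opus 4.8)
The plan is to mirror the development for $\prompt$ (Theorem~\ref{thm_synchr_prompt2ltl}, the subsequent information-fork lemma, and Corollary~\ref{coro_promptsynthesis}), replacing the $\ltl$-based alternating color technique by its $\pldl$ counterpart from Lemma~\ref{lemma_alternatingcolor_pldl}. The first and main step is to establish the $\pldl$ analogue of Theorem~\ref{thm_synchr_prompt2ltl}: a $\pldl$ formula $\varphi$ is realizable in $\arch$ if, and only if, the $\ldlt$ formula $c_r(\varphi)$ is realizable in $\arch^r$, where $\arch^r$ is obtained from $\arch$ by adding the coloring process $p_r$ with $I_r = \emptyset$ and $O_r = \set{r}$ exactly as in Section~\ref{sec:synchronous_distributed_synthesis}.

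For the forward direction, assume $\varphi$ is realizable, witnessed by finite-state strategies $f_p$ for $p \in P^-$ and a valuation $\alpha$. Set $k = \max_{x \in \vardiamond(\varphi)} \alpha(x)$. By item~\ref{lemma_alternatingcolor_pldl2ldlt} of Lemma~\ref{lemma_alternatingcolor_pldl}, every $k$-spaced $r$-coloring of a path of $\Distprod_{p \in P^-} f_p$ satisfies $c_r(\varphi)$; hence adding a process $p_r$ whose (input-free) finite-state strategy $f_r$ outputs the $k$-spaced word $(\emptyset^k \set{r}^k)^\omega$ realizes $c_r(\varphi)$ in $\arch^r$. For the backward direction, assume $c_r(\varphi)$ is realizable in $\arch^r$ by finite-state strategies $f_p$ and $f_r$. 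As in the proof of Theorem~\ref{thm_synchr_prompt2ltl}, $f_r$ is input-free, so its unique output $w_r$ is eventually periodic; and because $c_r(\varphi)$ forces infinitely many change points through its conjunct $\alt_r$, the coloring $w_r$ is $k$-bounded, where $k$ is the number of states generating $f_r$. Then item~\ref{lemma_alternatingcolor_ldlt2pldl} of Lemma~\ref{lemma_alternatingcolor_pldl} yields $(w,\alpha) \models \varphi$ for the valuation mapping each $x \in \vardiamond(\varphi)$ to $2k$ and every other variable to $0$, so $\varphi$ is realizable in $\arch$.

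It then remains to turn this equi-realizability into the decidability dichotomy. Since $c_r(\varphi)$ is an $\ldlt$ formula and $\ldlt$ enjoys the exponential compilation property, $c_r(\varphi)$ translates into an equivalent nondeterministic B\"uchi automaton; hence realizability of $c_r(\varphi)$ in $\arch^r$ is an instance of distributed realizability for an $\omega$-regular specification, which by Finkbeiner and Schewe~\cite{conf/lics/FinkbeinerS05} is decidable exactly when $\arch^r$ is weakly ordered. The information-fork lemma of Section~\ref{sec:synchronous_distributed_synthesis} shows that $\arch^r$ is weakly ordered if, and only if, $\arch$ is, and its proof is insensitive to the specification logic, so it applies verbatim. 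Combining these facts with the equi-realizability result proves the \emph{if} direction. For the \emph{only if} direction, observe that $\ltl$ is a fragment of $\pldl$, so $\ltl$ realizability in $\arch$ reduces trivially to $\pldl$ realizability in $\arch$; since the former is undecidable whenever $\arch$ is not weakly ordered, so is the latter.

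I expect the main obstacle to be the bookkeeping in the backward direction of the equi-realizability lemma, namely justifying that a finite-state input-free coloring process necessarily yields a $k$-bounded coloring with $k$ bounded by the size of the coloring automaton, so that item~\ref{lemma_alternatingcolor_ldlt2pldl} of Lemma~\ref{lemma_alternatingcolor_pldl} can be invoked with a concrete valuation. This is essentially the argument already made for $\prompt$, but here one must additionally confirm that $\alt_r$, now expressed as an $\ldlt$ formula, still forces infinitely many change points, which is immediate from its stated equivalence to $\GF r \wedge \GF \neg r$.
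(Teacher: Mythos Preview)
Your proposal is correct and follows essentially the same approach as the paper. The paper's own proof is a two-sentence sketch stating that Theorem~\ref{thm_synchr_prompt2ltl} carries over to $\pldl$ verbatim and that the information-fork criterion from~\cite{conf/lics/FinkbeinerS05} applies to $\omega$-regular specifications; you have simply unpacked exactly these two steps, including the explicit backward-direction argument about the input-free coloring process producing a $k$-bounded coloring and the explicit appeal to the $\ldlt$ exponential-compilation property and the information-fork lemma.
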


\begin{proof}
Theorem~\ref{thm_synchr_prompt2ltl} holds for $\pldl$ as well, using the same proof: a $\pldl$ formula $\varphi$ is realizable in $\arch$ if, and only if, $c_r(\varphi)$ is realizable in $\arch^r$. Now, the information fork criterion holds for $\omega$-regular conditions as well~\cite{conf/lics/FinkbeinerS05}, which finishes the proof.
\end{proof}

Also, bounded synthesis is again applicable, as we can also translate the relativized PLDL formulas into universal co-B\"uchi automata.

\subsection{Asynchronous Distributed Synthesis for PLDL}

Finally, we consider the asynchronous setting. We focus on $\pldl$, as $\pltl$ is a fragment of $\pldl$ and the approach for both problems is similar.

As for the asynchronous $\prompt$ realizability problem, we require the implementations to only change their state if they are scheduled. Here, a $\pldl$ assume-guarantee specification~$\tuple{\varphi,\psi}$ consists of a pair of $\pldl$ formulas.
The asynchronous $\pldl$ assume-guarantee realizability problem asks, given an asynchronous architecture~$\arch^*$ and $\tuple{\varphi,\psi}$ as above, whether there exists a finite-state implementation~$f_p$, for each process $p \in P^-$, such that for each variable valuation~$\alpha$ there is a variable valuation~$\beta$ such that for each $w \in \Distprod_{p \in P^-} f_p$, we have that $(w, \alpha) \models \varphi$ implies $(w, \beta) \models \psi$. In this case, we say that $\Distprod_{p \in P^-} f_p$ satisfies $\tuple{\varphi,\psi}$.

To solve the problem, we use the framework of bounded synthesis and emptiness checking for B\"uchi graphs as  presented for $\prompt$ in Section~\ref{sec:asynchronous_distributed_synthesis}. In particular, we adapt the notation introduced in Subsection~\ref{sec:semi-algorith-ag}, e.g., the product system~$\tsys = \Distprod_{p \in \pminus} \tsys_p$. Our semi-decision procedure again guesses implementations and then model checks whether their product~$\tsys$ satisfies the assume-guarantee specification, based on a characterization in terms of $\tsys$ being pumpable non-empty. To this end, we have to lift Lemma~\ref{lemma_pumpabilitycharac} to $\pldl$, which again requires to remove parameterized box operators. Once more, we rely on monotonicity, but due to the quantifier alternation and the implication between $\varphi$ and $\psi$, the application is not completely trivial. Given the assumption~$\varphi$, let $\varphi'$ be the formula as described in Lemma~\ref{lemma_removeboxes}, which has no parameterized box operators. The formula~$\psi'$ is defined similarly.

\begin{lemma}
Let $\tsys$, $\varphi'$, and $\varphi'$ as above. Then, $\tsys$ satisfies $\tuple{\varphi,\psi}$ if, and only if, $\tsys$ satisfies $\tuple{\varphi',\psi'}$.
\end{lemma}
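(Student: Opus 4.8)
The plan is to derive both directions of the equivalence by chaining the two implications supplied by Lemma~\ref{lemma_removeboxes}, once for the assumption pair $\varphi,\varphi'$ and once for the guarantee pair $\psi,\psi'$, while keeping careful track of the quantifier alternation hidden in the definition of assume-guarantee satisfaction. Unfolding that definition, $\tsys$ satisfies $\tuple{\varphi,\psi}$ means that for every variable valuation~$\alpha$ there is a variable valuation~$\beta$ such that every path~$w$ of $\tsys$ with $(w,\alpha)\models\varphi$ also satisfies $(w,\beta)\models\psi$, and analogously for $\tuple{\varphi',\psi'}$. For each of the two formula pairs I will use the two facts from Lemma~\ref{lemma_removeboxes}: (i) $(w,\gamma)\models\varphi$ implies $(w,\gamma)\models\varphi'$ for every~$\gamma$, and (ii) $(w,\gamma)\models\varphi'$ implies $(w,\gamma_0)\models\varphi$, where $\gamma_0$ agrees with $\gamma$ on $\vardiamond(\varphi)$ and maps every other variable to~$0$; the same two facts hold for the pair $\psi,\psi'$.

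For the forward direction I would assume $\tsys$ satisfies $\tuple{\varphi,\psi}$ and fix an arbitrary~$\alpha$. The key step is to instantiate the hypothesis not at $\alpha$ but at its box-zeroed version~$\alpha_0$ (as in~(ii) for $\varphi$), obtaining a witness~$\beta$, and then to show that this very~$\beta$ also witnesses $\tuple{\varphi',\psi'}$ at~$\alpha$. Indeed, for any path~$w$ with $(w,\alpha)\models\varphi'$, fact~(ii) for $\varphi$ yields $(w,\alpha_0)\models\varphi$, the hypothesis then gives $(w,\beta)\models\psi$, and fact~(i) for $\psi$ upgrades this to $(w,\beta)\models\psi'$. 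Chaining these three implications establishes the goal for~$\alpha$.

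For the backward direction I would assume $\tsys$ satisfies $\tuple{\varphi',\psi'}$ and fix~$\alpha$. This time the hypothesis is instantiated at the same~$\alpha$ to obtain~$\beta$, but the witness I hand back for $\tuple{\varphi,\psi}$ is the box-zeroed version~$\beta_0$ of~$\beta$ (as in~(ii) for $\psi$). For any path~$w$ with $(w,\alpha)\models\varphi$, fact~(i) for $\varphi$ gives $(w,\alpha)\models\varphi'$, the hypothesis gives $(w,\beta)\models\psi'$, and fact~(ii) for $\psi$ finally gives $(w,\beta_0)\models\psi$; since $\beta_0$ depends only on~$\beta$, hence only on~$\alpha$ and not on~$w$, it is a legitimate witness in the $\forall\alpha\,\exists\beta$ structure.

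I expect the only genuine obstacle to be this quantifier bookkeeping, since the two directions are deliberately asymmetric: the hypothesis is instantiated at the box-zeroed $\alpha_0$ in the forward direction but at $\alpha$ itself in the backward direction, and it is the guarantee witness that must be box-zeroed to~$\beta_0$ in the backward direction. This asymmetry is precisely what monotonicity (Lemma~\ref{lemma_monotonicity_pldlc}) predicts: since a parameterized box becomes harder to satisfy as its bound grows, the weakest instance of the antecedent~$\varphi$---the relevant one under the universal quantifier over~$\alpha$---and the easiest instance of the consequent~$\psi$---the relevant one under the existential quantifier over~$\beta$---are both obtained by zeroing the box bounds, which is exactly the effect of passing to $\varphi'$ and $\psi'$.
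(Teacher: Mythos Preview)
Your proof is correct and follows essentially the same approach as the paper: both directions chain the two implications of Lemma~\ref{lemma_removeboxes} for $\varphi$ and for $\psi$, instantiating the hypothesis at the box-zeroed $\alpha_0$ in the forward direction and returning the box-zeroed $\beta_0$ as witness in the backward direction. Your explicit remark that $\beta_0$ depends only on $\alpha$ and not on $w$ is a point the paper leaves implicit (indeed the paper states the backward claim with witness $\beta(\alpha)$ but then actually derives $(\beta(\alpha))_0$, exactly as you do).
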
 

\begin{proof}
Let $f$ denote the strategy generated by $\tsys$. 

For the implication from left to right, let $\tsys$ satisfy $\tuple{\varphi, \psi}$, i.e., for each $\alpha$ there is a $\beta$ such that for all $w \in f$: $(w, \alpha) \models \varphi$ implies $(w, \beta) \models \psi$. As $\beta$ depends on $\alpha$, we write $\beta(\alpha)$ to make the dependency clear.

Now, given some arbitrary $\alpha$ let $\alpha_0$ denote the variable valuation mapping each $x \in \vardiamond(\varphi) = \vardiamond(\varphi')$ to $\alpha(x)$ and each other variable to $0$. We claim that $(w, \alpha) \models \varphi'$ implies $(w, \beta(\alpha_0)) \models \psi'$ for all $w \in f$, which implies that $\tsys$ satisfies $\tuple{\varphi',\psi'}$.

Thus, assume the assumption is satisfied, i.e., $(w, \alpha) \models \varphi'$. Then, we also have $(w, \alpha_0) \models \varphi$ by Lemma~\ref{lemma_removeboxes}. Thus, $(w, \beta(\alpha_0)) \models \psi$, which implies $(w, \beta(\alpha_0)) \models \psi'$, again by Lemma~\ref{lemma_removeboxes}.

For the other implication, let $\tsys$ satisfy $\tuple{\varphi', \psi'}$, i.e., for each $\alpha$ there is a $\beta$ such that for all $w \in f$: $(w, \alpha) \models \varphi'$ implies $(w, \beta) \models \psi'$. Again, as $\beta$ depends on $\alpha$, we write $\beta(\alpha)$ to make the dependency clear.

We claim that $(w, \alpha) \models \varphi$ implies $(w, \beta(\alpha)) \models \psi$ for all $w \in f$, which implies that $\tsys$ satisfies $\tuple{\varphi,\psi}$.

Thus, assume the assumption is satisfied, i.e., $(w, \alpha) \models \varphi$. Then, we also have  $(w, \alpha)  \models \varphi'$ by Lemma~\ref{lemma_removeboxes}. Thus, $(w, \beta(\alpha)) \models \psi'$, which implies  $(w, (\beta(\alpha))_0) \models \psi$, again by Lemma~\ref{lemma_removeboxes}. Here, $(\beta(\alpha))_0$ maps each variable in $\vardiamond(\psi) = \vardiamond(\psi')$ to $(\beta(\alpha))(x)$ and each other variable to $0$.
\end{proof}

To simplify the notation we can assume that $\varphi$ and $\psi$ do not contain any parameterized box operators. Thus, the alternating color technique is applicable to them. Also, there is a non-deterministic B\"uchi  automaton $\nbw_{\overline{c}_{r'}(\psi) \land c_r(\varphi)} = \tuple{2^{I \cup O  \cup \set{r,r'}}, Q, q_0, \delta, \bwin}$, where $\overline{c}_{r'}(\psi) = \alt_{r'} \land \neg\rel_{r'}(\psi)$ whose language contains exactly those paths that satisfy ${\overline{c}_{r'}(\psi) \land c_r(\varphi)}$~\cite{journals/iandc/FaymonvilleZ17}. Then, Lemma~\ref{lemma_pumpabilitycharac} holds in this setting as well.

\begin{lemma}
\label{lemma_pumpabilitycharac}
	Let $\tuple{\varphi,\psi}$ be a PLDL assume-guarantee specification, $\arch^*$ be an asynchronous architecture, and $\tsys_p$ be a finite-state implementation for each system process $p \in \pminus$.
	The distributed product $\tsys = \Distprod_{p \in \pminus} \tsys_p$ does not satisfy $\tuple{\varphi,\psi}$ if, and only if, the product of $\tsys$ and $\nbw_{\overline{c}_{r'}(\psi) \land c_r(\varphi)}$ is pumpable non-empty.
\end{lemma}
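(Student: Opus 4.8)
The plan is to mirror the proof of the corresponding lemma for $\prompt$ in Section~\ref{sec:asynchronous_distributed_synthesis}, replacing the $\prompt$ alternating color lemma by its $\pldl$ counterpart (Lemma~\ref{lemma_alternatingcolor_pldl}) and using that, after the preceding reduction, $\varphi$ and $\psi$ are free of parameterized box operators, so that $\rel_r(\varphi)$, $\rel_{r'}(\psi)$, and hence $c_r(\varphi)$ and $\overline{c}_{r'}(\psi)$ are well-defined $\ldlt$ formulas; by the exponential compilation property for $\ldlt$~\cite{journals/iandc/FaymonvilleZ17} these yield the automaton $\nbw_{\overline{c}_{r'}(\psi)\land c_r(\varphi)}$. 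First I would unfold non-satisfaction: writing $f$ for the strategy generated by $\tsys$, the product $\tsys$ does not satisfy $\tuple{\varphi,\psi}$ iff there is a valuation $\alpha$ such that for every valuation $\beta$ some $w\in f$ has $(w,\alpha)\models\varphi$ but $(w,\beta)\nmodels\psi$. Using monotonicity (Lemma~\ref{lemma_monotonicity_pldlc}) I would collapse both valuations to single diagonal bounds: since $\varphi$ is monotone in $\vardiamond(\varphi)$ I may take $\alpha$ to map every diamond variable to one value $k$, and since $(w,\beta)\nmodels\psi$ is preserved under decreasing $\beta$ it suffices to defeat the valuations mapping every variable in $\vardiamond(\psi)$ to a common value $l$ and let $l\to\infty$. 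Thus non-satisfaction is equivalent to: there is a bound $k$ such that for every $l$ some $w\in f$ satisfies $(w,k)\models\varphi$ and $(w,l)\nmodels\psi$ (identifying a diagonal valuation with its common value).

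For the direction from pumpable non-emptiness to non-satisfaction, I would take a pumpable accepting path $\pi$ in the two-color Büchi graph $G$ and read off a path $w\in f$ carrying an $r$- and an $r'$-coloring with $w\models c_r(\varphi)$ and $w\models\overline{c}_{r'}(\psi)$. Pumpability supplies, inside every $r'$-block, a repeated vertex enclosing an $r$-change; repeating these loops lengthens all $r'$-blocks while keeping the $r$-blocks bounded by a constant $k_0$ determined by $\card{G}$. Pumping until every $r'$-block has length at least $l$ yields, for each $l$, a word $w^{(l)}\in f$ whose $r'$-coloring is $l$-spaced and violates $\rel_{r'}(\psi)$; by the contrapositive of item~\ref{lemma_alternatingcolor_pldl2ldlt} of Lemma~\ref{lemma_alternatingcolor_pldl} this gives $(w^{(l)},l)\nmodels\psi$, while the $k_0$-bounded $r$-coloring together with $c_r(\varphi)$ gives $(w^{(l)},2k_0)\models\varphi$ by item~\ref{lemma_alternatingcolor_ldlt2pldl}. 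Hence the fixed bound $k=2k_0$ witnesses non-satisfaction.

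For the converse I would start from a bound $k$ as above, pick a single $l$ large compared with $\card{G}$, and take the witness $w\in f$ with $(w,k)\models\varphi$ and $(w,l)\nmodels\psi$. Choosing the canonical $k$-spaced $r$-coloring (blocks of length exactly $k$) gives $w\models c_r(\varphi)$ by item~\ref{lemma_alternatingcolor_pldl2ldlt}; and since $(w,l)\nmodels\psi$, the contrapositive of the same item furnishes an $l$-spaced $r'$-coloring with $w\models\overline{c}_{r'}(\psi)$. The doubly-colored word is accepted by $\nbw_{\overline{c}_{r'}(\psi)\land c_r(\varphi)}$ and hence induces an accepting path in $G$. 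Because $l$ exceeds $\card{G}$, inside each $r'$-block the pigeonhole principle forces a repeated vertex, and as the $r$-coloring alternates with period $2k$ there is an $r$-change between two such occurrences; this is exactly the pumpability condition, so the path is pumpable.

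The main obstacle I anticipate is the decoupling of the two colorings in the pumpability argument: one must pump the $r'$-blocks arbitrarily far (to falsify $\psi$ for every $l$) without simultaneously creating unbounded $r$-blocks (which would destroy satisfaction of $\varphi$ for the fixed $k$). This is precisely what the repeated-vertex-straddling-an-$r$-change condition of pumpable non-emptiness guarantees, and verifying that the pumped loops preserve both membership in $f$ (the loops close in the $\tsys$-component) and acceptance in $G$ is the delicate step. Everything else is bookkeeping transferred from the $\prompt$ development, with Lemma~\ref{lemma_alternatingcolor_pldl} substituting for the $\prompt$ alternating color lemma.
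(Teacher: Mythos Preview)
Your proposal is essentially correct and matches the argument the paper defers to via its references to \cite{journals/fmsd/KupfermanPV09} and \cite{journals/iandc/FaymonvilleZ17}; the paper itself gives no proof for the $\pldl$ version and only the one-line pointer for the $\prompt$ version, so your sketch is in fact more detailed than what the paper supplies.

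There is one small technical slip worth fixing. In the direction from non-satisfaction to pumpable non-emptiness you write that, from $(w,l)\nmodels\psi$, ``the contrapositive of the same item'' (i.e.\ item~\ref{lemma_alternatingcolor_pldl2ldlt} of Lemma~\ref{lemma_alternatingcolor_pldl}) furnishes an $l$-spaced $r'$-coloring with $w\models\overline{c}_{r'}(\psi)$. That contrapositive runs the wrong way: item~\ref{lemma_alternatingcolor_pldl2ldlt} has the form $(w,\alpha)\models\psi \Rightarrow$ (every $k$-spaced coloring satisfies $c_{r'}(\psi)$), so its contrapositive lets you conclude $(w,\alpha)\nmodels\psi$ \emph{from} a bad coloring, not the other way round. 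What you need is the contrapositive of item~\ref{lemma_alternatingcolor_ldlt2pldl}: choose $l=2m$ with $m$ large relative to $\card{G}$ and take the $m$-periodic $r'$-coloring, which is $m$-bounded and satisfies $\alt_{r'}$; since $(w,2m)\nmodels\psi$, item~\ref{lemma_alternatingcolor_ldlt2pldl} forces $w'\nmodels c_{r'}(\psi)$, hence $w'\nmodels\rel_{r'}(\psi)$, i.e.\ $w'\models\overline{c}_{r'}(\psi)$. This coloring is also $m$-spaced, so your pigeonhole argument for pumpability goes through unchanged. With this correction the rest of your argument is sound.
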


From here on the algorithm is similar to that described in Section~\ref{sec:asynchronous_distributed_synthesis} and we obtain the same semi-decidability result.

\begin{corollary}
  Let $\arch^*$ be an asynchronous architecture.
  The asynchronous PLDL assume-guarantee realizability problem for $\arch^*$ is semi-decidable.
\end{corollary}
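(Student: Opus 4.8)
The plan is to reproduce, essentially verbatim, the semi-decision procedure built for the asynchronous $\prompt$ assume-guarantee problem in Section~\ref{sec:asynchronous_distributed_synthesis}, since every logic-specific ingredient it relies on has just been re-established for $\pldl$. First I would use the preceding lemma to assume without loss of generality that neither $\varphi$ nor $\psi$ contains a parameterized box operator, so that the alternating color technique, the relativization $c_r(\cdot)$, and its complementary counterpart $\overline{c}_{r'}(\cdot)$ all apply to them unchanged.

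The heart of the argument is the model-checking automaton. Its existence rests on two facts already available: the exponential compilation property for $\ldlt$ supplies a nondeterministic B\"uchi automaton $\nbw_{\overline{c}_{r'}(\psi) \land c_r(\varphi)}$ recognizing exactly the paths satisfying $\overline{c}_{r'}(\psi) \land c_r(\varphi)$, and the $\pldl$ instance of Lemma~\ref{lemma_pumpabilitycharac} reduces failure of $\tuple{\varphi,\psi}$ to pumpable non-emptiness of the product of $\tsys$ with that automaton. From here the remaining steps are entirely logic-independent and mirror Subsection~\ref{sec:semi-algorith-ag} exactly: take the product with the pumpability automaton $\nbw_\text{pump}$ of Lemma~\ref{thm:buchi_pumpable}, complement the resulting B\"uchi automaton into a universal co-B\"uchi automaton $\ucw$, and span it over the directions $2^{I \cup \set{r,r'}}$ to obtain a state-aware universal co-B\"uchi tree automaton $\uct$ that accepts a transition system $\tsys$ if, and only if, $\tsys$ satisfies $\tuple{\varphi,\psi}$.

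Finally I would feed $\uct$ into the extended bounded synthesis of Section~\ref{sec:bounded-synthesis}. For any fixed family of bounds $(b_p)_{p \in \pminus}$, Theorem~\ref{thm:asynchronous-bounded-synthesis} turns $\uct$ together with the architectural constraints of $\arch^*$ into a constraint system in a decidable first-order theory that is satisfiable precisely when there exist local implementations $\tsys_p$ of the prescribed sizes whose distributed product satisfies $\tuple{\varphi,\psi}$. Exhaustively enumerating the bounds and solving the associated constraint systems then yields the desired semi-decision procedure.

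The construction is pure assembly, so the real work is checking that each component of the $\prompt$ pipeline genuinely transfers. The two load-bearing points are (i) that the exponential compilation property for $\ldlt$ produces the automaton $\nbw_{\overline{c}_{r'}(\psi) \land c_r(\varphi)}$, and (ii) that the pumpability characterization of Lemma~\ref{lemma_pumpabilitycharac} survives the move from $\prompt$ to $\pldl$; both are supplied above---the former via \cite{journals/iandc/FaymonvilleZ17}, the latter as the immediately preceding lemma---while the pumpability automaton and the bounded-synthesis encoding operate on B\"uchi graphs and transition systems and are oblivious to the underlying logic. With these in hand the corollary follows.
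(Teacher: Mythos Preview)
Your proposal is correct and mirrors the paper's own argument precisely: the paper simply remarks that, once the parameterized boxes are eliminated and the $\pldl$ version of Lemma~\ref{lemma_pumpabilitycharac} is in place, ``the algorithm is similar to that described in Section~\ref{sec:asynchronous_distributed_synthesis},'' which is exactly the pipeline you spell out. If anything, you give more detail than the paper does.
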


\bgroup
\renewcommand{\phi}{\varphi}

%===============================================================================
\section{Conclusion}\label{sec:conc}
%===============================================================================

In this paper, we have initiated the investigation of distributed synthesis for parameterized specifications, in particular for $\prompt$, PLTL, and PLDL. These logics subsume $\ltl$, and additionally allow to express bounded satisfaction of system properties, instead of only eventual satisfaction. To the best of our knowledge, this is the first treatment of parametrized temporal logic specifications in distributed synthesis.

We have shown that for the case of synchronous distributed systems, we can reduce the $\prompt$ synthesis problem to an $\ltl$ synthesis problem. Thus, the complexity of $\prompt$ synthesis corresponds to the complexity of $\ltl$ synthesis, and the $\prompt$ realizability problem is decidable if, and only if,  the $\ltl$ realizability problem is decidable. For the case of asynchronous distributed systems with multiple components, the $\prompt$ realizability problem is undecidable, again corresponding to the result for $\ltl$. For this case, we give a semi-decision procedure based on a novel method for checking emptiness of two-colored B\"uchi~graphs. Finally, we have shown that all these results also hold for $\pltl$ and $\pldl$. Furthermore, the approach is also applicable to $\pltl$ and $\pldl$ in a weighted setting~\cite{Zimmermann2016}, as even these logics have the exponential compilation property and the alternating color technique is applicable to them as well. Finally, we conjecture that the approach also extends to assume-guarantee synthesis with mutual assumptions between different processes~\cite{ChatterjeeH07,BloemCJK15}.

Among the problems that remain open is realizability of $\prompt$ specifications in asynchronous distributed systems with a single component. This problem can be reduced to the (single-process) assume-guarantee realizability problem for $\prompt$, which was left open in~\cite{journals/fmsd/KupfermanPV09}.

In the future, we also want to look into the synthesis of distributed systems with a parametric number of components~\cite{JacobsB14,KhalimovJB13} from parameterized temporal logics. In addition to the even more difficult problems of realizability and synthesis, new questions arise in this context, such as: how does the bound on prompt eventualities increase with the number of components in the system?

\egroup

%===============================================================================
\bibliographystyle{elsarticle-num}
\bibliography{main}
%===============================================================================

\end{document}